\newtheorem{proposition}{Proposition}
\newtheorem{lemma}[proposition]{Lemma}
\newtheorem{theorem}[proposition]{Theorem}
\newtheorem{corollary}[proposition]{Corollary}
\def\squareforqed{\hbox{\rlap{$\sqcap$}$\sqcup$}}
\def\qed{\ifmmode\squareforqed\else{\unskip\nobreak\hfil
\penalty50\hskip1em\null\nobreak\hfil\squareforqed
\parfillskip=0pt\finalhyphendemerits=0\endgraf}\fi}
\def\endenv{\ifmmode\;\else{\unskip\nobreak\hfil
\penalty50\hskip1em\null\nobreak\hfil\;
\parfillskip=0pt\finalhyphendemerits=0\endgraf}\fi}
\newenvironment{proof}{\noindent \textbf{{Proof~} }}{\hfill $\blacksquare$}
\newcounter{remark}
\newcounter{example}
\mathchardef\ordinarycolon\mathcode`\:
\def\vcentcolon{\mathrel{\mathop\ordinarycolon}}
\definecolor{darkblue}{RGB}{0,76,156}
\definecolor{darkkblue}{RGB}{0,0,153}
\definecolor{blue2}{RGB}{102,178,255}
\definecolor{darkred}{RGB}{195,0,0}
\newmdenv[skipabove=7pt,
skipbelow=7pt,
backgroundcolor=darkblue!15,
innerleftmargin=5pt,
innerrightmargin=5pt,
innertopmargin=5pt,
leftmargin=0cm,
rightmargin=0cm,
innerbottommargin=5pt,
linewidth=1pt]{tBox}
\newmdenv[skipabove=7pt,
skipbelow=7pt,
backgroundcolor=blue2!25,
innerleftmargin=5pt,
innerrightmargin=5pt,
innertopmargin=5pt,
leftmargin=0cm,
rightmargin=0cm,
innerbottommargin=5pt,
linewidth=1pt]{dBox}
\newmdenv[skipabove=7pt,
skipbelow=7pt,
backgroundcolor=darkred!15,
innerleftmargin=5pt,
innerrightmargin=5pt,
innertopmargin=5pt,
leftmargin=0cm,
rightmargin=0cm,
innerbottommargin=5pt,
linewidth=1pt]{rBox}
\newcommand{\nc}{\newcommand}
\nc{\bra}[1]{\langle#1|}
\nc{\ket}[1]{|#1\rangle}
\nc{\ketbra}[2]{|#1\rangle\!\langle#2|}
\nc{\ketbrasame}[1]{|#1\rangle\!\langle#1|}
\nc{\braket}[2]{\langle#1|#2\rangle}
\nc{\braketsame}[1]{\langle#1|#1\rangle}
\newcommand{\braoprket}[3]{\langle #1|#2|#3\rangle}
\nc{\proj}[1]{| #1\rangle\!\langle #1 |}
\nc{\avg}[1]{\langle#1\rangle}
\nc{\rank}{\operatorname{rank}}
\nc{\smfrac}[2]{\mbox{$\frac{#1}{#2}$}}
\nc{\tr}{\operatorname{tr}}
\nc{\ox}{\otimes}
\nc{\dg}{\dagger}
\nc{\dn}{\downarrow}
\nc{\cA}{{\cal A}}
\nc{\cB}{{\cal B}}
\nc{\cC}{{\cal C}}
\nc{\cD}{{\cal D}}
\nc{\cE}{{\cal E}}
\nc{\cF}{{\cal F}}
\nc{\cG}{{\cal G}}
\nc{\cH}{{\cal H}}
\nc{\cI}{{\cal I}}
\nc{\cJ}{{\cal J}}
\nc{\cK}{{\cal K}}
\nc{\cL}{{\cal L}}
\nc{\cM}{{\cal M}}
\nc{\cN}{{\cal N}}
\nc{\cO}{{\cal O}}
\nc{\cP}{{\cal P}}
\nc{\cQ}{{\cal Q}}
\nc{\cR}{{\cal R}}
\nc{\cS}{{\cal S}}
\nc{\cT}{{\cal T}}
\nc{\cU}{{\cal U}}
\nc{\cV}{{\cal V}}
\nc{\cX}{{\cal X}}
\nc{\cY}{{\cal Y}}
\nc{\cZ}{{\cal Z}}
\nc{\cW}{{\cal W}}
\nc{\csupp}{{\operatorname{csupp}}}
\nc{\qsupp}{{\operatorname{qsupp}}}
\nc{\var}{{\operatorname{Var}}}
\nc{\rar}{\rightarrow}
\nc{\lrar}{\longrightarrow}
\nc{\polylog}{{\operatorname{polylog}}}
\nc{\wt}{{\operatorname{wt}}}
\nc{\supp}{{\operatorname{supp}}}
\nc{\argmin}{{\operatorname{argmin}}}
\nc{\pd}[2]{\frac{\partial #1}{\partial #2}}
\nc{\id}{{\operatorname{id}}}
\nc{\CHSH}{{\operatorname{CHSH}}}
\newcommand{\opnm}{\operatorname}
\nc{\rU}{\mbox{U}}
\nc{\ob}[1]{#1}
\nc{\Var}[1]{\Op{Var}[#1]}
\nc{\SEP}{{\text{\rm SEP}}}
\nc{\NS}{{\text{\rm NS}}}
\nc{\LOCC}{{\text{\rm LOCC}}}
\nc{\PPT}{{\text{\rm PPT}}}
\nc{\EXT}{{\text{\rm EXT}}}
\nc{\Sym}{{\operatorname{Sym}}}
\nc{\ERLO}{{E_{\text{r,LO}}}}
\nc{\ERLOCC}{{E_{\text{r,LOCC}}}}
\nc{\ERPPT}{{E_{\text{r,PPT}}}}
\nc{\ERLOCCinfty}{{E^{\infty}_{\text{r,LOCC}}}}
\nc{\Aram}{{\operatorname{\sf A}}}
\def\grd@save@target#1{%
  \def\grd@target{#1}}
\def\grd@save@start#1{%
  \def\grd@start{#1}}
\tikzset{
  grid with coordinates/.style={
    to path={%
      \pgfextra{%
        \edef\grd@@target{(\tikztotarget)}%
        \tikz@scan@one@point\grd@save@target\grd@@target\relax
        \edef\grd@@start{(\tikztostart)}%
        \tikz@scan@one@point\grd@save@start\grd@@start\relax
        \draw[minor help lines,magenta] (\tikztostart) grid (\tikztotarget);
        \draw[major help lines] (\tikztostart) grid (\tikztotarget);
        \grd@start
        \pgfmathsetmacro{\grd@xa}{\the\pgf@x/1cm}
        \pgfmathsetmacro{\grd@ya}{\the\pgf@y/1cm}
        \grd@target
        \pgfmathsetmacro{\grd@xb}{\the\pgf@x/1cm}
        \pgfmathsetmacro{\grd@yb}{\the\pgf@y/1cm}
        \pgfmathsetmacro{\grd@xc}{\grd@xa + \pgfkeysvalueof{/tikz/grid with coordinates/major step}}
        \pgfmathsetmacro{\grd@yc}{\grd@ya + \pgfkeysvalueof{/tikz/grid with coordinates/major step}}
        \foreach \x in {\grd@xa,\grd@xc,...,\grd@xb}
        \node[anchor=north] at (\x,\grd@ya) {\pgfmathprintnumber{\x}};
        \foreach \y in {\grd@ya,\grd@yc,...,\grd@yb}
        \node[anchor=east] at (\grd@xa,\y) {\pgfmathprintnumber{\y}};
      }
    }
  },
  minor help lines/.style={
    help lines,
    step=\pgfkeysvalueof{/tikz/grid with coordinates/minor step}
  },
  major help lines/.style={
    help lines,
    line width=\pgfkeysvalueof{/tikz/grid with coordinates/major line width},
    step=\pgfkeysvalueof{/tikz/grid with coordinates/major step}
  },
  grid with coordinates/.cd,
  minor step/.initial=.2,
  major step/.initial=1,
  major line width/.initial=2pt,
}
\def\problem@s{}
\newcounter{problems@cnt}
\newcommand{\allproblems}{\problem@s}
\title{Statistical Analysis of Quantum State Learning \\Process in Quantum Neural Networks}
\author{%
Hao-kai Zhang$^{1,3}$, 
Chenghong Zhu$^{2,3}$, 
Mingrui Jing$^{2,3}$,
Xin Wang$^{2,3}$\thanks{felixxinwang@hkust-gz.edu.cn} 
  \\
  $^1$ Institute for Advanced Study, Tsinghua University, Beijing 100084, China  \\
  $^2$ Thrust of Artificial Intelligence, Information Hub, \\Hong Kong University of Science and Technology (Guangzhou), China\\
  $^3$ Institute for Quantum Computing,   Baidu Research, Beijing, China\\
}
\begin{document}

\maketitle

\begin{abstract}
Quantum neural networks (QNNs) have been a promising framework in pursuing near-term quantum advantage in various fields, where many applications can be viewed as learning a quantum state that encodes useful data. As a quantum analog of probability distribution learning, quantum state learning is theoretically and practically essential in quantum machine learning. In this paper, we develop a no-go theorem for learning an unknown quantum state with QNNs even starting from a high-fidelity initial state. We prove that when the loss value is lower than a critical threshold, the probability of avoiding local minima vanishes exponentially with the qubit count, while only grows polynomially with the circuit depth. The curvature of local minima is concentrated to the quantum Fisher information times a loss-dependent constant, which characterizes the sensibility of the output state with respect to parameters in QNNs. These results hold for any circuit structures, initialization strategies, and work for both fixed ansatzes and adaptive methods. Extensive numerical simulations are performed to validate our theoretical results. Our findings place generic limits on good initial guesses and adaptive methods for improving the learnability and scalability of QNNs, and deepen the understanding of prior information's role in QNNs.
\end{abstract}

\section{Introduction}
Recent experimental progress towards realizing quantum information processors~\cite{ladd2010qip, monroe2013qip, devoret2013qip} has fostered the thriving development of the emerging field of quantum machine learning (QML)~\cite{Biamonte2017a,Li2021a,You2023,Tang2022,Liu2022a,Caro2022a,Cerezo2022a,Huang2022a,Qian2022,Yu2022a,Tian2023,Li2019c,Li2022,Jerbi2021,Li2021-classifier,Huang2021b}, pursuing quantum advantages in artificial intelligence. Different QML algorithms have been proposed for various topics, e.g., quantum simulations~\cite{georgescu2014simulation, yuan2019simulation, mcardle2019simulation,Wang2020a, endo2020simulation}, chemistry~\cite{Peruzzo2014, mcardle2020chemistry, Kandala2017hardwareefficient, Grimsley2019adaptive, tang2021chemistry}, quantum data compression~\cite{wang2020d,Cerezo2020a}, generative learning~\cite{lloyd2018GAN, Cao2021autoencoder} and reinforcement learning~\cite{chen2020rl}, where quantum neural networks (QNNs) become a leading framework due to the hardware restriction from noisy intermediate scale quantum (NISQ)~\cite{preskill2018NISQ} devices. As quantum analogs of artificial neural networks, QNNs typically refer to parameterized quantum circuits which are trainable based on quantum measurement results.

However, QNNs face a severe scalability barrier which might prevent the realization of potential quantum advantages. A notorious example is the barren plateau phenomenon~\cite{McClean2018} which shows that the gradient of the loss function vanishes exponentially in the system size with a high probability for randomly initialized deep QNNs, giving rise to an exponential training cost. To address this issue, a variety of training strategies has been proposed, such as local loss functions~\cite{Cerezo2021}, correlated parameters~\cite{Volkoff_2021}, structured architectures~\cite{Liu2022circuitarchi, wang2023symmetric,Liu2022mitigating}, good initial guesses~\cite{Egger2021warmstart, Jain2022warmstart}, initialization heuristics near the identity~\cite{Grant2019,Zhang2022,Ankit2022,Wang2023}, adaptive methods~\cite{Grimsley2019adaptive} and layerwise training~\cite{Skolik_2021}, etc. Nevertheless, there is a lack of scalability analyses for these strategies to guarantee their effectiveness. Especially, since the identity initialization and adaptive or layerwise training methods do not require uniformly random initialization of deep QNNs, they are out of the scope of barren plateaus and urgently need a comprehensive theoretical analysis to ascertain their performance under general conditions.

In this work, we analyze the learnability of QNNs from a statistical perspective considering the information of loss values. Specifically, given a certain loss value during the training process, we investigate the statistical properties of surrounding training landscapes. Here we mainly focus on quantum state learning tasks~\cite{Arrazola2019stateprepa, Kuzmin2020stateprepa}, which can be seen as a quantum analog of probability distribution learning and play a central role in QML. To summarize, our contributions include:
\begin{itemize}
    \item We prove a no-go theorem stating that during the process of learning an unknown quantum state with QNNs, the probability of avoiding local minima is of order $\mathcal{O}(N^2 2^{-N}D^2/\epsilon^2)$ as long as the loss value is lower than a critical threshold (cf. Fig.~\ref{fig:training_landscapes}). The bound vanishes exponentially in the qubit count $N$ while only increases polynomially with the circuit depth $D$. The curvature of local minima is concentrated to the quantum Fisher information times a loss-dependent constant. The proof is mainly based on the technique of ``subspace Haar integration'' we developed in Appendix~\ref{appendix:subspace_integration}. A generalized version for the local loss function is provided in Appendix~\ref{appendix:energy_type}.
    \item We conduct extensive numerical experiments to verify our theoretical findings. We first compare our bound with practical loss curves to show the prediction ability on the statistical behavior of the actual training process. Then we sample landscape profiles to visualize the existence of asymptotic local minima. Finally, we compute the gradients and diagonalize the Hessian matrices to directly verify the correctness of our bound.
    \item Our results place general limits on the learnability of QNNs, especially for the training strategies beyond the scope of barren plateaus, including high-fidelity initial guesses, initialization heuristics near the identity, and adaptive and layerwise training methods. Moreover, our results provide a theoretical basis for the necessity of introducing prior information into QNN designs and hence draw a guideline for future QNN developments.
\end{itemize}

\begin{figure}
    \centering
    \includegraphics[width=\linewidth]{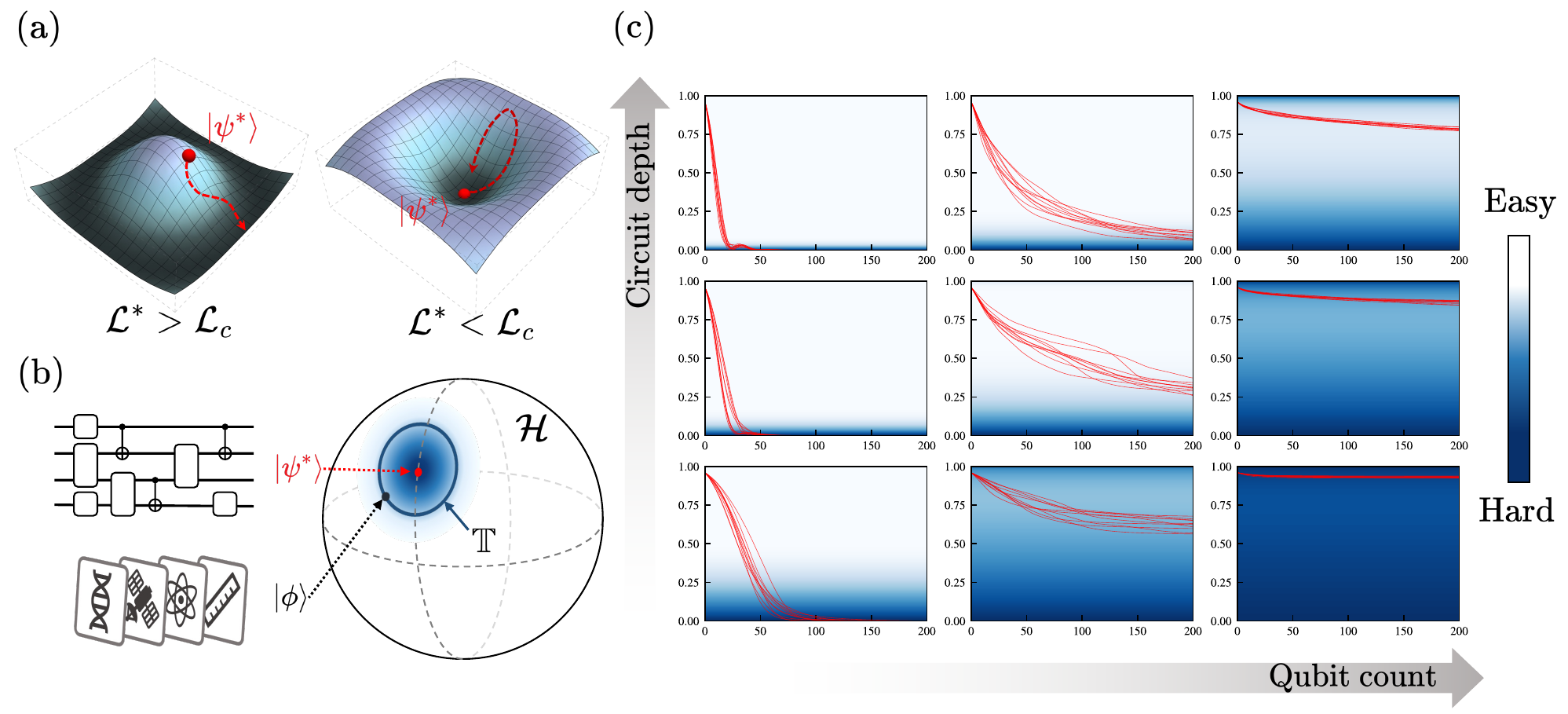}
    \caption{Sketches of our work characterizing the statistical performance of QNNs on quantum state learning tasks. (a) indicates the existence of a critical loss value $\mathcal{L}_c=1-2^{-N}$ below which the local minima start to become severe to trap the training process. (b) depicts the setup of quantum state learning tasks where a QNN is used to learn an unknown target state encoding practical data. All target states with a constant distance to the output state $\ket{\psi^*}$ form our ensemble $\mathbb{T}$, depicted by the contour on the Bloch sphere. (c) shows typical loss curves for different qubit counts of $N=2,6,10$ and circuit depths of $D=1,3,5$. The intensity of the background colors represents the magnitude of our theoretical bound on the probability of encountering a local minimum, which hence signifies the hardness of optimization. One can find that the theoretical bound appropriately reflects the hardness encountered by the practical loss curves.
    }
    \label{fig:training_landscapes}
\end{figure}

\subsection{Related works}
The barren plateau phenomenon was first discovered by~\cite{McClean2018}, which proves that the variance of the gradient vanishes exponentially with the system size if the randomly initialized QNN forms a unitary $2$-design. Thereafter, \cite{Cerezo2021} finds the dependence of barren plateaus on the circuit depth for loss functions with local observables. \cite{Bittel2021} proves that training QNNs is in general NP-hard. \cite{Holmes2020} introduces barren plateaus from uncertainty which precludes learning scramblers. \cite{Holmes2020,qi2023theoretical} establish connections among the expressibility, generalizability and trainability of QNNs. \cite{you2021minima} and \cite{anschuetz2021minima} show that apart from barren plateaus, QNNs also suffer from local minima in certain cases. 

On the other hand, many training strategies have been proposed to address barren plateaus. Here we only list a small part relevant to our work. \cite{Grant2019,Zhang2022,Ankit2022,Wang2023} suggest that initializing the QNN near the identity could reduce the randomness and hence escape from barren plateaus. \cite{Grimsley2019adaptive} and \cite{Skolik_2021} propose adaptive and layerwise training methods which avoid using randomly initialized QNNs in order to avoid barren plateaus, whereas \cite{Campos2020AbruptTI} finds counterexamples where the circuit training terminates close to the identity and remains near to the identity for subsequently added layers without effective progress.

\section{Quantum computing basics and notations}
We use $\|\cdot \|_p$ to denote the $l_p$-norm for vectors and the Schatten-$p$ norm for matrices. $A^\dagger$ is the conjugate transpose of matrix $A$. $\tr A$ represent the trace of $A$. The $\mu$-th component of the vector $\bm{\theta}$ is denoted as $\theta_\mu$ and the derivative with respect to $\theta_\mu$ is simply denoted as $\partial_\mu=\frac{\partial}{\partial\theta_\mu}$. We employ $\mathcal{O}$ as the asymptotic notation of upper bounds.

In quantum computing, the basic unit of quantum information is a quantum bit or qubit. A single-qubit pure state is described by a unit vector in the Hilbert space $\mathbb{C}^2$, which is commonly written in Dirac notation $\ket{\psi} = \alpha\ket{0} + \beta\ket{1}$, with $\ket{0} = (1,0)^T$, $\ket{1} =(0,1)^T$, $\alpha,\beta\in \mathbb{C}$ subject to $|\alpha|^2 + |\beta|^2 = 1$. The complex conjugate of $\ket{\psi}$ is denoted as $\bra{\psi} = \ket{\psi}^\dagger$. The Hilbert space of $N$ qubits is formed by the tensor product ``$\otimes$'' of $N$ single-qubit spaces with dimension $d=2^N$. We denote the inner product of two states $\ket{\phi}$ and $\ket{\psi}$ as $\braket{\phi}{\psi}$ and the overlap is defined as $|\braket{\phi}{\psi}|$. General mixed quantum states are represented by the density matrix, which is a positive semidefinite matrix $\rho \in \mathbb{C}^{d\times d}$ subject to $\tr\rho=1$. Quantum gates are unitary matrices, which transform quantum states via the matrix-vector multiplication. Common single-qubit rotation gates include $R_x(\theta)=e^{-i\theta X/2}$, $R_y(\theta)=e^{-i\theta Y/2}$, $R_z(\theta)=e^{-i\theta Z/2}$, which are in the matrix exponential form of Pauli matrices
\begin{equation}
    X = \begin{pmatrix}
        0 & 1 \\ 1 & 0
    \end{pmatrix},\quad\quad
    Y = \begin{pmatrix}
        0 & -i \\ i & 0 \\
    \end{pmatrix},\quad\quad
    Z = \begin{pmatrix}
        1 & 0 \\ 0 & -1 \\
    \end{pmatrix}.
\end{equation}
Common two-qubit gates include controlled-X gate $\opnm{CNOT}=I\oplus X$ ($\oplus$ is the direct sum) and controlled-Z gate $\opnm{CZ}=I\oplus Z$, which can generate quantum entanglement among qubits.

\subsection{Framework of Quantum Neural Networks}
Quantum neural networks (QNNs) typically refer to parameterized quantum circuits $\mathbf{U}(\bm{\theta})$ where the parameters $\bm{\theta}$ are trainable based on the feedback from quantum measurement results using a classical optimizer. By assigning some loss function $\mathcal{L}(\bm{\theta})$, QNNs can be used to accomplish various tasks just like artificial neural networks. A general form of QNNs reads $\mathbf{U}(\bm{\theta})=\prod_{\mu=1}^M U_\mu(\theta_\mu)W_\mu$, where $U_\mu(\theta_\mu)=e^{-i\Omega_\mu\theta_\mu}$ is a parameterized gate such as single-qubit rotations with $\Omega_\mu$ being a Hermitian generator. $W_\mu$ is a non-parameterized gate such as $\opnm{CNOT}$ and $\opnm{CZ}$. The product $\prod_{\mu=1}^M$ is by default in the increasing order from the right to the left. $M$ denotes the number of trainable parameters. Note that QNNs with intermediate classical controls~\cite{Cong2019} can also be included in this general form theoretically.
Commonly used templates of QNNs include the hardware efficient ansatz~\cite{Kandala2017hardwareefficient}, the alternating-layered ansatz (ALT)~\cite{Nakaji2021} and the tensor-network-based ansatz~\cite{Cerezo2021, ran2020encoding}, which are usually composed of repeated layers. The number of repeated layers is called the depth of the QNN, denoted as $D$. Fig.~\ref{fig:hea configuration} depicts an example of the ALT circuit. The gradients of loss functions of certain QNNs are evaluated by the parameter-shift rule~\cite{Li2017paramshift, Schuld2019paramshift, Mari2021paramshift} on real quantum devices. Hence we can train QNNs efficiently with gradient-based optimizers~\cite{ahmadianfar2020gradient}.

\begin{figure}
\centering
\begin{minipage}[c]{\textwidth}
\centering
$$\Qcircuit @C=.9em @R=.4em{
& & & & & & & & & & & & \quad\times D \\
&\qw &\gate{R_y(\theta_1)} &\gate{R_z(\theta_5)} &\qw &\ctrl{1} &\gate{R_y(\theta_9)} &\gate{R_z(\theta_{13})} &\qw & \qw &\qw & \qw & \qw & \qw \\ 
&\qw &\gate{R_y(\theta_2)} &\gate{R_z(\theta_6)} &\qw &\ctrl{-1} &\gate{R_y(\theta_{10})} &\gate{R_z(\theta_{14})} & \ctrl{1} &\gate{R_y(\theta_{17})} &\gate{R_z(\theta_{19})} & \qw & \qw & \qw \\ 
&\qw &\gate{R_y(\theta_3)} &\gate{R_z(\theta_7)} &\qw &\ctrl{1} &\gate{R_y(\theta_{11})} &\gate{R_z(\theta_{15})} &\ctrl{-1} &\gate{R_y(\theta_{18})} &\gate{R_z(\theta_{20})} &\qw & \qw & \qw\\ 
&\qw &\gate{R_y(\theta_4)} &\gate{R_z(\theta_8)} &\qw &\ctrl{-1} &\gate{R_y(\theta_{12})} &\gate{R_z(\theta_{16})} &\qw &\qw &\qw & \qw & \qw & \qw \gategroup{2}{6}{5}{12}{1.8em}{--}\\
}
$$
\end{minipage}
\caption{The quantum circuit of the alternating-layered ansatz on $4$ qubits. The circuit starts with a $R_y$ layer and a $R_x$ layer, followed by $D$ repeated layers, where each layer contains alternating $2$-qubit unit blocks of a $\opnm{CZ}$ gate, two $R_y$ gates and two $R_z$ gates.
}
\label{fig:hea configuration}
\end{figure}
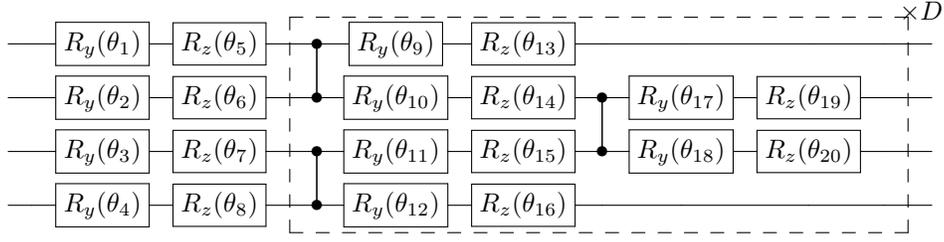

Here we focus on quantum state learning tasks, the objective of which is to learn a given target state $\ket{\phi}$ encoding practical data via minimizing the distance between the target state $\ket{\phi}$ and the output state from the QNN $\ket{\psi(\bm{\theta})} = U(\bm{\theta})\ket{0}^{\otimes N}$. The corresponding loss function is usually chosen as the fidelity distance
\begin{equation}
    \mathcal{L}(\bm{\theta}) = 1 - \left|\braket{\phi}{\psi(\bm{\theta})}\right|^2.
\end{equation}
which can be efficiently calculated on quantum computers using the swap test~\cite{Buhrman2001swaptest}. An important quantity we used below characterizing the sensitivity of the QNN output state $\ket{\psi(\bm{\theta})}$ regarding its parameters $\bm{\theta}$, the quantum Fisher information (QFI) matrix $\mathcal{F}_{\mu\nu}$~\cite{Meyer2021fisher}, which is defined as the Riemannian metric induced from the fidelity distance (cf. Appendix~\ref{appendix:qfim})
\begin{equation}\label{Eq:qfim}
    \mathcal{F}_{\mu\nu}(\bm{\theta}) = 2\opnm{Re}\left[\braket{\partial_\mu\psi}{\partial_\nu\psi} - \braket{\partial_\mu\psi}{\psi}\braket{\psi}{\partial_\nu\psi}\right].
\end{equation}
If the QFI is not full-rank, we say $\ket{\psi(\bm{\theta})}$ is over-parameterized~\cite{larocca2021overparam, garciamartin2023overparam} meaning that there are redundant degrees of freedom in the parameters over the manifold dimension of the state.

\section{Statistical characterization of quantum state learning in QNNs}
In this section, we develop a no-go theorem characterizing the limitation of quantum neural networks in state learning tasks from a statistical perspective. In short, we prove that the probability of avoiding local minima during the process of learning an unknown quantum state with a QNN is of order $\mathcal{O}(2^{-N}M^2/\epsilon^2)$, where $N$ is the number of qubits, $M$ is the number of trainable parameters and $\epsilon$ represents the typical precision of measurements. The detailed upper bound also depends on the overlap between the value of the loss function and the QFI of the QNN. Our bounds significantly improve existing results of the trainability analysis of QNNs, which mainly focus on the randomness from the initialization and neglect the information of the loss function value. We will first introduce our ensemble setting in Section~\ref{section:ensemble}, then present our main theorem on local minima in Section~\ref{section:main_theorem} and finally show some results beyond local minima in Section~\ref{section:beyond_vicinity}.

\subsection{Ensemble of the unknown target state}\label{section:ensemble}
We first introduce the probability measure used in this work. The randomness studied by most of the previous work on the trainability analyzes of QNNs originates from the random initialization of trainable parameters~\cite{McClean2018}, which usually depends on the circuit depth, specific choices of the QNN architecture and initialization strategies~\cite{Zhang2022}. Meanwhile, the randomness can also come from the lack of prior information, such as learning an unknown quantum state or an unknown scrambler like a black hole~\cite{Holmes2021}. We focus on the latter in the present work.

The usage of adaptive methods is usually not covered by common trainability analyses, however, the training process often tends to stagnate. With the aim of investigating the trainability at a specific loss function value, the ensemble is constructed by a uniform measure of overall pure states that have the same overlap with the current output state of the QNN. Specifically, suppose $\bm{\theta}^*$ is the current value of the trainable parameters. The overlap, or fidelity, between the output state $\ket{\psi^*}=\ket{\psi(\bm{\theta}^*)}$ and the target state $\ket{\phi}$ equals to $\left|\braket{\phi}{\psi^*}\right|=p$.
Thus, the target state can be decomposed as
\begin{equation}\label{Eq:target_decompose}
    \ket{\phi}=p\ket{\psi^*}+\sqrt{1-p^2}\ket{\psi^\perp},
\end{equation}
where $\ket{\psi^\perp}$ represents the unknown component in the target state $\ket{\phi}$ orthogonal to the learnt component $\ket{\psi^*}$. If no more prior information is known about the target state $\ket{\phi}$ except for the overlap $p$, in the spirit of Bayesian statistics, $\ket{\psi^\perp}$ is supposed to be a random state uniformly distributed in the orthogonal complement of $\ket{\psi^*}$, denoted as $\mathcal{H}^{\perp}$. Such a Haar-random state can induce an ensemble of the unknown target state via Eq.~\eqref{Eq:target_decompose}, which we denote as $\mathbb{T}=\{\ket{\phi}\mid\ket{\psi^\perp}\text{ is Haar-random in }\mathcal{H}^{\perp}\}$. Graphically, $\mathbb{T}$ can be understood as a contour on the Bloch sphere of an $N$-qubit system
with a constant distance to $\ket{\psi^*}$ as shown in Fig.~\ref{fig:training_landscapes}(b). We remark that $\bm{\theta}^*$ can be interpreted as either an initial guess or an intermediate value during the training process so that our following results can be applied to the entire process of learning a quantum state. See Appendix~\ref{appendix:subspace_integration} for more details on the ensemble setting.

\subsection{Exponentially likely local minima}\label{section:main_theorem}
We now investigate the statistical properties of the gradient $\nabla \mathcal{L}$ and the Hessian matrix $H_{\mathcal{L}}$ of the loss function $\mathcal{L}(\bm{\theta}^*)$ at the parameter point $\bm{\theta}=\bm{\theta}^*$ regarding the ensemble $\mathbb{T}$, and hence derive an upper bound of the probability that $\bm{\theta}^*$ is not a local minimum. For simplicity of notation, we represent the value of a certain function at $\bm{\theta}=\bm{\theta}^*$ by appending the superscript ``$*$'', e.g., $\left.\nabla \mathcal{L}\right\vert_{\bm{\theta}=\bm{\theta}^*}$ as $\nabla \mathcal{L}^*$ and $\left.H_\mathcal{L}\right\vert_{\bm{\theta}=\bm{\theta}^*}$ as $H_\mathcal{L}^*$. We define that $\bm{\theta}^*$ is a local minimum up to a fixed precision $\epsilon=(\epsilon_1, \epsilon_2)$ if and only if each of the gradient components is not larger than $\epsilon_1$ and the minimal eigenvalue of the Hessian matrix is not smaller than $-\epsilon_2$, i.e.,
\begin{equation}\label{Eq:local_min_def}
    \opnm{LocalMin}(\bm{\theta}^*,\epsilon) = \bigcap_{\mu=1}^M \left\{|\partial_\mu \mathcal{L}^*|\leq\epsilon_1\right\}~\cap~\left\{H_\mathcal{L}^*\succ -\epsilon_2 I\right\}.
\end{equation}
If $\epsilon_1$ and $\epsilon_2$ both take zero, Eq.~\eqref{Eq:local_min_def} is reduced back to the common exact definition of the local minimum. However, noises and uncertainties from measurements on real quantum devices give rise to a non-zero $\epsilon$, where the estimation cost scales as $\mathcal{O}(1/\epsilon^\alpha)$ for some power $\alpha$~\cite{Knill2006}. Specially, if $\ket{\psi(\bm{\theta}^*)}$ approaches the true target state $\ket{\phi}$ such that $\mathcal{L}^*\rightarrow 0$, we say $\bm{\theta}^*$ is a global minimum. That is to say, here ``local minima'' are claimed with respect to the entire Hilbert space instead of training landscapes created by different ansatzes. The expectation and variance of the first and second-order derivatives of the loss function are calculated and summarized in Lemma~\ref{lemma:fidelity_loss_exp_var_grad_hessian}, with the detailed proof in Appendix~\ref{appendix:detailed_proofs} utilizing the technique we dubbed ``subspace Haar integration'' in Appendix~\ref{appendix:subspace_integration}. 

\begin{lemma}\label{lemma:fidelity_loss_exp_var_grad_hessian}
The expectation and variance of the gradient $\nabla\mathcal{L}$ and Hessian matrix $H_\mathcal{L}$ of the fidelity loss function $\mathcal{L}(\bm{\theta})=1-|\braket{\phi}{\psi(\bm{\theta})}|^2$ at $\bm{\theta}=\bm{\theta}^*$ with respect to the target state ensemble $\mathbb{T}$ satisfy
\begin{align}
    &\mathbb{E}_{\mathbb{T}} \left[ \nabla \mathcal{L}^* \right] = 0,\quad \var_{\mathbb{T}}[\partial_\mu \mathcal{L}^*] = f_1(p,d) \mathcal{F}_{\mu\mu}^*.\label{Eq:fidelity_loss_gradient} \\
    &\mathbb{E}_{\mathbb{T}} \left[ H_\mathcal{L}^* \right] = \frac{d p^2-1}{d-1} \mathcal{F}^*, \quad \var_{\mathbb{T}}[\partial_\mu\partial_\nu \mathcal{L}^*] \leq f_2(p,d) \|\Omega_\mu\|_\infty^2 \|\Omega_\nu\|_\infty^2.\label{Eq:fidelity_loss_hessian}
\end{align}
where $\mathcal{F}$ denote the QFI matrix in Eq.~\eqref{Eq:qfim} and $\Omega_\mu$ is the generator of the gate $U_\mu(\theta_\mu)$. $f_1$ and $f_2$ are functions of the overlap $p$ and the Hilbert space dimension $d$, i.e.,
\begin{equation}\label{Eq:f1f2}
    f_1(p,d) = \frac{p^2(1-p^2)}{d-1},\quad f_2(p,d) = \frac{32(1-p^2)}{d-1}\left[p^2 + \frac{2(1-p^2)}{d}\right].
\end{equation}
\end{lemma}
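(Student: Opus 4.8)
The plan is to funnel every quantity into a single Haar average over the random unit vector $\ket{\psi^\perp}\in\mathcal{H}^\perp$ and evaluate it with the moment formulas for a Haar-random state in the $(d-1)$-dimensional subspace $\mathcal{H}^\perp$ (the ``subspace Haar integration'' of the appendix). First I would record the exact derivatives of the fidelity loss: with $\braketsame{\psi}=1$ one has $\partial_\mu\mathcal{L}=-2\opnm{Re}[\braket{\psi}{\phi}\braket{\phi}{\partial_\mu\psi}]$ and $\partial_\mu\partial_\nu\mathcal{L}=-2\opnm{Re}[\braket{\partial_\nu\psi}{\phi}\braket{\phi}{\partial_\mu\psi}+\braket{\psi}{\phi}\braket{\phi}{\partial_\mu\partial_\nu\psi}]$. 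Substituting $\ket{\phi}=p\ket{\psi^*}+\sqrt{1-p^2}\ket{\psi^\perp}$ together with $\braket{\psi^*}{\psi^\perp}=0$ and $\braket{\psi^*}{\phi}=p$ confines all randomness to the scalars $b_\mu=\braket{\psi^\perp}{\partial_\mu\psi^*}$ and $d_{\mu\nu}=\braket{\psi^\perp}{\partial_\mu\partial_\nu\psi^*}$; the deterministic ingredients are $a_\mu=\braket{\psi^*}{\partial_\mu\psi^*}$ (purely imaginary since $\braketsame{\psi}=1$) and $c_{\mu\nu}=\braket{\psi^*}{\partial_\mu\partial_\nu\psi^*}$. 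The three integration facts I will invoke, with $P_\perp$ the projector onto $\mathcal{H}^\perp$ and $\ket{x},\ket{y},\ket{z},\ket{w}\in\mathcal{H}^\perp$, are $\mathbb{E}[\braket{\psi^\perp}{x}]=0$ (phase invariance), $\mathbb{E}[\braket{x}{\psi^\perp}\braket{\psi^\perp}{y}]=\braket{x}{y}/(d-1)$, and the fourth-moment identity $\mathbb{E}[\braket{x}{\psi^\perp}\braket{\psi^\perp}{y}\braket{z}{\psi^\perp}\braket{\psi^\perp}{w}]=(\braket{x}{y}\braket{z}{w}+\braket{x}{w}\braket{z}{y})/[(d-1)d]$. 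Throughout I write $\ket{\partial_\mu\psi^*_\perp}=P_\perp\ket{\partial_\mu\psi^*}$, so that $\mathcal{F}_{\mu\nu}^*=2\opnm{Re}\braket{\partial_\mu\psi^*_\perp}{\partial_\nu\psi^*_\perp}$ and in particular $\mathcal{F}_{\mu\mu}^*=2\braketsame{\partial_\mu\psi^*_\perp}$.

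The gradient statistics and the Hessian mean then follow immediately. Since $a_\mu$ is imaginary, $\partial_\mu\mathcal{L}^*=-2p\sqrt{1-p^2}\,\opnm{Re}[b_\mu]$ exactly, so the first-moment fact gives $\mathbb{E}_{\mathbb{T}}[\partial_\mu\mathcal{L}^*]=0$, and phase invariance ($\mathbb{E}[b_\mu^2]=0$) reduces the variance to $\var_{\mathbb{T}}[\partial_\mu\mathcal{L}^*]=4p^2(1-p^2)\cdot\tfrac12\mathbb{E}[|b_\mu|^2]=4p^2(1-p^2)\cdot\tfrac{\braketsame{\partial_\mu\psi^*_\perp}}{2(d-1)}=\tfrac{p^2(1-p^2)}{d-1}\mathcal{F}_{\mu\mu}^*$, which is $f_1(p,d)\mathcal{F}_{\mu\mu}^*$. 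For the Hessian mean, the second-moment fact turns the unique quadratic term $(1-p^2)\bar b_\nu b_\mu$ into $(1-p^2)\braket{\partial_\nu\psi^*_\perp}{\partial_\mu\psi^*_\perp}/(d-1)$, whose real part is $\tfrac{1-p^2}{2(d-1)}\mathcal{F}_{\mu\nu}^*$; the surviving deterministic piece is $-2\opnm{Re}[p^2\bar a_\nu a_\mu+p^2 c_{\mu\nu}]$, and differentiating $\braketsame{\psi}=1$ twice yields $\opnm{Re}\,c_{\mu\nu}=-\opnm{Re}[\bar a_\mu a_\nu]-\tfrac12\mathcal{F}_{\mu\nu}^*$, so the $a$-dependence cancels and collecting terms gives exactly $\mathbb{E}_{\mathbb{T}}[\partial_\mu\partial_\nu\mathcal{L}^*]=\tfrac{dp^2-1}{d-1}\mathcal{F}_{\mu\nu}^*$.

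The substantial work is the Hessian variance, and the enabling observation is that each state derivative is generated by a Hermitian operator: $\ket{\partial_\mu\psi^*}=-iG_\mu\ket{\psi^*}$ with $G_\mu=U_{>\mu}\Omega_\mu U_{>\mu}^\dagger$ ($U_{>\mu}$ the sub-circuit after gate $\mu$), whence $\|G_\mu\|_\infty=\|\Omega_\mu\|_\infty$, and differentiating once more the commutator term $\partial_\nu G_\mu=i[G_\mu,G_\nu]$ cancels to leave $\ket{\partial_\mu\partial_\nu\psi^*}=-G_aG_b\ket{\psi^*}$ ordered by circuit position. These give the uniform bounds $|a_\mu|\le\|\Omega_\mu\|_\infty$, $\braketsame{\partial_\mu\psi^*_\perp}\le\|\Omega_\mu\|_\infty^2$, and $\|\ket{\partial_\mu\partial_\nu\psi^*}\|\le\|\Omega_\mu\|_\infty\|\Omega_\nu\|_\infty$, which convert all Gram factors into the operator norms of $f_2$. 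I would then split $\partial_\mu\partial_\nu\mathcal{L}^*=X_0+X_1+X_2$ by degree in $\ket{\psi^\perp}$, where $X_0$ is deterministic, $X_1=-2p\sqrt{1-p^2}\,\opnm{Re}[\bar a_\nu b_\mu+a_\mu\bar b_\nu+d_{\mu\nu}]$ is linear, and $X_2=-2(1-p^2)\opnm{Re}[\bar b_\nu b_\mu]$ is quadratic. Because $X_1X_2$ has odd total degree, phase invariance kills its average, so $\var_{\mathbb{T}}[\partial_\mu\partial_\nu\mathcal{L}^*]=\mathbb{E}[X_1^2]+\var[X_2]$. The linear piece, evaluated with the second-moment fact, simplifies to $\mathbb{E}[X_1^2]=\tfrac{2p^2(1-p^2)}{d-1}\|v_\perp+w_\perp\|^2$ with $\ket{v_\perp}=\bar a_\nu\ket{\partial_\mu\psi^*_\perp}+P_\perp\ket{\partial_\mu\partial_\nu\psi^*}$ and $\ket{w_\perp}=\bar a_\mu\ket{\partial_\nu\psi^*_\perp}$, yielding the $\tfrac{p^2(1-p^2)}{d-1}$ term of $f_2$ after the norm bounds above; the quadratic piece requires the fourth-moment identity and produces $\var[X_2]=O\!\left(\tfrac{(1-p^2)^2}{(d-1)d}\,\mathcal{F}_{\mu\mu}^*\mathcal{F}_{\nu\nu}^*\right)$, yielding the $\tfrac{2(1-p^2)}{d}$ term. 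The main obstacle is precisely this fourth-moment step: it generates several Gram-type cross terms (for instance $\opnm{Re}[s^2]$ and $|s|^2$ with $s=\braket{\partial_\nu\psi^*_\perp}{\partial_\mu\psi^*_\perp}$), not all manifestly nonnegative, and the delicate point is to dominate them uniformly via Cauchy--Schwarz, $\mathcal{F}_{\mu\mu}^*\mathcal{F}_{\nu\nu}^*\le4\|\Omega_\mu\|_\infty^2\|\Omega_\nu\|_\infty^2$, so that the bookkeeping closes as a clean inequality with the stated constant $32$ and bracket $[p^2+2(1-p^2)/d]$.
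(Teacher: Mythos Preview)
Your proposal is correct and reaches the stated bounds, but it is organised differently from the paper's proof. The paper packages all second-derivative information into the density-matrix operator $D_{\mu\nu}=\partial_\mu\partial_\nu\varrho$, exploits its structural properties ($\tr D_{\mu\nu}=0$, $\rank D_{\mu\nu}^*\le 4$), and bounds the Hessian variance through $\|D_{\mu\nu}^*\|_\infty\le 4\|\Omega_\mu\|_\infty\|\Omega_\nu\|_\infty$ obtained from the double-commutator representation $D_{\mu\nu}=V_{>\nu}[V_{\mu\to\nu}[\,\varrho_{<\mu},i\Omega_\mu\,]V_{\mu\to\nu}^\dagger,i\Omega_\nu]V_{>\nu}^\dagger$. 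You instead stay at the state-vector level, decompose $\partial_\mu\partial_\nu\mathcal{L}^*=X_0+X_1+X_2$ by degree in $\ket{\psi^\perp}$, and evaluate $\mathbb{E}[X_1^2]$ and $\var[X_2]$ separately with the second- and fourth-moment identities for a Haar vector in $\mathcal{H}^\perp$; the norm control comes from $\|\ket{\partial_\mu\psi^*}\|\le\|\Omega_\mu\|_\infty$ and $\|\ket{\partial_\mu\partial_\nu\psi^*}\|\le\|\Omega_\mu\|_\infty\|\Omega_\nu\|_\infty$. Your grouping $\opnm{Re}[\bar a_\nu b_\mu+a_\mu\bar b_\nu+d_{\mu\nu}]=\opnm{Re}\braket{\psi^\perp}{v+w}$ is a clean way to collapse $\mathbb{E}[X_1^2]$ to a single norm, and the fourth-moment cross terms in $\var[X_2]$ do indeed succumb to Cauchy--Schwarz: one obtains $\mathbb{E}[X_1^2]\le\tfrac{18\,p^2(1-p^2)}{d-1}\|\Omega_\mu\|_\infty^2\|\Omega_\nu\|_\infty^2$ and $\var[X_2]\le\tfrac{8(1-p^2)^2}{d(d-1)}\|\Omega_\mu\|_\infty^2\|\Omega_\nu\|_\infty^2$, both comfortably inside the paper's constant $32$. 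The operator route of the paper generalises more readily to mixed output states (where $\varrho$ is no longer rank one), while your scalar route is lighter in machinery and in fact yields sharper numerical constants for the pure-state case.
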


The exponentially vanishing variances in Lemma~\ref{lemma:fidelity_loss_exp_var_grad_hessian} imply that the gradient and Hessian matrix concentrate to their expectations exponentially in the number of qubits $N$ due to $d=2^N$ for a $N$-qubit system. Thus the gradient concentrates to zero and the Hessian matrix concentrates to the QFI $\mathcal{F}^*$ times a non-vanishing coefficient proportional to $(p^2-1/d)$.
Since the QFI is always positive semidefinite, the expectation of the Hessian matrix is either positive semidefinite $\mathcal{L}^*=1-p^2<1-1/d$, or negative semidefinite if $\mathcal{L}^*>1-1/d$, as illustrated in Fig.~\ref{fig:training_landscapes}(a). The critical point $\mathcal{L}_c=1-1/d$ coincides with the average fidelity distance of two Haar-random pure states, which means that as long as $\ket{\psi^*}$ has a higher fidelity than the average level of all states, the expectation of the Hessian matrix would be positive semidefinite. 

Using Lemma~\ref{lemma:fidelity_loss_exp_var_grad_hessian}, we establish an exponentially small upper bound on the probability that $\bm{\theta}^*$ is not a local minimum in the following Theorem~\ref{theorem:local_min}, where the generator norm vector $\bm{\omega}$ is defined as $\bm{\omega}=(\|\Omega_1\|_\infty,\|\Omega_2\|_\infty,\ldots,\|\Omega_M\|_\infty)$.

\begin{theorem}\label{theorem:local_min}
If the fidelity loss function satisfies $\mathcal{L}(\bm{\theta}^*)< 1- 1/d$, the probability that $\bm{\theta}^*$ is not a local minimum of $\mathcal{L}$ up to a fixed precision $\epsilon=(\epsilon_1,\epsilon_2)$ with respect to the target state ensemble $\mathbb{T}$ is upper bounded by
\begin{equation}\label{Eq:fidelity_loss_local_minimum}
    \opnm{Pr}_{\mathbb{T}} \left[ \neg\opnm{LocalMin}(\bm{\theta}^*,\epsilon) \right] \leq \frac{2f_1(p,d)\|\bm{\omega}\|_2^2}{\epsilon_1^2} + \frac{f_2(p,d)\|\bm{\omega}\|_2^4}{\left(\frac{dp^2-1}{d-1}e^*+\epsilon_2\right)^2},
\end{equation}
where $e^*$ denotes the minimal eigenvalue of the QFI matrix at $\bm{\theta}=\bm{\theta}^*$. $f_1$ and $f_2$ are defined in Eq.~\eqref{Eq:f1f2} which vanish at least of order $1/d$.
\end{theorem}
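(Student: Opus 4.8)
The plan is to decompose the failure event $\neg\opnm{LocalMin}(\bm{\theta}^*,\epsilon)$ by a union bound into a gradient part and a Hessian part, and then to estimate each part with an elementary concentration inequality fed by the moment formulas of Lemma~\ref{lemma:fidelity_loss_exp_var_grad_hessian}. Since $\opnm{LocalMin}$ is by definition the intersection of the $M$ events $\{|\partial_\mu\mathcal{L}^*|\leq\epsilon_1\}$ with the event $\{H_\mathcal{L}^*\succ-\epsilon_2 I\}$, its complement is contained in $\bigcup_{\mu}\{|\partial_\mu\mathcal{L}^*|>\epsilon_1\}\cup\{\lambda_{\min}(H_\mathcal{L}^*)\leq-\epsilon_2\}$, so a single union bound reduces the task to bounding $\sum_\mu\opnm{Pr}_{\mathbb{T}}[|\partial_\mu\mathcal{L}^*|>\epsilon_1]$ and $\opnm{Pr}_{\mathbb{T}}[\lambda_{\min}(H_\mathcal{L}^*)\leq-\epsilon_2]$ separately.

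For the gradient part I would apply Chebyshev's inequality componentwise. Lemma~\ref{lemma:fidelity_loss_exp_var_grad_hessian} gives $\mathbb{E}_{\mathbb{T}}[\partial_\mu\mathcal{L}^*]=0$ and $\var_{\mathbb{T}}[\partial_\mu\mathcal{L}^*]=f_1(p,d)\mathcal{F}_{\mu\mu}^*$, so $\opnm{Pr}_{\mathbb{T}}[|\partial_\mu\mathcal{L}^*|>\epsilon_1]\leq f_1(p,d)\mathcal{F}_{\mu\mu}^*/\epsilon_1^2$, and summing over $\mu$ produces $f_1(p,d)\tr(\mathcal{F}^*)/\epsilon_1^2$. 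It then remains to bound the QFI trace by the generator norms: from the definition~\eqref{Eq:qfim} each diagonal entry satisfies $\mathcal{F}_{\mu\mu}^*=2(\braket{\partial_\mu\psi}{\partial_\mu\psi}-|\braket{\psi}{\partial_\mu\psi}|^2)\leq 2\braket{\partial_\mu\psi}{\partial_\mu\psi}$, and since differentiating $\ket{\psi(\bm{\theta})}=U(\bm{\theta})\ket{0}$ inserts $\Omega_\mu$ conjugated by unitaries, one has $\braket{\partial_\mu\psi}{\partial_\mu\psi}=\bra{\chi}\Omega_\mu^2\ket{\chi}\leq\|\Omega_\mu\|_\infty^2$ for a unit vector $\ket{\chi}$. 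Hence $\tr(\mathcal{F}^*)\leq 2\|\bm{\omega}\|_2^2$, yielding the first term $2f_1(p,d)\|\bm{\omega}\|_2^2/\epsilon_1^2$.

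For the Hessian part the key is to convert the spectral condition into a deviation of $H_\mathcal{L}^*$ from its mean. Because the hypothesis $\mathcal{L}^*<1-1/d$ forces $dp^2-1>0$, the expectation $\mathbb{E}_{\mathbb{T}}[H_\mathcal{L}^*]=\frac{dp^2-1}{d-1}\mathcal{F}^*$ is positive semidefinite with minimal eigenvalue $\frac{dp^2-1}{d-1}e^*$. Weyl's inequality then gives $\lambda_{\min}(H_\mathcal{L}^*)\geq\frac{dp^2-1}{d-1}e^*-\|H_\mathcal{L}^*-\mathbb{E}_{\mathbb{T}}[H_\mathcal{L}^*]\|_\infty$, so the event $\lambda_{\min}(H_\mathcal{L}^*)\leq-\epsilon_2$ forces the operator-norm deviation to be at least $\frac{dp^2-1}{d-1}e^*+\epsilon_2$. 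Bounding the operator norm by the Frobenius norm and applying Markov's inequality to $\|H_\mathcal{L}^*-\mathbb{E}_{\mathbb{T}}[H_\mathcal{L}^*]\|_2^2=\sum_{\mu\nu}(\partial_\mu\partial_\nu\mathcal{L}^*-\mathbb{E}_{\mathbb{T}}[\partial_\mu\partial_\nu\mathcal{L}^*])^2$ gives an upper bound $\sum_{\mu\nu}\var_{\mathbb{T}}[\partial_\mu\partial_\nu\mathcal{L}^*]/(\frac{dp^2-1}{d-1}e^*+\epsilon_2)^2$. Inserting the second-moment estimate $\var_{\mathbb{T}}[\partial_\mu\partial_\nu\mathcal{L}^*]\leq f_2(p,d)\|\Omega_\mu\|_\infty^2\|\Omega_\nu\|_\infty^2$ and factoring the double sum as $(\sum_\mu\|\Omega_\mu\|_\infty^2)^2=\|\bm{\omega}\|_2^4$ produces the second term.

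I expect the Hessian step to be the main obstacle, namely the passage from control of individual entry variances to control of the smallest eigenvalue of the entire random matrix. The route above deliberately trades sharpness for robustness, using $\|\cdot\|_\infty\leq\|\cdot\|_2$ together with Weyl's inequality to avoid any matrix-concentration machinery; it relies crucially on the sign $dp^2-1>0$ guaranteed by $\mathcal{L}^*<1-1/d$, which is exactly what keeps the denominator $\frac{dp^2-1}{d-1}e^*+\epsilon_2$ strictly positive so that the Markov step is meaningful. The remaining work, namely verifying the diagonal QFI bound and assembling the two contributions, is routine.
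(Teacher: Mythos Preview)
Your proposal is correct and follows essentially the same route as the paper's proof: the same union-bound decomposition, componentwise Chebyshev on the gradient with the bound $\mathcal{F}_{\mu\mu}^*\leq 2\|\Omega_\mu\|_\infty^2$, and for the Hessian the eigenvalue perturbation step (your Weyl inequality is equivalent to the paper's Lemma~\ref{lemma:perturb_positive_definite}) followed by the passage $\|\cdot\|_\infty\leq\|\cdot\|_2$ and Markov/Chebyshev on the squared Frobenius deviation, which is exactly the paper's matrix Chebyshev inequality combined with $\sigma_\infty^2\leq\sigma_2^2$. The only differences are in naming and in the order of the norm relaxation versus the tail inequality, neither of which affects the argument or the bound.
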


A sketch version of the proof is as follows, with the details in Appendix~\ref{appendix:detailed_proofs}. By definition in Eq.~\eqref{Eq:local_min_def}, the left-hand side of Eq.~\eqref{Eq:fidelity_loss_local_minimum} can be upper bounded by the sum of two terms: the probability that one gradient component is larger than $\epsilon_1$, and the probability that the Hessian matrix is not positive definite up to $\epsilon_2$. The first term can be bounded by Lemma~\ref{lemma:fidelity_loss_exp_var_grad_hessian} and Chebyshev's inequality, i.e.,
\begin{equation}\label{Eq:fidelity_loss_pr_grad}
    \opnm{Pr}_{\mathbb{T}} \left[ \bigcup_{\mu=1}^M \{|\partial_\mu \mathcal{L}^*| > \epsilon_1\} \right] \leq \sum_{\mu=1}^{M} \frac{\var_{\mathbb{T}}[\partial_\mu \mathcal{L}^*]}{\epsilon_1^2} = \frac{f_1(p,d)}{\epsilon_1^2} \tr\mathcal{F}^*,
\end{equation}
where the QFI diagonal element is bounded as $\mathcal{F}_{\mu\mu}\leq 2\|\Omega_{\mu}\|_\infty^2$ by definition and thus $\tr\mathcal{F}^*\leq 2\|\bm{\omega}\|_2^2$. After assuming $p^2> 1/d$, the second term can be upper bounded by perturbing $\mathbb{E}_\mathbb{T}[H_\mathcal{L}^*]$ to obtain a sufficient condition of positive definiteness (see Appendix~\ref{appendix:perturbation_pos}), and then utilizing Lemma~\ref{lemma:fidelity_loss_exp_var_grad_hessian} with the generalized Chebyshev's inequality for matrices (see Appendix~\ref{appendix:tail_inequalities}), i.e.,
\begin{equation}\label{Eq:fidelity_loss_pr_hessian}
    \opnm{Pr}_{\mathbb{T}} \left[ H_\mathcal{L}^*\nsucc -\epsilon_2 I \right] \leq \sum_{\mu,\nu=1}^M\frac{\var_\mathbb{T}\left[\partial_\mu\partial_\nu\mathcal{L}^*\right]}{\left(\frac{dp^2-1}{d-1}e^*+\epsilon_2\right)^2} \leq \frac{f_2(p,d)\|\bm{\omega}\|_2^4}{\left(\frac{dp^2-1}{d-1}e^*+\epsilon_2\right)^2}.
\end{equation}
Combining the bounds regarding the gradient and hessian matrix, one arrives at Eq.~\eqref{Eq:fidelity_loss_local_minimum}.$\quad\blacksquare$

Theorem~\ref{theorem:local_min} directly points out that if the loss function takes a value lower than the critical threshold $\mathcal{L}_c=1-1/d$, then the surrounding landscape would be a local minimum for almost all of the target states, the proportion of which is exponentially close to $1$ as the qubit count $N=\log_2 d$ grows. Note that $\|\bm{\omega}\|_2^2=\sum_{\mu=1}^M \|\Omega_\mu\|_\infty^2$ scales linearly with the number of parameters $M$ and at most polynomially with the qubit count $N$. Because practically the operator norm $\|\Omega_\mu\|_{\infty}$ is constant such as the generators of Pauli rotations $\|X\|_\infty=\|Y\|_\infty=\|Z\|_\infty=1$, or grows polynomially with the system size such as the layer with globally correlated parameters~\cite{Cong2019} and the global evolution in analog quantum computing~\cite{das2008colloquium}. Here we focus on the former and conclude that the upper bound in Theorem~\ref{theorem:local_min} is of order $\mathcal{O}(2^{-N}M^2/\epsilon^2)$, implying the exponential training cost. The conclusion also holds for noisy quantum states (see Appendix~\ref{appendix:detailed_proofs}). A similar result for the so-called local loss function, like the energy expectation used in variational quantum eigensolvers, is provided in Appendix~\ref{appendix:energy_type}.

In principle, if one could explore the whole Hilbert space with exponentially many parameters, $\bm{\theta}^*$ was a saddle point at most instead of a local minimum since there must exist a unitary connecting the learnt state $\ket{\psi^*}$ and the target state $\ket{\phi}$. This is also consistent with our bound by taking $M\in\Omega(2^{N/2})$ to cancel the $2^{-N}$ factor such that the bound is no more exponentially small. However, the number of parameters one can control always scales polynomially with the qubit count due to the memory constraint. This fact indicates that if the QNN is not designed specially for the target state using some prior knowledge so that the ``correct'' direction towards the target state is contained in the accessible tangent space, the QNN will have the same complexity as the normal quantum state tomography.

\textbf{Dependence on the loss value $\mathcal{L}^*$.} The dependence of the bound in Theorem~\ref{theorem:local_min} on the overlap $p=\sqrt{1-\mathcal{L}^*}$ shows that, as the loss function value becomes lower, the local minima becomes denser so that the training proceeds harder. This agrees with the experience that the loss curves usually decay fast at the beginning of a training process and slow down till the convergence. If $e^*\neq0$, the second term in Eq.~\eqref{Eq:fidelity_loss_local_minimum} becomes larger as $p^2\rightarrow1/d$, suggesting that the local minima away from the critical point $\mathcal{L}_c$ is more severe than that near $\mathcal{L}_c$. Moreover, if $\epsilon_2=0$, the bound diverges as the QFI minimal eigenvalue $e^*$ vanishes, which reflects the fact that over-parameterized QNNs have many equivalent local minima connected by the redundant degrees of freedom of parameters.

By contrast, if $\mathcal{L}^*>\mathcal{L}_c$, the results could be established similarly by slightly modifying the proof yet with respect to local maxima, as depicted in Fig.~\ref{fig:training_landscapes}(a). However, the critical point $\mathcal{L}_c=1-2^{-N}$ moves to $1$ exponentially fast as $N$ increases, i.e., the range of $\mathcal{L}^*$ without severe local minima shrink exponentially. Hence for large-scale systems, even with a polynomially small fidelity, one would encounter a local minimum almost definitely if no more prior knowledge can be used.

\subsection{Implication on the learnability of QNNs}
In practical cases, if the QNN is composed of $D$ repeated layers with $z$ trainable parameters for each layer and each qubit, then the total number of trainable parameters becomes $M=NDz$, and hence the probability of avoiding local minima is of order $\mathcal{O}(N^22^{-N}D^2/\epsilon^2)$, which increases quadratically as the QNN becomes deeper. This seems contrary to the conclusion from barren plateaus~\cite{McClean2018} where deep QNNs lead to poor trainability. But in fact, they are complementary to each other. The reason is that the ensemble here originates from the unknown target state instead of the random initialization. Similar to classical neural networks, a deeper QNN has stronger expressibility, which creates a larger accessible manifold to approach the unknown state and may turn a local minimum into a saddle point with the increased dimensions. But on the other hand, a deeper QNN with randomly initialized parameters leads to barren plateaus~\cite{Cerezo2021}. In short, the local minima here arise due to the limited expressibility together with a non-vanishing fidelity while barren plateaus stem from the strong expressibility together with the random initialization. 

To solve this dilemma, our results suggest that a well-designed QNN structure taking advantage of prior knowledge of the target state is vitally necessary. Otherwise, a good initial guess (i.e., an initial state with high fidelity) solely is hard to play its role. An example of prior knowledge from quantum many-body physics is the tensor network states~\cite{ORUS2014117} satisfying the entanglement area law, which lives only in a polynomially large space but generally can not be solved in two and higher spatial dimensions by classical computers. Other examples include the UCCSD ansatz~\cite{romero2018strategies} in quantum chemistry and the QAOA ansatz~\cite{guerreschi2019qaoa} in combinatorial optimization, which all attempt to utilize the prior knowledge of the target states.

Finally, we remark that our results also place general theoretical limits for adaptive~\cite{Grimsley2019adaptive} or layer-wise training methods~\cite{Skolik_2021}.
Relevant phenomena are observed previously in special examples~\cite{Campos2020AbruptTI}. Adaptive methods append new training layers incrementally during the optimization instead of placing a randomly initialized determinate ansatz at the beginning, which is hence beyond the scope of barren plateaus~\cite{McClean2018}. Nevertheless, our results imply that for moderately large systems with $\mathcal{L}^*<\mathcal{L}_c$, every time a new training layer is appended, the learnt state $\ket{\psi^*}$ would be a local minimum of the newly created landscape so that the training process starting near $\ket{\psi^*}$ would go back to the original state $\ket{\psi^*}$ without any effective progress more than applying an identity. Note that in adaptive methods, one usually initializes the new appended layer near the identity to preserve the historical learnt outcomes. Similar phenomena are also expected to occur in the initialization strategies where the circuit begins near the identity~\cite{Grant2019, Zhang2022, Wang2023}. We emphasize that our results do not imply the ineffectiveness of all adaptive methods. Instead, they only suggest that simplistic brute-force adaptive methods provide no significant benefit in terms of enhancing learnability on average.

\subsection{Concentration of training landscapes}\label{section:beyond_vicinity}
Theorem~\ref{theorem:local_min} analyses the statistical properties of the vicinity of a certain point $\bm{\theta}^*$, i.e., the probability distributions of the gradient and Hessian matrix of $\bm{\theta}^*$. To characterize the training landscape beyond the vicinity, a pointwise result is established in Proposition~\ref{proposition:fidelity_loss_exp_var} with the proof in Appendix~\ref{appendix:detailed_proofs}. 
\begin{proposition}\label{proposition:fidelity_loss_exp_var}
The expectation and variance of the fidelity loss function $\mathcal{L}$ with respect to the target state ensemble $\mathbb{T}$ can be exactly calculated as
\begin{equation}
\begin{aligned}
    &\mathbb{E}_{\mathbb{T}} \left[ \mathcal{L}(\bm{\theta}) \right] = 1 - p^2 + \frac{d p^2 - 1}{d-1} g(\bm{\theta}),\\
    &\var_{\mathbb{T}}\left[\mathcal{L}(\bm{\theta})\right] = \frac{1-p^2}{d-1} g(\bm{\theta}) \left[ 4 p^2 - \left(2p^2 - \frac{ (d-2)(1-p^2)}{d(d-1)}\right) g(\bm{\theta}) \right],
\end{aligned}
\end{equation}
where $g(\bm{\theta})=1-|\braket{\psi^*}{\psi(\bm{\theta})}|^2$.
\end{proposition}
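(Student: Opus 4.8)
The plan is to reduce everything to the Haar moments of a single scalar, namely the overlap of the QNN output with the unknown orthogonal component of the target. Using the decomposition in Eq.~\eqref{Eq:target_decompose}, the fidelity amplitude splits as $\braket{\phi}{\psi(\bm{\theta})} = p\,a + \sqrt{1-p^2}\,b$, with deterministic $a=\braket{\psi^*}{\psi(\bm{\theta})}$ and random $b=\braket{\psi^\perp}{\psi(\bm{\theta})}$. Because $\ket{\psi^\perp}\in\mathcal{H}^\perp$, the quantity $b$ equals the inner product of the Haar-random $\ket{\psi^\perp}$ with the fixed projected vector $\ket{\chi}=(I-\ketbra{\psi^*}{\psi^*})\ket{\psi(\bm{\theta})}$, whose squared norm is $\|\chi\|^2 = 1-|a|^2 = g(\bm{\theta})$. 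Squaring then gives $\mathcal{L}=1-p^2|a|^2-(1-p^2)|b|^2-2p\sqrt{1-p^2}\,\opnm{Re}(a\bar b)$, so both identities reduce to computing moments of $b$ under the uniform measure on $\mathcal{H}^\perp$.

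I would obtain the required moments from the ``subspace Haar integration'' of Appendix~\ref{appendix:subspace_integration}, namely $\mathbb{E}_{\mathbb{T}}[\ketbra{\psi^\perp}{\psi^\perp}]=I^\perp/(d-1)$ and $\mathbb{E}_{\mathbb{T}}[(\ketbra{\psi^\perp}{\psi^\perp})^{\otimes 2}]=(I^\perp\otimes I^\perp+S^\perp)/((d-1)d)$, where $I^\perp,S^\perp$ are the identity and the swap restricted to $\mathcal{H}^\perp$. Contracting against $\ket{\chi}$, and using $S^\perp(\ket{\chi}\otimes\ket{\chi})=\ket{\chi}\otimes\ket{\chi}$, yields $\mathbb{E}_{\mathbb{T}}[|b|^2]=g/(d-1)$ and $\mathbb{E}_{\mathbb{T}}[|b|^4]=2g^2/((d-1)d)$. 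The $U(1)$ phase invariance $\ket{\psi^\perp}\mapsto e^{i\alpha}\ket{\psi^\perp}$ of the Haar measure forces every phase-unbalanced moment to vanish, so $\mathbb{E}_{\mathbb{T}}[b]=\mathbb{E}_{\mathbb{T}}[b^2]=\mathbb{E}_{\mathbb{T}}[|b|^2 b]=0$. The mean then follows immediately: the cross term averages to zero and $\mathbb{E}_{\mathbb{T}}[\mathcal{L}]=1-p^2(1-g)-(1-p^2)g/(d-1)$, which regroups into $1-p^2+\frac{dp^2-1}{d-1}g$ once the coefficient of $g$ is simplified.

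For the variance I would use $\var_{\mathbb{T}}[\mathcal{L}]=\var_{\mathbb{T}}[\,|\braket{\phi}{\psi(\bm{\theta})}|^2\,]$ and expand the square. Phase invariance removes the linear cross term and the covariance between $|b|^2$ and $\opnm{Re}(a\bar b)$, so only two contributions survive: the fourth-moment piece $(1-p^2)^2\var_{\mathbb{T}}[|b|^2]$ with $\var_{\mathbb{T}}[|b|^2]=\frac{2g^2}{(d-1)d}-\frac{g^2}{(d-1)^2}=\frac{(d-2)g^2}{d(d-1)^2}$, and the cross-moment piece $4p^2(1-p^2)\,\mathbb{E}_{\mathbb{T}}[(\opnm{Re}(a\bar b))^2]$, where only the $2|a|^2|b|^2$ part of $(\opnm{Re}(a\bar b))^2$ contributes (the $a^2\bar b^2$ and $\bar a^2 b^2$ parts average to zero), giving $\mathbb{E}_{\mathbb{T}}[(\opnm{Re}(a\bar b))^2]=\frac{(1-g)g}{2(d-1)}$. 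Substituting $|a|^2=1-g$ and factoring out $\frac{(1-p^2)g}{d-1}$ assembles the $g$- and $g^2$-dependence into a bracket of the stated form.

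I expect the main obstacle to be the second-order step: carrying out the degree-two subspace integration through the symmetric-subspace projector, and, above all, the bookkeeping that keeps the surviving balanced moments separate from those annihilated by phase invariance. The delicate point is pinning down the exact numerical coefficients of the constant, $g$-linear, and $g^2$ terms inside the bracket, since these are precisely what determine the final polynomial; by comparison, the first-order identity and the vanishing of the unbalanced and cross moments are routine once the phase-symmetry argument is in place.
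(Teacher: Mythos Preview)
Your approach is essentially identical to the paper's: both expand $\mathcal{L}$ via the decomposition in Eq.~\eqref{Eq:target_decompose}, discard the phase-unbalanced moments (what the paper calls ``inhomogeneous terms vanish''), and evaluate the surviving first- and second-order subspace Haar moments---your $\mathbb{E}[|b|^2]$ and $\mathbb{E}[|b|^4]$ are exactly the content of Corollaries~\ref{corollary:subspace_expectation} and~\ref{corollary:subspace_squared_expectation}, just contracted against the scalar $b=\braket{\psi^\perp}{\chi}$ rather than written as $\braoprket{\psi^\perp}{\varrho}{\psi^\perp}$.

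One caveat worth flagging since you were deliberately vague about the final assembly: your value $\mathbb{E}_{\mathbb{T}}\!\left[(\opnm{Re}(a\bar b))^2\right]=\dfrac{(1-g)g}{2(d-1)}$ is correct, whereas the paper's proof records $\dfrac{1-\braoprket{\psi^*}{\varrho}{\psi^*}^2}{2(d-1)}$, apparently replacing the numerator $\braoprket{\psi^*}{\varrho}{\psi^*}-\braoprket{\psi^*}{\varrho}{\psi^*}^2=(1-g)g$ by $1-\braoprket{\psi^*}{\varrho}{\psi^*}^2=g(2-g)$. Propagating your correct moment through the two-term sum yields the leading constant in the bracket as $2p^2$ rather than the printed $4p^2$; a direct check at $d=2$, where $\mathcal{H}^\perp$ is one-dimensional and the variance reduces to $2p^2(1-p^2)g(1-g)$, confirms your coefficient. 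So do not be surprised when the algebra refuses to reproduce the stated $4p^2$---your derivation is the right one.
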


\begin{figure}
    \centering
    \includegraphics[width=\linewidth]{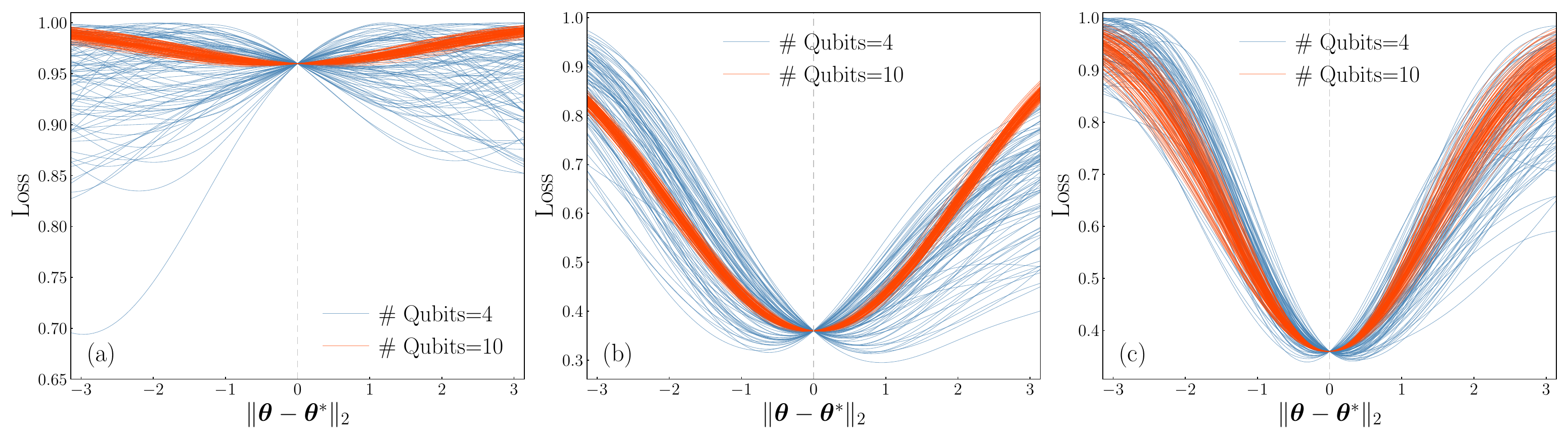}
    \caption{Samples of training landscapes along randomly chosen directions as a function of the distance $\|\bm{\theta}-\bm{\theta}^*\|$ for random target states from $\mathbb{T}$ with the sample size $200$, the qubit count $N=4$ and $N=10$ and the overlap (a) $p=0.2$ and (b) $p=0.8$, respectively. The setup in (c) is the same as in (b) but fixes the target state and only samples the directions. A clear convex shape is present in the samples from $N=10$ but absent in the samples from $N=4$. This phenomenon intuitively shows that the training landscapes from the ensemble $\mathbb{T}$ are concentrated to a local minimum around $\bm{\theta}^*$ as the qubit count increases. }
    \label{fig:landscape_profiles}
\end{figure}

Since the factor $g(\bm{\theta})$ takes its global minimum at $\bm{\theta}^*$ by definition, the exponentially small variance in Proposition~\ref{proposition:fidelity_loss_exp_var} implies that the entire landscape concentrates exponentially in the qubit count to the expectation $\mathbb{E}_{\mathbb{T}} \left[ \mathcal{L}(\bm{\theta}) \right]$ with a pointwise convergence (not necessarily a uniform convergence), which takes its global minimum at $\bm{\theta}^*$ with respect to the training landscape as long as $p^2>1/d$. For QNNs satisfying the parameter-shift rule, the factor $g(\bm{\theta})$ along the Cartesian axis corresponding to $\theta_\mu$ passing through $\bm{\theta}^*$ will take the form of a trigonometric function $\frac{1}{2}\mathcal{F}_{\mu\mu}^*\sin^2(\theta_\mu-\theta_\mu^*)$, which is elaborated in Appendix~\ref{appendix:detailed_proofs}. Other points apart from $\bm{\theta}^*$ is allowed to have a non-vanishing gradient expectation in our setup, which leads to prominent local minima instead of plateaus~\cite{McClean2018}.

\section{Numerical experiments}
Previous sections theoretically characterize the limitation of QNNs in state learning tasks considering the information of the loss value $\mathcal{L}^*$.
In this section, we verify these results by conducting numerical experiments on the platform Paddle Quantum~\cite{Paddle-Quantum} and Tensorcircuit~\cite{tensorcircuit} from the following three perspectives.  The codes for numerical experiments can be found in~\cite{ourcode}.

\textbf{Comparison with loss curves.} Firstly, we show the prediction ability of Theorem~\ref{theorem:local_min} by direct comparison with experimental loss curves in Fig.~\ref{fig:training_landscapes}(c). We create $9$ ALT circuits for qubit counts of $2,6,10$ and circuit depth of $1,3,5$ with randomly initialized parameters, denoted as $\bm{\theta}^*$. For each circuit, we sample $10$ target states from the ensemble $\mathbb{T}$ with $p=0.2$ and then generate $10$ corresponding loss curves using the Adam optimizer with a learning rate $0.01$. We exploit the background color intensity to represent the corresponding bounds from Theorem~\ref{theorem:local_min} by assigning $e^*=0.1$ and $\epsilon_1=\epsilon_2=0.05$. One can find that the loss curves decay fast at the beginning and then slow down till convergence, in accordance with the conclusion that the probability of encountering local minima is larger near the bottom. The convergent loss value becomes higher as the qubit count grows and can be partially reduced by increasing the circuit depth, which is also consistent with Theorem~\ref{theorem:local_min}.

\textbf{Landscape profile sampling.} We visualize the existence of asymptotic local minima by sampling training landscape profiles in Fig.~\ref{fig:landscape_profiles}. Similar to the setup above, we create an ALT circuit with randomly initialized parameters $\bm{\theta}^*$, sample $200$ target states from the ensemble $\mathbb{T}$ and compute the loss values near $\bm{\theta}^*$. Figs.~\ref{fig:landscape_profiles}(a) and (b) are obtained by randomly choosing a direction for each landscape sample, while Fig.~\ref{fig:landscape_profiles}(c) is obtained by randomly sampling $200$ directions for one fixed landscape sample. There is no indication of local minimum in the case of few qubits and small fidelity as shown by the blue curves in Fig.~\ref{fig:landscape_profiles}(a). However, as the qubit number grows, the landscape profiles concentrate into a clear convex shape centered at $\bm{\theta}^*$ for both $p=0.2$ and $p=0.8$, where the curvature for $p=0.8$ is larger due to the factor $(p^2-1/d)$ in Eq.~\eqref{Eq:fidelity_loss_hessian}. Fig.~\ref{fig:landscape_profiles}(c) further demonstrates that beyond the convexity along a specific direction, $\bm{\theta}^*$ is a highly probable local minimum in the case of large qubit counts.

\begin{figure}
    \centering
    \includegraphics[width=1\linewidth]{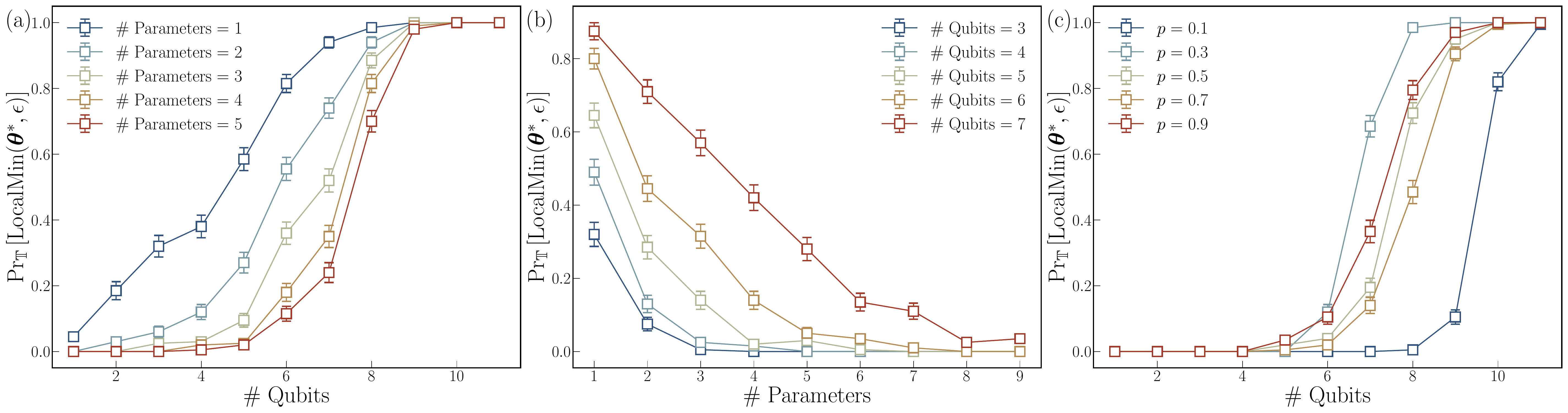}
    \caption{Numerical evaluation for the probability that $\bm{\theta}$ is a local minimum up to a fixed precision $\epsilon$, i.e., $\opnm{Pr}_{\mathbb{T}} \left[\opnm{LocalMin}(\bm{\theta}^*,\epsilon) \right]$ for different qubit count, the number of trainable parameters and the overlap $p$, with the error bar representing the statistical uncertainty in experiments. (a) shows that the probability converges to $1$ rapidly with the increasing qubit count. (b) shows that the probability is reduced by increasing the number of parameters, implying the local minimum phenomenon is mitigated. (c) illustrates that the probability of encountering local minima always converges to $1$ for any fixed overlap $p$. $p=0.8$ for both (a) and (b) and the number of parameters in (c) is $6$.}
    \label{fig:local_minimum_prop}
\end{figure}

\textbf{Probability evaluation.} Finally, we compute the gradients and diagonalize the Hessian matrices to directly verify the exponentially likely local minima proposed by Theorem~\ref{theorem:local_min} in Fig.~\ref{fig:local_minimum_prop}. Similar to the setup above, we create ALT circuits for qubit count from $N=1$ to $11$ with depth $D=5$ and sample $200$ target states from $\mathbb{T}$ for each circuit. After specifying a certain subset of parameters to be differentiated, we estimate the probability that $\bm{\theta}$ is a local minimum by the proportion of samples satisfying the condition $\opnm{LocalMin}(\bm{\theta}^*,\epsilon)$, where we assign $\epsilon_1 = \epsilon_2 = 0.05$. One can find that the probability of encountering local minima saturates to $1$ very fast as the qubit count increases for arbitrary given values of $p$, and at the same time, it can be reduced by increasing the number of trainable parameters, which is consistent with the theoretical findings in Theorem~\ref{theorem:local_min}.

\section{Conclusion and outlook}
In this paper, we prove that during the process of learning an unknown quantum state with QNNs, the probability of avoiding local minima is of order $\mathcal{O}(N^2 2^{-N}D^2/\epsilon^2)$ which is exponentially small in the qubit count $N$ while increases polynomially with the circuit depth $D$. The curvature of local minima is concentrated to the QFI matrix times a fidelity-dependent constant which is positive at $p^2>1/d$. In practice, our results can be regarded as a quantum version of the no-free-lunch (NFL) theorem suggesting that no single QNN is universally the best-performing model for learning all target quantum states. We remark that compared to previous works, our findings first establish quantitative limits on good initial guesses and adaptive training methods for improving the learnability and scalability of QNNs.

In the technical part of our work, our ensemble arises from the unknown target state. Alternatively, if the QNN is sufficiently deep to form a subspace $2$-design (cf. Appendix~\ref{appendix:subspace_integration}) replacing the ensemble we used here, a different interpretation could be established with the same calculations: there are exponentially large proportion of local minima on some cross sections of the training landscape with a constant loss function value. However, it remains an open question what the scaling of the QNN depth is to constitute such a subspace $2$-design, given that a local random quantum circuit of polynomially depth forms an approximate unitary $t$-design~\cite{haferkamp2022random}. We would like to note that the case where the output and target states are mixed states is not covered due to the quantum nature of the hard-to-defining orthogonal ensemble of mixed states, which may be left for future research.

Future progress will necessitate more structured QNN architectures and optimization tools, where insights from the field of deep learning may prove beneficial. Our findings suggest that the unique characteristics and prior information of quantum systems must be thoughtfully encoded in the QNN in order to learn the state successfully, such as the low entanglement structure in the ground state~\cite{ORUS2014117}, the local interactions in Hamiltonians~\cite{romero2018strategies,Wiersema2020} and the adiabatic evolution from product states~\cite{guerreschi2019qaoa}.
 
\textbf{Acknowledgement.}
We would like to thank the helpful comments from the anonymous reviewers. Part of this work was done when H. Z., C. Z., M. J., and X. W. were at Baidu Research.

\bibliographystyle{unsrt}
\bibliography{ref}

 
\appendix
\begin{center}
{\textbf{\large Appendix for ``Statistical Analysis of Quantum State Learning Process in Quantum Neural Networks''}}
\end{center}

\renewcommand{\thetheorem}{S\arabic{theorem}}
\renewcommand{\theequation}{S\arabic{equation}}
\renewcommand{\theproposition}{S\arabic{proposition}}
\renewcommand{\theremark}{S\arabic{remark}}
\renewcommand{\thefigure}{S\arabic{figure}}
\setcounter{equation}{0}
\setcounter{table}{0}
\setcounter{proposition}{0}
\setcounter{figure}{0}
\setcounter{remark}{0}

\section{Preliminaries}
\subsection{Subspace Haar integration}\label{appendix:subspace_integration}
The central technique used in our work is the subspace Haar integration, i.e., a series of formulas on calculating Haar integrals over a certain subspace of the given Hilbert space. In this section, we give a brief introduction to the common Haar integrals and then the basic formulas on subspace Haar integrals used in our work together with the proofs.

Haar integrals refer to the matrix integrals over the $d$-degree unitary group $\mathcal{U}(d)$ with the Haar measure $d\mu$, which is the unique uniform measure on $\mathcal{U}(d)$ such that
\begin{equation}
    \int_{\mathcal{U}(d)}d\mu(V) f(V)  = \int_{\mathcal{U}(d)}d\mu(V) f(VU) = \int_{\mathcal{U}(d)}d\mu(V) f(UV),
\end{equation}
for any integrand $f$ and group element $U\in\mathcal{U}(d)$. If an ensemble $\mathbb{V}$ of unitaries $V$ matches the Haar measure up to the $t$-degree moment, i.e.,
\begin{equation}\label{Eq:definiton_design}
    \mathbb{E}_{V\in\mathbb{V}} [p_{t,t}(V)] = \int_{\mathcal{U}(d)} d\mu(V) p_{t,t}(V),
\end{equation}
then $\mathbb{V}$ is called a unitary $t$-design~\cite{Dankert2009}. $p_{t,t}(V)$ denotes an arbitrary polynomial of degree at most $t$ in the entries of $V$ and at most $t$ in those of $V^\dagger$. $\mathbb{E}_{V\in\mathbb{V}}[\cdot]$ denotes the expectation over the ensemble $\mathbb{V}$. The Haar integrals over polynomials can be analytically solved and expressed into closed forms according to the following lemma.
\begin{lemma}\label{lemma:arbitrary_rep} Let $\varphi:\mathcal{U}(d)\rightarrow \opnm{GL}(\mathbb{C}^{d'})$ be an arbitrary representation of unitary group $\mathcal{U}(d)$. Suppose that the direct sum decomposition of $\varphi$ to irreducible representations is $\varphi=\bigoplus_{j,k}\phi^{(j)}_k$, where $\phi^{(j)}_k$ denotes the $k^{\text{th}}$ copy of the irreducible representation $\phi^{(j)}$. A set of orthonormal basis in the representation space of $\phi^{(j)}_k$ is denoted as $\{\ket{v_{j,k,l}}\}$. For an arbitrary linear operator $A:\mathbb{C}^{d'}\rightarrow\mathbb{C}^{d'}$, the following equality holds~{\rm \cite{Gambetta2012}}
\begin{equation}\label{Eq:arbitrary_rep}
    \int_{\mathcal{U}(d)} \varphi(U) A \varphi(U)^\dagger d \mu(U)
    = \sum_{j,k,k'} \frac{\tr(Q_{j,k,k'}^\dagger A)}{\tr(Q_{j,k,k'}^\dagger Q_{j,k,k'})} Q_{j,k,k'},
\end{equation}
where $Q_{j,k,k'}=\sum_l \ketbra{v_{j,k,l}}{v_{j,k',l}}$ is the transfer operator from the representation subspace of $\phi^{(j)}_{k'}$ to that of $\phi^{(j)}_{k}$. The denominator on the right hand side of {\rm (\ref{Eq:arbitrary_rep})} can be simplified as $\tr(Q_{j,k,k'}^\dagger Q_{j,k,k'})=\tr(P_{j,k'}) = d_j$, where $P_{j,k}=\sum_l \ketbra{v_{j,k,l}}{v_{j,k,l}}=Q_{j,k,k}$ is the projector to the representation subspace of $\phi^{(j)}_{k}$ and $d_j$ is the dimension of the representation space of $\phi^{(j)}$.
\end{lemma}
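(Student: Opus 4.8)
The plan is to recognize the integral as an intertwiner of $\varphi$ with itself and then apply Schur's lemma in its form for completely reducible representations. First I would set $T = \int_{\mathcal{U}(d)} \varphi(U) A \varphi(U)^\dagger \, d\mu(U)$ and show that $T$ commutes with every $\varphi(V)$. Substituting $U \mapsto VU$ and invoking the left-invariance of the Haar measure gives $\varphi(V) T \varphi(V)^\dagger = \int \varphi(VU) A \varphi(VU)^\dagger \, d\mu(U) = T$ for all $V \in \mathcal{U}(d)$. Hence $T$ lies in the commutant $\operatorname{End}_{\mathcal{U}(d)}(\varphi)$, and the entire problem reduces to identifying this commutant and extracting the coefficients.

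Next I would determine the structure of the commutant. By Schur's lemma, any intertwiner between two copies of the same irrep $\phi^{(j)}$ is a scalar multiple of the canonical identification, while any intertwiner between inequivalent irreps vanishes; equivalently, within each isotypic component $\bigoplus_k V_{j,k} \cong \phi^{(j)}\otimes\mathbb{C}^{m_j}$ the commutant acts as $\id\otimes B$ for an arbitrary $m_j\times m_j$ matrix $B$. Since $Q_{j,k,k'}=\sum_l \ketbra{v_{j,k,l}}{v_{j,k',l}}$ corresponds to $\id\otimes E_{kk'}$, the family $\{Q_{j,k,k'}\}$ is a basis of the commutant, so I may write $T = \sum_{j,k,k'} c_{j,k,k'} Q_{j,k,k'}$. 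A short computation using orthonormality of the basis $\{\ket{v_{j,k,l}}\}$ confirms these are Hilbert--Schmidt orthogonal, $\tr(Q_{j,k,k'}^\dagger Q_{j'',k'',k'''}) = \delta_{jj''}\delta_{kk''}\delta_{k'k'''}\,d_j$, which simultaneously verifies the stated normalization $\tr(Q_{j,k,k'}^\dagger Q_{j,k,k'})=d_j$ and lets me extract $c_{j,k,k'} = \tr(Q_{j,k,k'}^\dagger T)/d_j$.

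The decisive step is to replace $T$ by $A$ inside this trace, since the claimed formula involves $A$. The key observation is that each $Q_{j,k,k'}$ is itself an intertwiner: it carries the $k'$-th copy of $\phi^{(j)}$ onto the $k$-th copy by the identity on the irrep space, so checking on basis vectors gives $\varphi(U) Q_{j,k,k'} = Q_{j,k,k'}\varphi(U)$ for all $U$, and taking adjoints shows $Q_{j,k,k'}^\dagger$ also commutes with $\varphi$. Consequently $\varphi(U)^\dagger Q_{j,k,k'}^\dagger \varphi(U) = \varphi(U)^\dagger\varphi(U)Q_{j,k,k'}^\dagger = Q_{j,k,k'}^\dagger$. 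Inserting the definition of $T$ and using cyclicity of the trace, $\tr(Q_{j,k,k'}^\dagger T) = \int \tr\!\big(\varphi(U)^\dagger Q_{j,k,k'}^\dagger \varphi(U)\, A\big)\,d\mu(U) = \tr(Q_{j,k,k'}^\dagger A)\int d\mu(U) = \tr(Q_{j,k,k'}^\dagger A)$, the integrand having collapsed to a $U$-independent quantity. Substituting back into the expansion of $T$ gives the asserted identity.

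I expect the main obstacle to be the careful bookkeeping in pinning down the commutant: one must argue rigorously that the $Q_{j,k,k'}$ \emph{span} all intertwiners (the multiplicity-space version of Schur's lemma) and not merely that each one is an intertwiner, while keeping the triple index $(j,k,k')$ cleanly separated from the irrep-internal index $l$ throughout. Once the commutant structure and the intertwining property of $Q_{j,k,k'}$ are established, the orthogonality computation and the coefficient extraction are routine.
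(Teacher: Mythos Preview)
Your argument is correct and is the standard Schur-lemma proof of this twirling identity. Note, however, that the paper does not actually supply a proof of this lemma: it is stated as a known result with a citation to \cite{Gambetta2012}, so there is no ``paper's own proof'' to compare against. Your write-up would serve perfectly well as the missing justification.
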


By choosing different representations of the unitary group $\mathcal{U}(d)$, some commonly used equalities can be derived, such as
\begin{equation}\label{Eq:1-degree_chain_intergral}
    \int_{\mathcal{U}(d)} V A V^\dagger d \mu(V)
    = \frac{\tr(A)}{d} I,
\end{equation}
\begin{equation}\label{Eq:2-degree_chain_integral}
\begin{aligned}
    \int_{\mathcal{U}(d)} V^\dagger A V B V^\dagger C V d \mu(V) = \frac{\tr(AC)\tr{B}}{d^2}I + \frac{d\tr{A}\tr{C}-\tr(AC)}{d(d^2-1)}\left(B-\frac{\tr{B}}{d}I\right),
\end{aligned}
\end{equation}
where $I$ is the identity operator on the $d$-dimensional Hilbert space $\mathcal{H}$. $A,B$ and $C$ are arbitrary linear operators on $\mathcal{H}$. According to the linearity of the integrals, the following equalities can be further derived
\begin{equation}\label{Eq:1-degree_trace_multiply_intergral}
    \int_{\mathcal{U}(d)} \tr(V A)\tr(V^\dagger B) d \mu(V)
    = \frac{\tr(AB)}{d},
\end{equation}
\begin{equation}\label{Eq:2-degree_multi_trace_integral}
\begin{aligned}
    \int_{\mathcal{U}(d)} \tr(V^\dagger A V B) \tr(V^\dagger C V D ) d \mu(V)
    &= \frac{\tr A\tr B\tr C\tr D + \tr(AC)\tr(BD)}{d^2 - 1}\\
    &- \frac{\tr(AC)\tr{B}\tr{D} + \tr{A}\tr{C}\tr(BD)}{d(d^2-1)},
\end{aligned}
\end{equation}
where $A,B,C$ and $D$ are arbitrary linear operators on $\mathcal{H}$.

The subspace Haar integration can be regarded as a simple generalization of the formulas above. Suppose that $\mathcal{H}_{\rm sub}$ is a subspace with dimension $d_{\rm sub}$ of the Hilbert space $\mathcal{H}$. $\mathbb{U}$ is an ensemble whose elements are unitaries in $\mathcal{H}$ with a block-diagonal structure $U=\bar{P}+PU P$. $P$ is the projector from $\mathcal{H}$ to $\mathcal{H}_{\rm sub}$ and $PU P$ is a random unitary with the Haar measure on $\mathcal{H}_{\rm sub}$. $\bar{P}=I-P$ is the projector  from $\mathcal{H}$ to the orthogonal complement of $\mathcal{H}_{\rm sub}$. Integrals with respect to such an ensemble $\mathbb{U}$ are dubbed as ``\textit{subspace Haar integrals}'', which can be reduced back to the common Haar integrals by taking $\mathcal{H}_{\rm sub}=\mathcal{H}$. The corresponding formulas of subspace Haar integrals are developed in the following lemmas, where $\mathbb{E}_{U\in\mathbb{U}}[\cdot]=\mathbb{E}_\mathbb{U}[\cdot]$ denotes the expectation with respect to the ensemble $\mathbb{U}$.

\begin{lemma}\label{lemma:subspace_haar_inhomogenous}
The expectation of a single element $U\in\mathbb{U}$ with respect to the ensemble $\mathbb{U}$ equals to the projector to the orthogonal complement, i.e.,
\begin{equation}
    \mathbb{E}_\mathbb{U}\left[U \right]
    = I - P.
\end{equation}
\end{lemma}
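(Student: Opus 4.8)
The plan is to reduce the claim to the vanishing of the first moment of a Haar-random unitary on the subspace $\mathcal{H}_{\rm sub}$. By the defining structure of the ensemble $\mathbb{U}$, every element decomposes as $U = \bar{P} + PUP$, where $\bar{P} = I - P$ is a fixed (deterministic) operator and $V := PUP$ is Haar-distributed on $\mathcal{H}_{\rm sub}$. Since $\bar{P}$ does not depend on the random draw, linearity of expectation gives $\mathbb{E}_\mathbb{U}[U] = \bar{P} + \mathbb{E}_\mathbb{U}[V]$, so the lemma follows once I show $\mathbb{E}_\mathbb{U}[V] = 0$, i.e., that the average of a single Haar-random unitary restricted to $\mathcal{H}_{\rm sub}$ vanishes.

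To compute $\mathbb{E}_\mathbb{U}[V]$, I would invoke the left-invariance of the Haar measure on $\mathcal{U}(d_{\rm sub})$. For any fixed unitary $W$ acting on $\mathcal{H}_{\rm sub}$, the substitution $V \mapsto WV$ preserves the measure, hence $\mathbb{E}_\mathbb{U}[V] = W\,\mathbb{E}_\mathbb{U}[V]$. Choosing the global phase $W = e^{i\phi} P$ (which is unitary on $\mathcal{H}_{\rm sub}$ and satisfies $WV = e^{i\phi} V$ because $PV = V$) yields $\mathbb{E}_\mathbb{U}[V] = e^{i\phi}\,\mathbb{E}_\mathbb{U}[V]$ for every $\phi$, which forces $\mathbb{E}_\mathbb{U}[V] = 0$. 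Combining this with the previous paragraph gives $\mathbb{E}_\mathbb{U}[U] = \bar{P} = I - P$, as claimed.

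Alternatively, the same vanishing can be read off representation-theoretically from Lemma~\ref{lemma:arbitrary_rep}: the group average of a representation is the projector onto its trivial sub-representation, and the defining representation of $\mathcal{U}(d_{\rm sub})$ contains no trivial component, so its average is zero. I expect no substantive obstacle here; the only point worth stating carefully is that the fixed block $\bar{P}$ must be separated out before averaging, since the Haar randomness acts only within the block $P\,\cdot\,P$, so the invariance argument should be applied on $\mathcal{U}(d_{\rm sub})$ rather than on the full $\mathcal{U}(d)$.
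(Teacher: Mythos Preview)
Your proposal is correct and follows essentially the same route as the paper: decompose $U=\bar{P}+PUP$, pull out the deterministic block $\bar{P}$, and argue that the Haar average of the random block vanishes. The only cosmetic difference is that the paper invokes the general fact that Haar integrals of inhomogeneous monomials $p_{t,t'}$ with $t\neq t'$ vanish, whereas you supply the underlying phase-invariance argument that proves this fact; both amount to the same one-line computation.
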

\begin{proof}
The fact that Haar integrals of inhomogenous polynomials $p_{t,t'}$ with $t\neq t'$ over the whole space equals to zero leads to the vanishment of the block in $\mathcal{H}_{\rm sub}$, i.e.,
\begin{equation}
    \mathbb{E}_\mathbb{U}\left[U \right] = \mathbb{E}_\mathbb{U}\left[ \bar{P}+PU P \right]
    = \bar{P},
\end{equation}
which is just the projector to the orthogonal complement $\mathcal{H}_{\rm sub}$.
\end{proof}

Similarly, we know all the subspace Haar integrals involving only $U$ or $U^\dagger$ will leave a projector after integration. For example, it holds that $\mathbb{E}_\mathbb{U}\left[U A U \right]=\bar{P}A\bar{P}$ for an arbitrary linear operator $A$.

\begin{lemma}\label{lemma:subspace_haar_1design}
For an arbitrary linear operator $A$ on $\mathcal{H}$, the expectation of $U^\dagger A U$ with respect to the random variable $U\in\mathbb{U}$ is
\begin{equation}\label{Eq:subspace_UAU}
    \mathbb{E}_\mathbb{U} \left[ U^\dagger A U \right]
    = \frac{\tr(P A)}{d_{\rm sub}} P + (I - P) A (I - P).
\end{equation}
\end{lemma}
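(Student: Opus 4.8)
The plan is to exploit the explicit block structure $U=\bar{P}+PUP$ of the elements of $\mathbb{U}$ and reduce the subspace integral to the ordinary single-copy Haar formula \eqref{Eq:1-degree_chain_intergral} applied on $\mathcal{H}_{\rm sub}$. First I would set $V:=PUP$, so that $V$ is a Haar-random unitary supported entirely on $\mathcal{H}_{\rm sub}$ (that is, $V=PVP$ and $VV^\dagger=V^\dagger V=P$) and $U=\bar{P}+V$, $U^\dagger=\bar{P}+V^\dagger$. Expanding the sandwich then splits it into four pieces,
\begin{equation}
U^\dagger A U = \bar{P}A\bar{P} + \bar{P}A V + V^\dagger A \bar{P} + V^\dagger A V,
\end{equation}
and the computation reduces to taking the Haar expectation of each piece separately.

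Next I would dispose of the terms one at a time. The first term $\bar{P}A\bar{P}=(I-P)A(I-P)$ carries no randomness and contributes directly. The two cross terms vanish in expectation: $\mathbb{E}_\mathbb{U}[\bar{P}A V]=\bar{P}A\,\mathbb{E}_\mathbb{U}[V]$ and $\mathbb{E}_\mathbb{U}[V^\dagger A \bar{P}]=\mathbb{E}_\mathbb{U}[V^\dagger]A\bar{P}$, and each is an inhomogeneous polynomial of degree $(1,0)$, respectively $(0,1)$, in the entries of $V$, whose Haar average over $\mathcal{U}(d_{\rm sub})$ is zero by the same vanishing-of-inhomogeneous-moments argument used in Lemma~\ref{lemma:subspace_haar_inhomogenous}. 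For the remaining term, since $V=PVP$ acts as the identity outside $\mathcal{H}_{\rm sub}$ we have $V^\dagger A V = V^\dagger(PAP)V$, so only the compression $PAP$ of $A$ to the subspace is relevant. Regarding $V$ as a Haar-random unitary on the $d_{\rm sub}$-dimensional space $\mathcal{H}_{\rm sub}$ and invoking \eqref{Eq:1-degree_chain_intergral} within that subspace (valid under $V\leftrightarrow V^\dagger$ by left/right invariance of the Haar measure, with $d$ replaced by $d_{\rm sub}$ and $I$ by $P$) gives
\begin{equation}
\mathbb{E}_\mathbb{U}\left[V^\dagger A V\right] = \frac{\tr(PAP)}{d_{\rm sub}}P = \frac{\tr(PA)}{d_{\rm sub}}P,
\end{equation}
where the last step uses $P^2=P$ and cyclicity of the trace. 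Summing the three surviving contributions yields the claimed identity.

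The computation is essentially routine, so I do not expect a genuine obstacle; the only point demanding care is the reduction in the final step. Specifically, one must verify that \eqref{Eq:1-degree_chain_intergral}, stated for the full group $\mathcal{U}(d)$, transfers verbatim to the subgroup acting on $\mathcal{H}_{\rm sub}$ (with $d\mapsto d_{\rm sub}$ and $I\mapsto P$), and that the ambient trace $\tr(PAP)$ coincides with the genuine subspace trace. Once the bookkeeping among $P$, $\bar{P}$, and the subspace trace is set up cleanly, the result follows from linearity of the integral.
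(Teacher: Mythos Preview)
Your proposal is correct and follows essentially the same approach as the paper's proof: the paper also expands $U^\dagger A U=(\bar P+PU^\dagger P)A(\bar P+PUP)$, kills the inhomogeneous cross terms, and reduces the $P$-block to \eqref{Eq:1-degree_chain_intergral} on $\mathcal{H}_{\rm sub}$. The paper additionally remarks that the result can be read off from Lemma~\ref{lemma:arbitrary_rep} by viewing $U$ as a reducible representation of $\mathcal{U}(d_{\rm sub})$, but the block computation you give is the one actually carried out.
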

\begin{proof}
Eq.~\eqref{Eq:subspace_UAU} can be seen as a special case of Lemma~\ref{lemma:arbitrary_rep} since $U$ can be seen as the complete reducible representation of $\mathcal{U}(d_{\rm sub})$ composed of $(d-d_{\rm sub})$ trivial representations $\phi^{(1)}_k$ with $k=1,...,(d-d_{\rm sub})$ and one natural representation $\phi^{(2)}$. This gives rise to
\begin{equation}
\begin{aligned}
    &\sum_{k,k'} \frac{\tr(Q_{1,k,k'}^\dagger A)}{\tr(Q_{1,k,k'}^\dagger Q_{1,k,k'})} Q_{1,k,k'} = \bar{P} A \bar{P},\\
    &\frac{\tr(Q_{2}^\dagger A)}{\tr(Q_2^\dagger Q_2)} Q_{2} = \frac{\tr(P A)}{d_{\rm sub}} P.
\end{aligned}
\end{equation}
Alternatively, Eq.~\eqref{Eq:subspace_UAU} can just be seen as a result of the block matrix multiplication, i.e.,
\begin{equation}
\begin{aligned}
    \mathbb{E}_\mathbb{U}[U^\dagger A U] &= \mathbb{E}_\mathbb{U}[(\bar{P} + PU^\dagger P) A (\bar{P} + PU P)] \\
    &= \mathbb{E}_\mathbb{U}[\bar{P}A\bar{P} + \bar{P}A P U P + P U^\dagger P A\bar{P} + PU^\dagger P A PUP] \\
    &= \bar{P}A\bar{P} + \frac{\tr(PAP)}{d_{\rm sub}} P,
\end{aligned}
\end{equation}
where $\tr(PAP)=\tr(P^2 A)=\tr(PA)$.
\end{proof}
\begin{corollary}\label{corollary:subspace_expectation}
Suppose $\ket{\varphi}$ is a Haar-random pure state in $\mathcal{H}_{\rm sub}$. For arbitrary linear operators $A$ on $\mathcal{H}$, the following equality holds
\begin{equation}\label{Eq:subspace_phiAphi}
    \begin{aligned}
        \mathbb{E}_{\varphi}\left[ \bra{\varphi} A \ket{\varphi} \right]
        & = \frac{\tr(P A)}{d_{\rm sub}},
    \end{aligned}
\end{equation}
where $\mathbb{E}_{\varphi}[\cdot]$ is the expectation with respect to the random state $\ket{\varphi}$.
\end{corollary}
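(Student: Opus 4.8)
The plan is to deduce this directly from Lemma~\ref{lemma:subspace_haar_1design} by realizing the Haar-random state $\ket{\varphi}$ as the image of a fixed reference vector under the subspace-random unitary $U\in\mathbb{U}$. Concretely, fix any unit vector $\ket{\varphi_0}\in\mathcal{H}_{\rm sub}$. Since every $U\in\mathbb{U}$ has the block form $U=\bar{P}+PUP$ with $PUP$ Haar-distributed on $\mathcal{U}(d_{\rm sub})$, and since $P\ket{\varphi_0}=\ket{\varphi_0}$, the vector $U\ket{\varphi_0}=PUP\ket{\varphi_0}$ ranges over all unit vectors of $\mathcal{H}_{\rm sub}$ with exactly the Haar measure. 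Hence $\ket{\varphi}$ and $U\ket{\varphi_0}$ are identically distributed, and I may replace the expectation over $\ket{\varphi}$ by the expectation over $U\in\mathbb{U}$.

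With this substitution, linearity of the expectation gives
\[
\mathbb{E}_\varphi\left[\bra{\varphi}A\ket{\varphi}\right]
= \mathbb{E}_\mathbb{U}\left[\bra{\varphi_0}U^\dagger A U\ket{\varphi_0}\right]
= \bra{\varphi_0}\,\mathbb{E}_\mathbb{U}\!\left[U^\dagger A U\right]\,\ket{\varphi_0}.
\]
Now I insert the closed form from Lemma~\ref{lemma:subspace_haar_1design}, namely $\mathbb{E}_\mathbb{U}[U^\dagger A U]=\frac{\tr(PA)}{d_{\rm sub}}P+(I-P)A(I-P)$, and use that $\ket{\varphi_0}$ lies in the subspace so that $P\ket{\varphi_0}=\ket{\varphi_0}$ and $(I-P)\ket{\varphi_0}=0$. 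The second term annihilates $\ket{\varphi_0}$ on both sides, leaving only $\frac{\tr(PA)}{d_{\rm sub}}\braket{\varphi_0}{\varphi_0}=\frac{\tr(PA)}{d_{\rm sub}}$, which is the claim.

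An equivalent and perhaps more transparent route avoids the reference vector: write $\bra{\varphi}A\ket{\varphi}=\tr(A\,\ketbrasame{\varphi})$, pull the expectation inside the trace by linearity, and compute $\mathbb{E}_\varphi[\ketbrasame{\varphi}]=P/d_{\rm sub}$ (the normalized projector, i.e.\ the maximally mixed state on $\mathcal{H}_{\rm sub}$, which is the unique state invariant under all subspace unitaries). Then $\tr(A\,P/d_{\rm sub})=\tr(PA)/d_{\rm sub}$ gives the result immediately; the identity $\mathbb{E}_\varphi[\ketbrasame{\varphi}]=P/d_{\rm sub}$ is itself the specialization of Lemma~\ref{lemma:subspace_haar_1design} to $A\mapsto\ketbrasame{\varphi_0}$. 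There is essentially no obstacle here: the statement is a one-line corollary, and the only point requiring a moment of care is confirming that conjugating a fixed in-subspace vector by $\mathbb{U}$ reproduces exactly the Haar measure on unit vectors of $\mathcal{H}_{\rm sub}$, together with the bookkeeping that the orthogonal-complement term drops out.
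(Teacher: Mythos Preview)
Your proof is correct and follows essentially the same approach as the paper: fix a reference vector $\ket{\varphi_0}\in\mathcal{H}_{\rm sub}$, write $\ket{\varphi}=U\ket{\varphi_0}$ with $U\in\mathbb{U}$, apply Lemma~\ref{lemma:subspace_haar_1design}, and use $P\ket{\varphi_0}=\ket{\varphi_0}$ and $\bar{P}\ket{\varphi_0}=0$ to kill the complement term. Your additional remark via $\mathbb{E}_\varphi[\ketbrasame{\varphi}]=P/d_{\rm sub}$ is a nice alternative viewpoint not present in the paper but entirely equivalent.
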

\begin{proof}
Suppose $\ket{\varphi_0}$ is an arbitrary fixed state in $\mathcal{H}_{\rm sub}$. The random state $\ket{\varphi}$ can be written in terms of $U\in\mathbb{U}$ as $\ket{\varphi}=U\ket{\varphi_0}$ such that
\begin{equation}
    \mathbb{E}_{\varphi}\left[ \bra{\varphi} A \ket{\varphi} \right] = \mathbb{E}_{U\in\mathbb{U}}\left[ \bra{\varphi_0} U^\dagger A U \ket{\varphi_0} \right].
\end{equation}
Eq.~\eqref{Eq:subspace_phiAphi} is naturally obtained from Lemma~\ref{lemma:subspace_haar_1design} by taking the expectation over $\ket{\varphi_0}$ which satisfies $P\ket{\varphi_0}=\ket{\varphi_0}$ and $\bar{P}\ket{\varphi_0}=0$.
\end{proof}

\begin{lemma}\label{lemma:subspace_haar_2design}
For arbitrary linear operators $A,B,C$ on $\mathcal{H}$ and $U\in\mathbb{U}$, the following equality holds
\begin{equation}\label{Eq:subspace_haar_2design}
    \begin{aligned}
        & \mathbb{E}_\mathbb{U}\left[ U^\dagger A U B U^\dagger C U \right] \\
        & = \bar{P} A \bar{P} B \bar{P} C \bar{P} + \frac{\tr \left( P B \right)}{d_{\rm sub}} \bar{P} A P C \bar{P} + \frac{\tr \left( P C\right)}{d_{\rm sub}} \bar{P} A \bar{P} B P + \frac{\tr \left( P A\right)}{d_{\rm sub}} P B \bar{P} C \bar{P} \\
        & ~~~ + \frac{ \tr \left( P A \bar{P} B \bar{P} C \right) }{d_{\rm sub}} P + \frac{\tr(P A P C) \tr(P B)}{d_{\rm sub}^2} P \\ 
        & ~~~ + \frac{d_{\rm sub} \tr(P A)\tr(P C)-\tr(P A P C)}{d_{\rm sub}(d_{\rm sub}^2 - 1)}\left(P B P - \frac{\tr(P B)}{d_{\rm sub}} P \right).
    \end{aligned}
\end{equation}
\end{lemma}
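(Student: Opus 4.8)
The plan is to prove Eq.~\eqref{Eq:subspace_haar_2design} by the same block-decomposition strategy used in the block-matrix part of the proof of Lemma~\ref{lemma:subspace_haar_1design}, now carried to second order. First I would write $U=\bar{P}+W$ with $W:=PUP$, so that $U^\dagger=\bar{P}+W^\dagger$, and record the structural relations $\bar{P}W=W\bar{P}=0$, $\bar{P}^2=\bar{P}$, and $W=PWP$. These say that $\bar{P}$ is inert under averaging while $W$ acts as a genuinely Haar-random unitary on the $d_{\rm sub}$-dimensional space $\mathcal{H}_{\rm sub}$, so that any integral in $W$ reduces to a standard $\mathcal{U}(d_{\rm sub})$ Haar integral. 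Substituting into $U^\dagger A U B U^\dagger C U$ and expanding produces $2^4=16$ monomials, in each of which every one of the four unitary slots is replaced by either $\bar{P}$ or a factor $W^{(\dagger)}$.

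Next I would discard the monomials that average to zero. Since $W$ is Haar on $\mathcal{H}_{\rm sub}$, any monomial whose number of $W$ factors differs from its number of $W^\dagger$ factors is an inhomogeneous polynomial and integrates to zero, exactly as in Lemma~\ref{lemma:subspace_haar_inhomogenous}. Balancing the two dagger slots against the two non-dagger slots leaves precisely six surviving configurations: the $\bar{P}$-only term $\bar{P}A\bar{P}B\bar{P}C\bar{P}$, which needs no averaging; four ``balanced-by-one'' terms in which a single $W^\dagger$--$W$ pair is contracted; and the fully balanced term $W^\dagger A W B W^\dagger C W$.

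Then I would evaluate the surviving expectations by specializing the standard formulas to the subspace. Because $W=PWP$, each sandwich $W^\dagger M W$ (resp.\ $W M W^\dagger$) only sees $PMP$ and returns an operator supported on $\mathcal{H}_{\rm sub}$; the four single-pair terms are handled by the subspace $1$-design rule $\mathbb{E}_\mathbb{U}[W^\dagger M W]=\frac{\tr(PM)}{d_{\rm sub}}P$ underlying Lemma~\ref{lemma:subspace_haar_1design}, yielding the coefficients $\tr(PA)/d_{\rm sub}$, $\tr(PB)/d_{\rm sub}$, $\tr(PC)/d_{\rm sub}$, and $\tr(PA\bar{P}B\bar{P}C)/d_{\rm sub}$ attached to the correct $\bar{P}$--$P$ strings. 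The fully balanced term is the only genuinely second-order piece: here I would invoke Eq.~\eqref{Eq:2-degree_chain_integral} with $d\mapsto d_{\rm sub}$, $I\mapsto P$, and $A,B,C\mapsto PAP,PBP,PCP$, together with the simplifications $\tr(PAP)=\tr(PA)$ and $\tr(PAPCP)=\tr(PAPC)$, to recover the final two lines of the right-hand side. Collecting the six contributions reproduces Eq.~\eqref{Eq:subspace_haar_2design}.

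The main obstacle is the combinatorial bookkeeping of the sixteen monomials: one must correctly identify which six survive, track the precise left/right placement of each $\bar{P}$ and $P$ in every surviving string, and be careful that the middle operator $B$ enters the second-order term as $PBP$ rather than $B$, since it is flanked on both sides by subspace-supported operators $W^\dagger A W$ and $W^\dagger C W$. Once the surviving terms are pinned down, the remaining work—the trace identities $\tr(PMP)=\tr(PM)$ and the coefficient matching—is entirely routine.
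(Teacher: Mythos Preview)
Your proposal is correct and follows essentially the same approach as the paper: the paper writes $U$ in explicit $2\times 2$ block form with $I_{11}$ on the complement and $U_{22}$ Haar on $\mathcal{H}_{\rm sub}$, expands the product block by block, discards inhomogeneous terms, and then applies Eqs.~\eqref{Eq:1-degree_chain_intergral} and \eqref{Eq:2-degree_chain_integral} to the surviving pieces---which is exactly your decomposition $U=\bar{P}+W$ rewritten in index notation. The six surviving terms, their evaluations, and the final passage to projector form match line for line.
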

\begin{proof}
Here we simply employ the block matrix multiplication to prove this equality. We denote the $2\times2$ blocks with indices $\left(\begin{matrix}11&12\\21&22\end{matrix}\right)$ respectively where the index $2$ corresponds to $\mathcal{H}_{\rm sub}$. Thus the random unitary $U$ can be written as $U=\left(\begin{matrix}I_{11}&0\\0&U_{22}\end{matrix}\right)$ where $I_{11}$ is the identity matrix on the orthogonal complement of $\mathcal{H}_{\rm sub}$ and $U_{22}$ is a Haar-random unitary on $\mathcal{H}_{\rm sub}$. The integrand becomes
\begin{equation}
    U^\dagger A U B U^\dagger C U = \left(
    \begin{matrix}
        A_{11} & A_{12} U_{22} \\
        U_{22}^\dagger A_{21} & U_{22}^\dagger A_{22} U_{22}
    \end{matrix}
    \right) \left(
    \begin{matrix}
        B_{11} & B_{12}\\
        B_{21} & B_{22}
    \end{matrix}
    \right) \left(
    \begin{matrix}
        C_{11} & C_{12} U_{22} \\
        U_{22}^\dagger C_{21} & U_{22}^\dagger C_{22} U_{22}
    \end{matrix}
    \right).
\end{equation}
The four matrix elements of the multiplication results are
\begin{equation}
    \begin{aligned}
        & 11:~ A_{11} B_{11} C_{11} + A_{12} U_{22} B_{21} C_{11} + A_{11} B_{12} U_{22}^\dagger C_{11} + A_{12} U_{22} B_{22} U_{22}^\dagger C_{21}, \\
        & 12:~ A_{11} B_{11} C_{12} U_{22} + A_{12} U_{22} B_{21} C_{12} U_{22} + A_{11} B_{12} U_{22}^\dagger C_{22} U_{22} +  A_{12} U_{22} B_{22} U_{22}^\dagger C_{22} U_{22}, \\
        & 21:~ U_{22}^\dagger A_{21} B_{11} C_{11} + U_{22}^\dagger A_{22} U_{22} B_{21} C_{11} + U_{22}^\dagger A_{21} B_{12} U_{22}^\dagger C_{21} + U_{22}^\dagger A_{22} U_{22} B_{22} U_{22}^\dagger C_{21}, \\
        & 22:~ U_{22}^\dagger A_{21} B_{11} C_{12} U_{22} + U_{22}^\dagger A_{22} U_{22} B_{21} C_{12} U_{22} + U_{22}^\dagger A_{21} B_{12} U_{22}^\dagger C_{22} U_{22} \\
        & ~~~~~~ + U_{22}^\dagger A_{22} U_{22} B_{22} U_{22}^\dagger C_{22} U_{22}.
    \end{aligned}
\end{equation}
Since inhomogeneous Haar integrals always vanish on $\mathcal{H}_{\rm sub}$, the elements above can be reduced to
\begin{equation}
    \begin{aligned}
        & 11:~ A_{11} B_{11} C_{11} + A_{12} U_{22} B_{22} U_{22}^\dagger C_{21}, \\
        & 12:~ A_{11} B_{12} U_{22}^\dagger C_{22} U_{22}, ~~~~
        21:~ U_{22}^\dagger A_{22} U_{22} B_{21} C_{11}, \\
        & 22:~ U_{22}^\dagger A_{21} B_{11} C_{12} U_{22} + U_{22}^\dagger A_{22} U_{22} B_{22} U_{22}^\dagger C_{22} U_{22}.
    \end{aligned}
\end{equation}
Let $d_2=d_{\rm sub}=\dim\mathcal{H}_{\rm sub}$ and $I_{22}$ be the identity matrix in $\mathcal{H}_{\rm sub}$. Utilizing Eqs.~\eqref{Eq:1-degree_chain_intergral} and \eqref{Eq:2-degree_chain_integral}, the expectation of each block becomes
\begin{equation}
    \begin{aligned}
        & 11:~ A_{11} B_{11} C_{11} + \frac{\tr B_{22}}{d_{2}} A_{12} C_{21}, \\
        & 12:~ \frac{\tr C_{22}}{d_{2}} A_{11} B_{12}, ~~~~
        21:~ \frac{\tr A_{22}}{d_{2}} B_{21} C_{11}, \\
        & 22:~ \frac{ \tr \left( A_{21} B_{11} C_{12} \right) }{d_2} I_{22} + \frac{\tr(A_{22}C_{22})\tr(B_{22})}{d_{2}^2} I_{22} \\ 
        & ~~~~~~ + \frac{d_{2} \tr(A_{22})\tr(C_{22})-\tr(A_{22}C_{22})}{d_{2}(d_{2}^2-1)}\left(B_{22} - \frac{\tr(B_{22})}{d_{2}} I_{22}\right).
    \end{aligned}
\end{equation}
Written in terms of subspace projectors $P$ and $\bar{P}$,
the results become exactly as Eq.~\eqref{Eq:subspace_haar_2design}.
\end{proof}

\begin{corollary}\label{corollary:subspace_squared_expectation}
Suppose $\ket{\varphi}$ is a Haar-random pure state in $\mathcal{H}_{\rm sub}$. For arbitrary linear operators $A$ on $\mathcal{H}$, the following equality holds
\begin{equation}\label{Eq:subspace_phiAphi2}
    \begin{aligned}
        \mathbb{E}_{\varphi}\left[ \bra{\varphi} A \ket{\varphi}^2  \right]
        & = \frac{\tr( (P A)^2 ) + ( \tr(P A) )^2}{d_{\rm sub}(d_{\rm sub} + 1)},
    \end{aligned}
\end{equation}
where $\mathbb{E}_{\varphi}[\cdot]$ is the expectation with respect to the random state $\ket{\varphi}$.
\end{corollary}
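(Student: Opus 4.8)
The plan is to mirror the proof of Corollary~\ref{corollary:subspace_expectation}, but to replace the first-moment formula from Lemma~\ref{lemma:subspace_haar_1design} with the second-moment formula from Lemma~\ref{lemma:subspace_haar_2design}. First I would fix an arbitrary reference state $\ket{\varphi_0}\in\mathcal{H}_{\rm sub}$ and write the Haar-random state as $\ket{\varphi}=U\ket{\varphi_0}$ with $U\in\mathbb{U}$, so that
\begin{equation}
    \mathbb{E}_{\varphi}\left[\bra{\varphi}A\ket{\varphi}^2\right] = \mathbb{E}_\mathbb{U}\left[\bra{\varphi_0}U^\dagger A U \ketbrasame{\varphi_0} U^\dagger A U\ket{\varphi_0}\right].
\end{equation}
The key observation is that the squared expectation value can be grouped into a single operator string $U^\dagger A U\, B\, U^\dagger A U$ with the choices $B=\ketbrasame{\varphi_0}$ and $C=A$, which is exactly the form handled by Lemma~\ref{lemma:subspace_haar_2design}.

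Next I would apply Lemma~\ref{lemma:subspace_haar_2design} and then sandwich the resulting seven-term expression between $\bra{\varphi_0}$ and $\ket{\varphi_0}$. The simplification exploits the facts $P\ket{\varphi_0}=\ket{\varphi_0}$ and $\bar{P}\ket{\varphi_0}=0$: any term carrying a leading or trailing $\bar{P}$ is annihilated upon sandwiching, and the term whose coefficient contains the factor $\bar{P}B=\bar{P}\ketbrasame{\varphi_0}=0$ vanishes outright. This kills the first five terms of Eq.~\eqref{Eq:subspace_haar_2design}, leaving only the two contributions proportional to $P$ and to $\bigl(PBP-\tfrac{\tr(PB)}{d_{\rm sub}}P\bigr)$.

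Finally I would evaluate these two surviving terms using $\tr(PB)=\braketsame{\varphi_0}=1$, $PBP=\ketbrasame{\varphi_0}$, together with $\tr(PAPC)=\tr((PA)^2)$ and $\tr(PA)\tr(PC)=(\tr(PA))^2$ coming from $C=A$. The first survivor yields $\tr((PA)^2)/d_{\rm sub}^2$, while the second, after the sandwich produces the factor $\bra{\varphi_0}\bigl(PBP-\tfrac{1}{d_{\rm sub}}P\bigr)\ket{\varphi_0}=(d_{\rm sub}-1)/d_{\rm sub}$, yields $\bigl(d_{\rm sub}(\tr(PA))^2-\tr((PA)^2)\bigr)/\bigl(d_{\rm sub}^2(d_{\rm sub}+1)\bigr)$. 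Placing both over the common denominator $d_{\rm sub}^2(d_{\rm sub}+1)$, the $\tr((PA)^2)$ pieces recombine so that the numerator factors out a common $d_{\rm sub}$, leaving precisely $\bigl[\tr((PA)^2)+(\tr(PA))^2\bigr]/\bigl(d_{\rm sub}(d_{\rm sub}+1)\bigr)$.

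The argument is essentially bookkeeping rather than conceptual, so the main obstacle is purely organizational: correctly identifying which of the seven terms in Lemma~\ref{lemma:subspace_haar_2design} survive the projector conditions, and then carefully carrying out the algebraic recombination of the two that remain. The only point demanding attention is the correct matching $B=\ketbrasame{\varphi_0}$ and $C=A$ when invoking the lemma, after which everything follows from the orthogonality relations for $\ket{\varphi_0}$.
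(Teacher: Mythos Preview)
Your proposal is correct and follows exactly the same approach as the paper's proof: fix a reference state $\ket{\varphi_0}\in\mathcal{H}_{\rm sub}$, write $\ket{\varphi}=U\ket{\varphi_0}$, and invoke Lemma~\ref{lemma:subspace_haar_2design} with $C=A$ and $B=\ketbrasame{\varphi_0}$, using $\bar{P}B=B\bar{P}=0$, $PBP=B$, and $\tr(PB)=1$ to simplify. In fact you carry out the term-by-term bookkeeping in more detail than the paper, which simply states that the result ``is naturally obtained'' from these substitutions.
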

\begin{proof}
Suppose $\ket{\varphi_0}$ is an arbitrary fixed state in $\mathcal{H}_{\rm sub}$. The random state $\ket{\varphi}$ can be written in terms of $U\in\mathbb{U}$ as $\ket{\varphi}=U\ket{\varphi_0}$ such that
\begin{equation}
    \mathbb{E}_{\varphi}\left[ (\bra{\varphi} A \ket{\varphi})^2  \right] = \mathbb{E}_{U\in\mathbb{U}}\left[ \bra{\varphi_0} U^\dagger A U \ket{\varphi_0} \bra{\varphi_0} U^\dagger A U \ket{\varphi_0} \right].
\end{equation}
Eq.~\eqref{Eq:subspace_phiAphi2} is naturally obtained from Lemma~\ref{lemma:subspace_haar_2design} by taking $C=A$ and $B=\ketbrasame{\varphi_0}$ which satisfies $\bar{P}B=B\bar{P}=0$, $P B P=B$ and $\tr B=1$.
\end{proof}

\begin{lemma}\label{lemma:subspace_haar_2design_trace_multiplication}
For arbitrary linear operators $A,B,C,D$ on $\mathcal{H}$ and $U\in\mathbb{U}$, the following equality holds
\begin{equation}\label{Eq:subspace_haar_2design_trm}
\begin{aligned}
    & \mathbb{E}_\mathbb{U}\left[ \tr(U^\dagger A U B) \tr(U^\dagger C U D)\right] = \tr(\bar{P}A\bar{P}B)\tr(\bar{P}C\bar{P}D)\\
    &  + \frac{\tr(\bar{P}A\bar{P}B)\tr(PC)\tr(PD)}{d_{\rm sub}} + \frac{\tr(\bar{P}C\bar{P}D)\tr(PA)\tr(PB)}{d_{\rm sub}}\\
    & + \frac{\tr(PB\bar{P}APC\bar{P}D)}{d_{\rm sub}} + \frac{\tr(PA\bar{P}BPD\bar{P}C)}{d_{\rm sub}} \\
    & + \frac{\tr(PA)\tr(PB)\tr(PC)\tr(PD) + \tr(PAPC)\tr(PBPD)}{d_{\rm sub}^2 -1} \\
    & - \frac{\tr(PAPC)\tr(PB)\tr(PD) + \tr(PA)\tr(PC)\tr(PBPD)}{d_{\rm sub}(d_{\rm sub}^2 -1)}.
\end{aligned}
\end{equation}
\end{lemma}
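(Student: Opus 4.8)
The plan is to reuse the block-matrix method from the proof of Lemma~\ref{lemma:subspace_haar_2design}. Recall that the block structure $U=\bar{P}+PUP$ means $U=\mathrm{diag}(I_{11},U_{22})$, where the index $2$ labels $\mathcal{H}_{\rm sub}$ and $U_{22}$ is Haar-random on $\mathcal{H}_{\rm sub}$, while $I_{11}$ is the identity on the orthogonal complement. First I would expand each of the two traces into its blocks, keeping only terms that can survive,
\begin{equation*}
\tr(U^\dagger A U B)=\tr(A_{11}B_{11})+\tr(A_{12}U_{22}B_{21})+\tr(U_{22}^\dagger A_{21}B_{12})+\tr(U_{22}^\dagger A_{22}U_{22}B_{22}),
\end{equation*}
and write the analogous four-term expansion for $\tr(U^\dagger C U D)$ with $A,B\to C,D$.

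Multiplying the two four-term sums yields sixteen products, and the key simplification is a degree-counting argument: since inhomogeneous Haar integrals vanish, a monomial survives the average over $U_{22}$ only when its total degree in $U_{22}$ equals its total degree in $U_{22}^\dagger$. Classifying each factor as degree $(0,0)$, $(1,0)$, $(0,1)$, or $(1,1)$ in $(U_{22},U_{22}^\dagger)$, exactly six of the sixteen products are balanced: the constant$\,\times\,$constant term, the two constant$\,\times\,$$(1,1)$ cross terms, the product of the two $(1,1)$-homogeneous terms, and the two products pairing a $(1,0)$ factor with a $(0,1)$ factor; the other ten vanish by degree imbalance.

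To evaluate the survivors I would apply the first-moment formula Eq.~\eqref{Eq:1-degree_chain_intergral} to the two constant$\,\times\,$$(1,1)$ terms, producing factors $\tr(A_{22})\tr(B_{22})/d_{\rm sub}$; recast each linear$\,\times\,$linear cross term, via cyclicity of the trace, into the form $\tr(XU_{22})\tr(U_{22}^\dagger Y)$ with $X,Y$ supported on $\mathcal{H}_{\rm sub}$ and evaluate it by Eq.~\eqref{Eq:1-degree_trace_multiply_intergral}; and recognize the doubly-homogeneous product as exactly the two-moment trace formula Eq.~\eqref{Eq:2-degree_multi_trace_integral} with $d\to d_{\rm sub}$. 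Finally I would translate all block quantities back into projector notation using $A_{11}=\bar P A\bar P$, $A_{22}=PAP$, $A_{12}=\bar P A P$, $A_{21}=PA\bar P$ (and likewise for $B,C,D$), together with $\tr(A_{22})=\tr(PA)$ and $\tr(A_{22}C_{22})=\tr(PAPC)$, and for the mixed terms collapse products such as $PB\bar P\cdot\bar P A P\cdot PC\bar P\cdot\bar P D P=PB\bar P A P C\bar P D P$ using $P^2=P$, $\bar P^2=\bar P$, so that they match the two mixed traces on the right-hand side of Eq.~\eqref{Eq:subspace_haar_2design_trm}.

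The argument involves no deep idea, so the main obstacle is purely bookkeeping: correctly enumerating the sixteen products, certifying the degree balance of each, and faithfully converting between block indices and projector expressions while applying cyclicity and idempotence consistently. A useful consistency check is that setting $\mathcal{H}_{\rm sub}=\mathcal{H}$ (so $P=I$, $\bar P=0$) collapses Eq.~\eqref{Eq:subspace_haar_2design_trm} back to the full-space identity Eq.~\eqref{Eq:2-degree_multi_trace_integral}, and that each of the six surviving contributions lands on a distinct term of the claimed right-hand side.
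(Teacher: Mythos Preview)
Your proposal is correct and follows essentially the same route as the paper's proof: block-expand each trace into four terms, discard the ten degree-imbalanced cross products, and evaluate the six survivors using Eqs.~\eqref{Eq:1-degree_chain_intergral}, \eqref{Eq:1-degree_trace_multiply_intergral} and \eqref{Eq:2-degree_multi_trace_integral} before rewriting in projector notation. The paper's version is terser but the underlying computation is identical.
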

\begin{proof}
Similarly with the proof of Lemma~\ref{lemma:subspace_haar_2design}, the block matrix multiplication gives
\begin{equation}
\begin{aligned}
    &\tr(U^\dagger A U B) = 
    \tr\left[\left(\begin{matrix}
        A_{11} & A_{12} U_{22} \\
        U_{22}^\dagger A_{21} & U_{22}^\dagger A_{22} U_{22}
    \end{matrix}\right)
    \left(\begin{matrix}
        B_{11} & B_{12} \\
        B_{21} & B_{22}
    \end{matrix}\right)\right] \\
    &= \tr(A_{11}B_{11}) + \tr(A_{12}U_{22}B_{21}) + \tr(U_{22}^\dagger A_{21}B_{12}) + \tr(U_{22}^\dagger A_{22} U_{22} B_{22}).
\end{aligned}
\end{equation}
Hence we have
\begin{equation}
\begin{aligned}
    &\mathbb{E}_\mathbb{U}\left[\tr(U^\dagger A U B)\tr(U^\dagger C U D)\right] = \mathbb{E}_\mathbb{U}\big[\tr(A_{11}B_{11})\tr(C_{11}D_{11}) \\
    &+ \tr(A_{11}B_{11})\tr(U_{22}^\dagger C_{22} U_{22} D_{22}) + \tr(C_{11}D_{11})\tr(U_{22}^\dagger A_{22} U_{22} B_{22}) \\
    &+ \tr(A_{12}U_{22}B_{21})\tr(U_{22}^\dagger C_{21} D_{12}) + \tr(U_{22}^\dagger A_{21} B_{12})\tr(C_{12} U_{22}D_{21}) \\
    &+ \tr(U_{22}^\dagger A_{22} U_{22} B_{22})\tr(U_{22}^\dagger C_{22} U_{22} D_{22})\big].
\end{aligned}
\end{equation}
where all inhomogeneous terms have been ignored. Utilizing Eqs.~\eqref{Eq:1-degree_chain_intergral}, \eqref{Eq:1-degree_trace_multiply_intergral} and \eqref{Eq:2-degree_multi_trace_integral}, the expectation becomes
\begin{equation}
\begin{aligned}
    & \mathbb{E}_\mathbb{U}\left[\tr(U^\dagger A U B)\tr(U^\dagger C U D)\right] = \tr(A_{11}B_{11})\tr(C_{11}D_{11}) \\
    & + \frac{\tr(A_{11}B_{11})\tr(C_{22})\tr(D_{22})}{d_2} + \frac{\tr(C_{11} D_{11})\tr(A_{22})\tr(B_{22})}{d_2}\\
    &+ \frac{\tr(B_{21}A_{12}C_{21}D_{12})}{d_2} + \frac{\tr(A_{21}B_{12}D_{21}C_{12})}{d_2} \\
    &+ \frac{1}{d_2^2 - 1}(\tr(A_{22})\tr(B_{22})\tr(C_{22})\tr(D_{22}) + \tr(A_{22}C_{22})\tr(B_{22}D_{22}))\\
    &- \frac{1}{d_2(d_2^2-1)}(\tr(A_{22}C_{22})\tr(B_{22})\tr(D_{22}) + \tr(A_{22})\tr(C_{22})\tr(B_{22}D_{22}))
\end{aligned}
\end{equation}
Written in terms of subspace projectors $P$ and $\bar{P}$,
the results become exactly as Eq.~\eqref{Eq:subspace_haar_2design_trm}.
\end{proof}

Finally, similar to the unitary $t$-design, we introduce the concept of ``subspace $t$-design''. If an ensemble $\mathbb{W}$ of unitaries $V$ matches the ensemble $\mathbb{U}$ rotating the subspace $\mathcal{H}_{\rm sub}$ up to the $t$-degree moment, then $\mathbb{W}$ is called a subspace unitary $t$-design with respect to $\mathcal{H}_{\rm sub}$. In the main text, the ensemble comes from the unknown target state. Alternatively, if a random QNN $\mathbf{U}(\bm{\theta})$ with some constraints such as keeping the loss function constant $\mathcal{L}(\bm{\theta})=\mathcal{L}_0$, i.e.,
\begin{equation}
    \mathbb{W} = \mathbf{U}(\Theta),\quad\Theta=\{\bm{\theta}\mid \mathcal{L}(\bm{\theta})=\mathcal{L}_0\},
\end{equation}
forms a approximate subspace $2$-design, then similar results as in the main text can be established yet with a different interpretation: there is an exponentially large proportion of local minima on a constant-loss-section of the training landscape.

\subsection{Perturbation on positive definite matrices}\label{appendix:perturbation_pos}
To identify whether a parameter point is a local minimum, we need to check whether the Hessian matrix is positive definite, where the following sufficient condition is used in the proof of our main theorem in the next section.
\begin{lemma}\label{lemma:perturb_positive_definite}
Suppose $X$ is a positive definite matrix and $Y$ is a Hermitian matrix. If the distance between $Y$ and $X$ is smaller than the minimal eigenvalue of $X$, i.e., $\|Y-X\|_{\infty}<\|X^{-1}\|_{\infty}^{-1}$, then $Y$ is positive definite. Here $\|\cdot\|_{\infty}$ denotes the Schatten-$\infty$ norm.
\end{lemma}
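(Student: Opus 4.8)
The plan is to reduce the statement to the variational (Rayleigh quotient) characterization of the smallest eigenvalue together with one elementary estimate. First I would observe that since $X$ is positive definite it is invertible, and all its eigenvalues are strictly positive; consequently the Schatten-$\infty$ norm of $X^{-1}$ equals the largest eigenvalue of $X^{-1}$, which is the reciprocal of the smallest eigenvalue of $X$. Hence $\|X^{-1}\|_\infty^{-1} = \lambda_{\min}(X)$, the minimal eigenvalue of $X$, and the hypothesis reads simply $\|Y-X\|_\infty < \lambda_{\min}(X)$. The goal then becomes showing $\lambda_{\min}(Y) > 0$.

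Next, for an arbitrary unit vector $\ket{v}$ I would decompose $\bra{v}Y\ket{v} = \bra{v}X\ket{v} + \bra{v}(Y-X)\ket{v}$ and bound each term. The first satisfies $\bra{v}X\ket{v} \geq \lambda_{\min}(X)$ by positive definiteness, while for the second the Hermiticity of $Y-X$ gives $|\bra{v}(Y-X)\ket{v}| \leq \|Y-X\|_\infty$, since the operator norm controls the quadratic form on unit vectors. Combining these with the hypothesis yields $\bra{v}Y\ket{v} \geq \lambda_{\min}(X) - \|Y-X\|_\infty > 0$, and since $\ket{v}$ was arbitrary this establishes $Y \succ 0$, i.e., $Y$ is positive definite.

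The argument is entirely elementary and contains no genuinely hard step. The only point that must be handled with care — and which I expect to be the sole source of confusion — is the translation $\|X^{-1}\|_\infty^{-1} = \lambda_{\min}(X)$, which crucially uses positive definiteness of $X$ (hence invertibility with a strictly positive spectrum). Equivalently, one could invoke Weyl's perturbation inequality $\lambda_{\min}(Y) \geq \lambda_{\min}(X) - \|Y-X\|_\infty$ to bypass the Rayleigh-quotient computation, but the direct estimate above is self-contained and avoids citing an external result.
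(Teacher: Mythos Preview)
Your proposal is correct and follows essentially the same approach as the paper: both decompose $\bra{v}Y\ket{v}$ into $\bra{v}X\ket{v} + \bra{v}(Y-X)\ket{v}$, bound the first term below by $\lambda_{\min}(X)=\|X^{-1}\|_\infty^{-1}$ and the second by $\|Y-X\|_\infty$, and conclude positivity from the hypothesis. Your write-up is in fact slightly more careful in making explicit that $\ket{v}$ is a unit vector and in justifying the identity $\|X^{-1}\|_\infty^{-1}=\lambda_{\min}(X)$.
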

\begin{proof}
For an arbitrary vector $\ket{v}$, we have
\begin{equation}
    \braoprket{v}{Y}{v} = \braoprket{v}{X}{v} + \braoprket{v}{Y-X}{v} \geq \|X^{-1}\|_{\infty}^{-1} - \|Y-X\|_{\infty} > 0.
\end{equation}
Note that $\|X^{-1}\|_{\infty}^{-1}$ just represents the minimal eigenvalue of the positive matrix $X$. Thus, $Y$ is positive definite.
\end{proof}

\subsection{Tail inequalities}\label{appendix:tail_inequalities}
In order to bound the probability of avoiding local minima, we need to use some ``tail inequalities'' in probability theory, especially the generalized Chebyshev's inequality for matrices, which we summarize below for clarity.
\begin{lemma}\label{lemma:markov_inequality}
{\rm(Markov's inequality)}
For a non-negative random variable $X$ and $a>0$, the probability that $X$ is at least $a$ is upper bounded by the expectation of $X$ divided by $a$, i.e.,
\begin{equation}
    \opnm{Pr}[X\geq a] \leq \frac{\mathbb{E}[X]}{a}.
\end{equation}
\end{lemma}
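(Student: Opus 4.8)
The plan is to derive Markov's inequality from a single elementary pointwise bound together with the monotonicity and linearity of expectation. The key observation is that, because $X$ is non-negative, the inequality $X \geq a\,\mathbbm{1}[X\geq a]$ holds outcome-by-outcome: on the event $\{X\geq a\}$ we have $X \geq a = a\cdot 1$, while on its complement we have $X \geq 0 = a\cdot 0$ (here we use $X\geq 0$ crucially, so that outcomes where $X$ is small cannot violate the bound). Thus in every case the random variable $X$ dominates the scaled indicator $a\,\mathbbm{1}[X\geq a]$.

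Next I would take expectations of both sides of this pointwise inequality. Monotonicity of expectation gives $\mathbb{E}[X] \geq \mathbb{E}\bigl[a\,\mathbbm{1}[X\geq a]\bigr]$, and linearity together with the fact that the expectation of an indicator equals the probability of its underlying event yields $\mathbb{E}\bigl[a\,\mathbbm{1}[X\geq a]\bigr] = a\,\opnm{Pr}[X\geq a]$. Combining these produces the intermediate estimate $\mathbb{E}[X] \geq a\,\opnm{Pr}[X\geq a]$. Dividing through by $a>0$ then delivers the stated bound $\opnm{Pr}[X\geq a] \leq \mathbb{E}[X]/a$.

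There is no substantive obstacle here; the statement is a classical one. The only two points that warrant a remark are, first, that the derivation genuinely requires $X\geq 0$ (without it the pointwise inequality can fail), and second, that no integrability hypothesis is needed: if $\mathbb{E}[X] = +\infty$ the bound is vacuously true, so the argument applies whether or not $X$ is integrable. An equivalent route, should one prefer to avoid indicators, is to write $\mathbb{E}[X] = \int_0^{\infty} x\, dF(x) \geq \int_a^{\infty} x\, dF(x) \geq a \int_a^{\infty} dF(x) = a\,\opnm{Pr}[X\geq a]$, where $F$ is the distribution function of $X$; this makes the same non-negativity and truncation steps explicit at the level of the integral.
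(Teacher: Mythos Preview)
Your proof is correct and follows the standard indicator-function route: bound $X \geq a\,\mathbbm{1}[X\geq a]$ pointwise, take expectations, and divide by $a$. The paper's proof is a minor notational variant of the same idea, writing $\mathbb{E}[X] = \opnm{Pr}[X<a]\,\mathbb{E}[X\mid X<a] + \opnm{Pr}[X\geq a]\,\mathbb{E}[X\mid X\geq a]$ via the law of total expectation, dropping the first term by non-negativity, and bounding $\mathbb{E}[X\mid X\geq a]\geq a$; your indicator argument is if anything slightly cleaner since it sidesteps any need to define the conditional expectation when $\opnm{Pr}[X\geq a]=0$.
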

\begin{proof}
The expectation can be rewritten and bounded as
\begin{equation}\label{Eq:markov_inequality_proof}
\begin{aligned}
    \mathbb{E}[X] &= \opnm{Pr}[X<a] \cdot \mathbb{E}[X \mid X<a] + \opnm{Pr}[X \geq a] \cdot \mathbb{E}[X \mid X \geq a] \\
    &\geq \opnm{Pr}[X \geq a] \cdot \mathbb{E}[X \mid X \geq a] \geq \opnm{Pr}[X \geq a] \cdot a.
\end{aligned}
\end{equation}
Thus we have $\opnm{Pr}[X \geq a]\leq\mathbb{E}[X]/a$.
\end{proof}
\begin{lemma}\label{lemma:chebyshev_inequality}
{\rm(Chebyshev's inequality)}
For a real random variable $X$ and $\varepsilon>0$, the probability that $X$ deviates from the expectation $\mathbb{E}[X]$ by $\varepsilon$ is upper bounded by the variance of $X$ divided by $\varepsilon^2$, i.e.,
\begin{equation}
    \opnm{Pr}[|X-\mathbb{E}[X]|\geq\varepsilon] \leq \frac{\var[X]}{\varepsilon^2}.
\end{equation}
\end{lemma}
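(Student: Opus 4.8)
The plan is to reduce Chebyshev's inequality to Markov's inequality (Lemma~\ref{lemma:markov_inequality}), which has just been established, by applying the latter to a suitable non-negative random variable. The key observation is that the event $\{|X-\mathbb{E}[X]|\geq\varepsilon\}$ coincides \emph{exactly} with the event $\{(X-\mathbb{E}[X])^2\geq\varepsilon^2\}$: since both $|X-\mathbb{E}[X]|$ and $\varepsilon$ are non-negative and squaring is monotone increasing on $[0,\infty)$, the inequality is preserved in both directions, so the two events are identical (not merely one contained in the other).

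First I would introduce the auxiliary random variable $Y=(X-\mathbb{E}[X])^2$. By construction $Y$ is non-negative, so Markov's inequality applies to it. Taking the threshold $a=\varepsilon^2>0$ yields $\opnm{Pr}[Y\geq\varepsilon^2]\leq\mathbb{E}[Y]/\varepsilon^2$. Next I would identify $\mathbb{E}[Y]$ with the variance: by the very definition of variance, $\mathbb{E}[(X-\mathbb{E}[X])^2]=\var[X]$. Combining the event equality of the previous paragraph with the Markov bound then gives $\opnm{Pr}[|X-\mathbb{E}[X]|\geq\varepsilon]=\opnm{Pr}[Y\geq\varepsilon^2]\leq\var[X]/\varepsilon^2$, which is the desired conclusion.

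I expect no genuine obstacle here, as the argument is entirely routine once Markov's inequality is in hand. The only point requiring the slightest care is the passage between the absolute-value event and the squared event; this relies on $\varepsilon>0$ so that squaring produces an equivalence of events rather than a one-sided inclusion, which is what makes the resulting bound exactly $\var[X]/\varepsilon^2$ rather than a looser estimate obtained by passing to a strictly larger event.
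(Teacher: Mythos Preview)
Your proposal is correct and follows essentially the same approach as the paper: apply Markov's inequality to the non-negative random variable $(X-\mathbb{E}[X])^2$ with threshold $\varepsilon^2$, identify the event $\{|X-\mathbb{E}[X]|\geq\varepsilon\}$ with $\{(X-\mathbb{E}[X])^2\geq\varepsilon^2\}$, and recognize $\mathbb{E}[(X-\mathbb{E}[X])^2]=\var[X]$. Your additional remark justifying that the two events are equal (rather than merely nested) is a welcome bit of care but does not deviate from the paper's argument.
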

\begin{proof}
Applying Markov's inequality in Lemma~\ref{lemma:markov_inequality} to the random variable $(X-\mathbb{E}[X])^2$ gives
\begin{equation}
    \opnm{Pr}[|X-\mathbb{E}[X]|\geq\varepsilon] = \opnm{Pr}[(X-\mathbb{E}[X])^2\geq\varepsilon^2] \leq \frac{\mathbb{E}[(X-\mathbb{E}[X])^2]}{\varepsilon^2}=\frac{\var[X]}{\varepsilon^2}.
\end{equation}
Alternatively, the proof can be carried out similarly as in Eq.~(\ref{Eq:markov_inequality_proof}) with respect to $(X-\mathbb{E}[X])^2$.
\end{proof}
\begin{lemma}\label{lemma:chebyshev_inequality_matrix}
{\rm(Chebyshev's inequality for matrices)}
For a random matrix $X$ and $\varepsilon>0$, the probability that $X$ deviates from the expectation $\mathbb{E}[X]$ by $\varepsilon$ in terms of the norm $\|\cdot\|_{\alpha}$ satisfies
\begin{equation}
    \opnm{Pr}\left[\|X-\mathbb{E}[X]\|_\alpha\geq \varepsilon \right]\leq\frac{\sigma_\alpha^2}{\varepsilon^2}
\end{equation}
where $\sigma_\alpha^2=\mathbb{E}[\|X-\mathbb{E}[X]\|_\alpha^2]$ denotes the variance of $X$ in terms of the norm $\|\cdot\|_\alpha$.
\end{lemma}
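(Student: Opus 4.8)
The plan is to recognize that this matrix version is only superficially about matrices: the quantity $\|X - \mathbb{E}[X]\|_\alpha$ is a single non-negative real-valued random variable, so the whole statement collapses to a scalar tail bound. First I would introduce the shorthand $Z := \|X - \mathbb{E}[X]\|_\alpha$ and record that $Z \geq 0$ almost surely, since every Schatten norm is non-negative. This is the only place where the matrix structure enters, and it enters trivially: once the centered matrix $X - \mathbb{E}[X]$ is passed through $\|\cdot\|_\alpha$, the result is an ordinary random variable on $[0,\infty)$.

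Next I would reduce the event of interest to a statement about $Z^2$. Because $Z \geq 0$ and $\varepsilon > 0$, squaring is monotone on the relevant range, so the events $\{Z \geq \varepsilon\}$ and $\{Z^2 \geq \varepsilon^2\}$ coincide. Hence $\opnm{Pr}\left[\|X - \mathbb{E}[X]\|_\alpha \geq \varepsilon\right] = \opnm{Pr}\left[Z^2 \geq \varepsilon^2\right]$, which converts the norm-deviation event into a tail event for a non-negative scalar.

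Then I would apply Markov's inequality (Lemma~\ref{lemma:markov_inequality}) to the non-negative random variable $Z^2$ with threshold $a = \varepsilon^2$, yielding $\opnm{Pr}\left[Z^2 \geq \varepsilon^2\right] \leq \mathbb{E}[Z^2]/\varepsilon^2$. Since the matrix ``variance'' is defined in the statement precisely as $\sigma_\alpha^2 = \mathbb{E}[\|X - \mathbb{E}[X]\|_\alpha^2] = \mathbb{E}[Z^2]$, the right-hand side is exactly $\sigma_\alpha^2/\varepsilon^2$, which completes the argument.

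I do not expect any genuine obstacle here; the argument runs parallel to the scalar Chebyshev proof in Lemma~\ref{lemma:chebyshev_inequality}, and the only conceptual point worth flagging is that one must take the definition of $\sigma_\alpha^2$ at face value, namely as the expected squared norm of the centered matrix, rather than as the norm of some covariance operator. With that definition in place, no commutation, eigenvalue-ordering, or covariance-structure issues arise, and the bound follows immediately from the scalar machinery already established earlier in this subsection.
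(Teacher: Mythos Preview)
Your proposal is correct and matches the paper's proof essentially line for line: the paper also applies Markov's inequality (Lemma~\ref{lemma:markov_inequality}) to the non-negative scalar $\|X-\mathbb{E}[X]\|_\alpha^2$, uses the equivalence $\{\|X-\mathbb{E}[X]\|_\alpha \geq \varepsilon\} = \{\|X-\mathbb{E}[X]\|_\alpha^2 \geq \varepsilon^2\}$, and identifies the resulting bound with $\sigma_\alpha^2/\varepsilon^2$. Your remark that $\sigma_\alpha^2$ must be read as the expected squared norm rather than the norm of a covariance operator is exactly the clarification the paper appends at the end of its proof.
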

\begin{proof}
Applying Markov's inequality in Lemma~\ref{lemma:markov_inequality} to the random variable $\|X-\mathbb{E}[X]\|_\alpha^2$ gives
\begin{equation}
    \opnm{Pr}[\|X-\mathbb{E}[X]\|_\alpha \geq\varepsilon] = \opnm{Pr}[\|X-\mathbb{E}[X]\|_\alpha^2\geq\varepsilon^2] \leq \frac{\mathbb{E}[\|X-\mathbb{E}[X]\|_\alpha^2]}{\varepsilon^2}=\frac{\sigma_\alpha^2}{\varepsilon^2}.
\end{equation}
Note that here the expectation $\mathbb{E}[X]$ is still a matrix while the ``variance'' $\sigma_\alpha^2$ is a real number.
\end{proof}

\subsection{Quantum Fisher information matrix}\label{appendix:qfim}
Given a parameterized pure quantum state $\ket{\psi(\bm{\theta})}$, the quantum Fisher information (QFI) matrix $\mathcal{F}_{\mu\nu}$~\cite{Meyer2021fisher} is defined as the Riemannian metric induced from the Bures fidelity distance $d_{\rm f}(\bm{\theta},\bm{\theta}')=1-|\braket{\psi(\bm{\theta})}{\psi(\bm{\theta}')}|^2$ (up to a factor $2$ depending on convention), i.e., 
\begin{equation}
    \mathcal{F}_{\mu\nu}(\bm{\theta}) = \left.\frac{\partial^2}{\partial\delta_\mu \partial\delta_\nu} d_{\rm f}(\bm{\theta},\bm{\theta}+\bm{\delta})\right\vert_{\bm{\delta}=0} = -2\opnm{Re}\left[\braket{\partial_\mu\partial_\nu\psi}{\psi} + \braket{\partial_\mu\psi}{\psi}\braket{\psi}{\partial_\nu\psi}\right].
\end{equation}
Note that $\ket{\partial_\mu\psi}$ actually refers to $\frac{\partial}{\partial\theta_\mu}\ket{\psi(\bm{\theta})}$. Using the normalization condition
\begin{equation}
\begin{aligned}
    & \braketsame{\psi} = 1, \\
    & \partial_\mu(\braketsame{\psi}) = \braket{\partial_\mu\psi}{\psi} + \braket{\psi}{\partial_\mu\psi} = 2\opnm{Re}\left[\braket{\partial_\mu\psi}{\psi}\right] = 0, \\
    & \partial_\mu\partial_\nu(\braketsame{\psi}) = 2\opnm{Re}\left[\braket{\partial_\mu\partial_\nu\psi}{\psi} + \braket{\partial_\mu\psi}{\partial_\nu\psi}\right] = 0,\\
\end{aligned}
\end{equation}
the QFI can be rewritten as
\begin{equation}
    \mathcal{F}_{\mu\nu} = 2\opnm{Re}\left[\braket{\partial_\mu\psi}{\partial_\nu\psi} - \braket{\partial_\mu\psi}{\psi}\braket{\psi}{\partial_\nu\psi}\right].
\end{equation}
The QFI characterizes the sensibility of a parameterized quantum state to a small change of parameters, and can be viewed as the real part of the quantum geometric tensor.

\section{Detailed proofs}\label{appendix:detailed_proofs}
In this section, we provide the detailed proofs of Lemma~\ref{lemma:fidelity_loss_exp_var_grad_hessian}, Theorem~\ref{theorem:local_min} and Proposition~\ref{proposition:fidelity_loss_exp_var} in the main text. Here we use $d$ to denote the dimension of the Hilbert space. For a qubit system with $N$ qubits, we have $d=2^N$. As in the main text, we represent the value of a certain function at $\bm{\theta}=\bm{\theta}^*$ by appending the superscript ``$*$'' for simplicity of notation, e.g., $\left.\nabla \mathcal{L}\right\vert_{\bm{\theta}=\bm{\theta}^*}$ as $\nabla \mathcal{L}^*$ and $\left.H_\mathcal{L}\right\vert_{\bm{\theta}=\bm{\theta}^*}$ as $H_\mathcal{L}^*$. In addition, for a parameterized quantum circuit $\mathbf{U}(\bm{\theta})=\prod_{\mu=1}^M U_\mu(\theta_\mu) W_\mu$, we introduce the notation $V_{\alpha\rightarrow\beta}=\prod_{\mu=\alpha}^{\beta} U_\mu W_\mu$ if $\alpha\leq\beta$ and $V_{\alpha\rightarrow\beta}=I$ if $\alpha>\beta$. Note that the product $\prod_\mu$ is by default in the increasing order from the right to the left. The derivative with respect to the parameter $\theta_\mu$ is simply denoted as $\partial_\mu=\frac{\partial}{\partial\theta_\mu}$. We remark that our results hold for all kinds of input states into QNNs in spite that we use $\ket{0}^{\otimes N}$ in the definition of $\ket{\psi(\bm{\theta})}$ for simplicity.

\renewcommand\theproposition{\textcolor{blue}{1}}
\setcounter{proposition}{\arabic{proposition}-1}
\begin{lemma}
The expectation and variance of the gradient $\nabla\mathcal{L}$ and Hessian matrix $H_\mathcal{L}$ of the fidelity loss function $\mathcal{L}(\bm{\theta})=1-|\braket{\phi}{\psi(\bm{\theta})}|^2$ at $\bm{\theta}=\bm{\theta}^*$ with respect to the target state ensemble $\mathbb{T}$ satisfy
\begin{align}
    &\mathbb{E}_{\mathbb{T}} \left[ \nabla \mathcal{L}^* \right] = 0,\quad \var_{\mathbb{T}}[\partial_\mu \mathcal{L}^*] = f_1(p,d) \mathcal{F}_{\mu\mu}^*. \\
    &\mathbb{E}_{\mathbb{T}} \left[ H_\mathcal{L}^* \right] = \frac{d p^2-1}{d-1} \mathcal{F}^*, \quad \var_{\mathbb{T}}[\partial_\mu\partial_\nu \mathcal{L}^*] \leq f_2(p,d) \|\Omega_\mu\|_\infty^2 \|\Omega_\nu\|_\infty^2.
\end{align}
where $\mathcal{F}$ denote the QFI matrix. $f_1$ and $f_2$ are functions of the overlap $p$ and the Hilbert space dimension $d$, i.e.,
\begin{equation}
    f_1(p,d) = \frac{p^2(1-p^2)}{d-1},\quad f_2(p,d) = \frac{32(1-p^2)}{d-1}\left[p^2 + \frac{2(1-p^2)}{d}\right].
\end{equation}
\end{lemma}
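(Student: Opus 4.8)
The plan is to reduce everything to first- and second-moment subspace Haar integrals over the random component $\ket{\psi^\perp}$, evaluated with $P = I - \ketbrasame{\psi^*}$ the projector onto $\mathcal{H}^\perp$ and $d_{\rm sub} = d-1$. First I would record the derivative identities for $\mathcal{L} = 1 - \braket{\phi}{\psi}\braket{\psi}{\phi}$: differentiating once gives $\partial_\mu\mathcal{L} = -2\opnm{Re}[\braket{\phi}{\partial_\mu\psi}\braket{\psi}{\phi}]$, and twice gives $\partial_\mu\partial_\nu\mathcal{L} = -2\opnm{Re}[\braket{\phi}{\partial_\mu\partial_\nu\psi}\braket{\psi}{\phi}] - 2\opnm{Re}[\braket{\phi}{\partial_\mu\psi}\braket{\partial_\nu\psi}{\phi}]$. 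Evaluating at $\bm{\theta}^*$ and substituting $\ket{\phi}=p\ket{\psi^*}+\sqrt{1-p^2}\ket{\psi^\perp}$, I would use $\braket{\psi^*}{\psi^\perp}=0$ (so $\braket{\psi^*}{\phi}=p$) and the normalization identity $\opnm{Re}\braket{\psi^*}{\partial_\mu\psi^*}=0$. The key structural observation is that $\ket{\psi^\perp}$ enters as a polynomial in its amplitudes, and phase-invariance of the Haar measure on $\mathcal{H}^\perp$ (sending $\ket{\psi^\perp}\to e^{i\alpha}\ket{\psi^\perp}$) kills every monomial whose holomorphic and antiholomorphic degrees differ. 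This makes all first moments of $\ket{\psi^\perp}$ vanish and reduces the surviving expectations to the closed forms of Corollary~\ref{corollary:subspace_expectation} and Corollary~\ref{corollary:subspace_squared_expectation}.

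For the gradient, the substitution yields $\partial_\mu\mathcal{L}^* = -2p\sqrt{1-p^2}\,\opnm{Re}\braket{\psi^\perp}{\partial_\mu\psi^*}$, whose expectation vanishes by the first-moment argument, giving $\mathbb{E}_{\mathbb{T}}[\nabla\mathcal{L}^*]=0$. For the variance I would use $(\opnm{Re}\braket{\psi^\perp}{\partial_\mu\psi^*})^2 = \tfrac12|\braket{\psi^\perp}{\partial_\mu\psi^*}|^2$ after discarding the phase-unbalanced squared terms, apply Corollary~\ref{corollary:subspace_expectation} to $\bra{\psi^\perp}\ketbrasame{\partial_\mu\psi^*}\ket{\psi^\perp}$, and recognize $\tr(P\,\ketbrasame{\partial_\mu\psi^*}) = \braketsame{\partial_\mu\psi^*} - |\braket{\psi^*}{\partial_\mu\psi^*}|^2 = \tfrac12\mathcal{F}_{\mu\mu}^*$; this produces $\var_{\mathbb{T}}[\partial_\mu\mathcal{L}^*]=f_1(p,d)\mathcal{F}_{\mu\mu}^*$. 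The Hessian expectation follows the same route: the only random piece that survives is $\mathbb{E}[\braket{\psi^\perp}{\partial_\mu\psi^*}\braket{\partial_\nu\psi^*}{\psi^\perp}] = \tfrac{1}{d-1}\tr(P\,\ketbra{\partial_\mu\psi^*}{\partial_\nu\psi^*})$. Collecting the $p^2$-terms against the $\tfrac{1-p^2}{d-1}$-term and re-expressing them through the two equivalent forms of $\mathcal{F}_{\mu\nu}$ from Appendix~\ref{appendix:qfim} (the second-derivative form for the terms carrying $\braket{\psi^*}{\partial_\mu\partial_\nu\psi^*}$, the first-derivative form for the overlap term) collapses the result to $\tfrac{dp^2-1}{d-1}\mathcal{F}^*$.

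The substantive part is the Hessian variance. I would decompose $\partial_\mu\partial_\nu\mathcal{L}^* = X_0 + X_1 + X_2$ by degree in $\ket{\psi^\perp}$: a deterministic $X_0$; a part $X_1$ linear in $\ket{\psi^\perp}$, collecting $\opnm{Re}\braket{\psi^\perp}{\partial_\mu\partial_\nu\psi^*}$ together with the mixed overlap terms; and a purely quadratic $X_2 = -(1-p^2)\bra{\psi^\perp}G\ket{\psi^\perp}$ with $G = \ketbra{\partial_\mu\psi^*}{\partial_\nu\psi^*} + \ketbra{\partial_\nu\psi^*}{\partial_\mu\psi^*}$. Since $\mathbb{E}[X_1]=0$ and the cross expectation $\mathbb{E}[X_1 X_2]$ has odd total phase-degree and hence vanishes, the variance splits cleanly as $\var_{\mathbb{T}}[\partial_\mu\partial_\nu\mathcal{L}^*] = \mathbb{E}[X_1^2] + \var[X_2]$. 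I would evaluate $\mathbb{E}[X_1^2]$ with Corollary~\ref{corollary:subspace_expectation}, where only the phase-balanced cross products between the holomorphic and antiholomorphic pieces of $X_1$ survive, and $\var[X_2]$ with Corollary~\ref{corollary:subspace_squared_expectation}, where the negative $-(\tr PG)^2/[(d-1)^2 d]$ contribution is simply dropped to give the upper bound $\var[X_2]\le (1-p^2)^2\,\tr((PG)^2)/[(d-1)d]$.

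The main obstacle — and the reason the statement is an inequality with the loose constant $32$ — is converting these trace expressions into the clean form $f_2(p,d)\|\Omega_\mu\|_\infty^2\|\Omega_\nu\|_\infty^2$. Two ingredients are needed. First, the norm control $\braketsame{\partial_\mu\psi^*}\le\|\Omega_\mu\|_\infty^2$ and $\braketsame{\partial_\mu\partial_\nu\psi^*}\le\|\Omega_\mu\|_\infty^2\|\Omega_\nu\|_\infty^2$, which hold because differentiating $\mathbf{U}(\bm{\theta})\ket{0}^{\otimes N}$ inserts the bounded generators $\Omega_\mu$ between unitaries; these bound the fixed vectors entering $X_1$ and give $\|G\|_\infty\le 2\|\Omega_\mu\|_\infty\|\Omega_\nu\|_\infty$. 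Second, the observation that $G$ has rank at most $2$, so that $\tr((PG)^2)=\tr((PGP)^2)\le 2\|G\|_\infty^2$; without this the naive estimate $\rank P = d-1$ would destroy the $1/d$ decay. Assembling the $p^2(1-p^2)/(d-1)$ contribution from $\mathbb{E}[X_1^2]$ with the $(1-p^2)^2/[d(d-1)]$ contribution from $\var[X_2]$, and absorbing all numerical prefactors into the constants, reproduces $f_2(p,d)$.
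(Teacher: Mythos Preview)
Your proposal is correct and follows essentially the same route as the paper: decompose $\ket{\phi}=p\ket{\psi^*}+q\ket{\psi^\perp}$, sort the derivatives of $\mathcal{L}$ by degree in $\ket{\psi^\perp}$, kill the unbalanced monomials by phase invariance, and evaluate the surviving first and second moments with the subspace Haar formulas. The only cosmetic difference is packaging: the paper carries everything through the density-matrix derivative $D_{\mu\nu}^*=\partial_\mu\partial_\nu\varrho^*$ (rank $\le 4$, bounded via $\|D_{\mu\nu}^*\|_\infty\le 4\|\Omega_\mu\|_\infty\|\Omega_\nu\|_\infty$), whereas you split off the quadratic piece as $\bra{\psi^\perp}G\ket{\psi^\perp}$ with $G=\ketbra{\partial_\mu\psi^*}{\partial_\nu\psi^*}+\ketbra{\partial_\nu\psi^*}{\partial_\mu\psi^*}$ of rank $\le 2$; this is legitimate precisely because $\bra{\psi^\perp}D_{\mu\nu}^*\ket{\psi^\perp}=\bra{\psi^\perp}G\ket{\psi^\perp}$ and in fact yields a slightly sharper constant on the $(1-p^2)^2/[d(d-1)]$ term than the stated $f_2$, so the inequality still holds.
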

\renewcommand{\theproposition}{S\arabic{proposition}}
\begin{proof}
Using the decomposition in Eq.~\eqref{Eq:target_decompose}, the loss function can be expressed by
\begin{equation}
    \mathcal{L} = 1 - \braoprket{\phi}{\varrho}{\phi} = 1 - p^2 \bra{\psi^*} \varrho \ket{\psi^*} - (1 - p^2) \braoprket{\psi^\perp}{\varrho}{\psi^\perp} - 2 p \sqrt{1-p^2} \opnm{Re} \left( \bra{\psi^\perp} \varrho \ket{\psi^*} \right),
\end{equation}
where $\varrho(\bm{\theta})=\ketbrasame{\psi(\bm{\theta})}$ denotes the density matrix of the output state from the QNN. According to Lemma~\ref{lemma:subspace_haar_inhomogenous} and Corollary~\ref{corollary:subspace_expectation}, the expectation of the loss function with respect to the ensemble $\mathbb{T}$ can be calculated as
\begin{equation}\label{Eq:expectation_loss}
\begin{aligned}
    \mathbb{E}_{\mathbb{T}}\left[ \mathcal{L}(\bm{\theta}) \right] &= 1 - p^2 \braoprket{\psi^*}{\varrho}{\psi^*} - (1-p^2) \frac{\tr[(I-\ketbrasame{\psi^*}) \varrho]}{d-1} \\
    &= 1 - p^2 + \frac{d p^2-1}{d-1} g(\bm{\theta}),
\end{aligned}
\end{equation}
where $g(\bm{\theta})=1-\braoprket{\psi^*}{\varrho(\bm{\theta})}{\psi^*}$ denotes the fidelity distance between the output states at $\bm{\theta}$ and $\bm{\theta}^*$. By definition, $g(\bm{\theta})$ takes the global minimum at $\bm{\theta}=\bm{\theta}^*$, i.e., at $\varrho=\ketbrasame{\psi^*}$. Thus the commutation between the expectation and differentiation gives
\begin{equation}\label{Eq:fidelity_loss_expectation_grad_hessian}
\begin{aligned}
    &\mathbb{E}_{\mathbb{T}} \left[ \nabla \mathcal{L}^* \right] = \left.\nabla\left(\mathbb{E}_{\mathbb{T}} \left[ C \right]\right)\right\vert_{\bm{\theta}=\bm{\theta}^*} = \frac{d p^2-1}{d-1} \left.\nabla g(\bm{\theta})\right\vert_{\bm{\theta}=\bm{\theta}^*} = 0, \\
    &\mathbb{E}_{\mathbb{T}} \left[ H_\mathcal{L}^* \right] = \frac{d p^2-1}{d-1} \left. H_g(\bm{\theta})\right\vert_{\bm{\theta}=\bm{\theta}^*} = \frac{d p^2-1}{d-1} \left. H_g(\bm{\theta})\right\vert_{\bm{\theta}=\bm{\theta}^*} = \frac{d p^2-1}{d-1} \mathcal{F}^*.
\end{aligned}
\end{equation}
Note that $\left. H_g(\bm{\theta})\right\vert_{\bm{\theta}=\bm{\theta}^*}$ is actually the QFI matrix $\mathcal{F}$ of $\ket{\psi(\bm{\theta})}$ at $\bm{\theta}=\bm{\theta}^*$ (see Appendix~\ref{appendix:qfim}), which is always positive semidefinite. To estimate the variance, we need to calculate the expression of derivatives first due to the non-linearity of the variance, unlike the case of Eq.~(\ref{Eq:fidelity_loss_expectation_grad_hessian}) where the operations of taking the expectation and derivative is exchanged. The first order derivative can be expressed by
\begin{equation}
    \partial_\mu \mathcal{L} = -\braoprket{\phi}{D_\mu}{\phi} = - p^2 \bra{\psi^*} D_\mu \ket{\psi^*} - q^2 \bra{\psi^\perp} D_\mu \ket{\psi^\perp} - 2 p q \opnm{Re} \left( \bra{\psi^\perp} D_\mu \ket{\psi^*} \right).
\end{equation}
where $q=\sqrt{1-p^2}$ and $D_\mu=\partial_\mu\varrho$ is a traceless Hermitian operator since $\tr D_\mu=\partial_\mu(\tr\varrho)=0$. At $\bm{\theta}=\bm{\theta}^*$, the operator $D_\mu$ is reduced to $D_\mu^*$ which satisfies several useful properties
\begin{equation}\label{eq: property of D operator}
\begin{aligned}
    & D_\mu^* = \left[\partial_\mu(\ketbrasame{\psi})\right]^* = \ketbra{\partial_\mu\psi^*}{\psi^*} + \ketbra{\psi^*}{\partial_\mu\psi^*},\\
    & \braoprket{\psi^*}{D_\mu^*}{\psi^*} = \braket{\psi^*}{\partial_\mu\psi^*} + \braket{\partial_\mu\psi^*}{\psi^*} = \partial_\mu(\braket{\psi}{\psi})^* = 0,\\
    & \braoprket{\psi^\perp}{D_\mu^*}{\psi^\perp} = \braket{\psi^\perp}{\partial_\mu\psi^*}\braket{\psi^*}{\psi^\perp} + \braket{\psi^\perp}{\psi^*}\braket{\partial_\mu\psi^*}{\psi^\perp} = 0,\\
    & \braoprket{\psi^\perp}{D_\mu^*}{\psi^*} = \braket{\psi^\perp}{\partial_\mu\psi^*}\braket{\psi^*}{\psi^*} + \braket{\psi^\perp}{\psi^*}\braket{\partial_\mu\psi^*}{\psi^*} = \braket{\psi^\perp}{\partial_\mu\psi^*},\\
\end{aligned}
\end{equation}
where we have used the facts of $\braketsame{\psi}=1$ and $\braket{\psi^*}{\psi^\perp}=0$. Note that $\ket{\partial_\mu\psi^*}$ actually refers to $(\partial_\mu\ket{\psi})^*$. Thus the variance of the first order derivative at $\bm{\theta}=\bm{\theta}^*$ becomes
\begin{equation}
\begin{aligned}
    \var_{\mathbb{T}}[\partial_\mu \mathcal{L}^*] &= \mathbb{E}_{\mathbb{T}}\left[\left(\partial_\mu \mathcal{L}^* - \mathbb{E}_{\mathbb{T}}[\partial_\mu \mathcal{L}^*]\right)^2\right] = \mathbb{E}_{\mathbb{T}}\left[\left(\partial_\mu \mathcal{L}^*\right)^2\right] \\
    &= 4 p^2 q^2 ~\mathbb{E}_{\mathbb{T}}\left[ \left( \opnm{Re}\braket{\psi^\perp}{\partial_\mu\psi^*} \right)^2 \right].
\end{aligned}
\end{equation}
According to Lemma~\ref{lemma:subspace_haar_inhomogenous} and Corollary~\ref{corollary:subspace_expectation}, it holds that
\begin{equation}\label{Eq:fidelity_loss_expectation_subterms_D_mu}
\begin{aligned}
    & \mathbb{E}_{\mathbb{T}}\left[ \left( \opnm{Re}\braket{\psi^\perp}{\partial_\mu\psi^*} \right)^2 \right] = \frac{1}{2}\mathbb{E}_{\mathbb{T}}\left[ \braket{\psi^\perp}{\partial_\mu\psi^*} \braket{\partial_\mu\psi^*}{\psi^\perp} \right] \\
    &= \frac{\braket{\partial_\mu\psi^*}{\partial_\mu\psi^*}-\braket{\psi^*}{\partial_\mu\psi^*}\braket{\partial_\mu\psi^*}{\psi^*}}{2(d-1)} = \frac{\mathcal{F}_{\mu\mu}^*}{4(d-1)},
\end{aligned}
\end{equation}
where $\mathcal{F}_{\mu\mu}$ is the QFI diagonal element. Using the generators in the PQC, $\mathcal{F}_{\mu\mu}$ could be expressed as
\begin{equation}\label{Eq:qfi_diagonal}
    \mathcal{F}_{\mu\mu} = 2\left( \braoprket{\psi}{\tilde{\Omega}_\mu^2}{\psi} - \braoprket{\psi}{\tilde{\Omega}_\mu}{\psi}^2\right),
\end{equation}
where $\tilde{\Omega}_\mu = V_{\mu\rightarrow M} \Omega_\mu V_{\mu\rightarrow M}^\dagger$. Finally, the variance of $\partial_\mu \mathcal{L}$ at $\bm{\theta}=\bm{\theta}^*$ equals to
\begin{equation}
    \var_{\mathbb{T}}[\partial_\mu \mathcal{L}^*] = 4p^2q^2\cdot\frac{\mathcal{F}_{\mu\mu}^*}{4(d-1)} = \frac{p^2(1-p^2)}{d-1} \mathcal{F}_{\mu\mu}^*.
\end{equation}
The second order derivative can be expressed by
\begin{equation}
\begin{aligned}
    &(H_\mathcal{L})_{\mu\nu} = \frac{\partial^2 \mathcal{L}}{\partial \theta_\mu \partial \theta_\nu} = \partial_\mu\partial_\nu \mathcal{L} = -\braoprket{\phi}{D_{\mu\nu}}{\phi} \\
    &= - p^2 \bra{\psi^*} D_{\mu\nu} \ket{\psi^*} - q^2 \bra{\psi^\perp} D_{\mu\nu} \ket{\psi^\perp} - 2 p q \opnm{Re} \left( \bra{\psi^\perp} D_{\mu\nu} \ket{\psi^*} \right),
\end{aligned}
\end{equation}
where $D_{\mu\nu}=\partial_\mu\partial_\nu\varrho$ is a traceless Hermitian operator since $\tr D_{\mu\nu}=\partial_\mu\partial_\nu(\tr\varrho)=0$. Please do not confuse $D_{\mu\nu}$ with $D_{\mu}$ above. At $\bm{\theta}=\bm{\theta}^*$, the $D_{\mu\nu}$ is reduced to $D_{\mu\nu}^*$ which satisfies the following properties
\begin{align}
    & D_{\mu\nu}^* = \partial_\mu\partial_\nu(\ketbrasame{\psi})^* = 2\opnm{Re}\left[\ketbra{\partial_\mu\partial_\nu\psi^*}{\psi^*} + \ketbra{\partial_\mu\psi^*}{\partial_\nu\psi^*}\right], \label{Eq:D_munu}\\
    & \braoprket{\psi^*}{D_{\mu\nu}^*}{\psi^*} = 2\opnm{Re}\left[\braket{\psi^*}{\partial_\mu\partial_\nu\psi^*} + \langle\psi^* \ketbra{\partial_\mu\psi^*}{\partial_\nu\psi^*} \psi^*\rangle \right] = -\mathcal{F}_{\mu\nu}, \label{Eq:D_munu_qfi}\\
    & \braoprket{\psi^\perp}{D_{\mu\nu}^*}{\psi^\perp} = 2\opnm{Re}\left[\langle\psi^\perp \ketbra{\partial_\nu\psi^*}{\partial_\mu\psi^*}\psi^\perp\rangle \right].
\end{align}
Here the notation $2\opnm{Re}[\cdot]$ of  square matrix $A$ actually means the sum of the matrix and its Hermitian conjugate, i.e., $2\opnm{Re}[A]=A+A^\dagger$. From Eq.~(\ref{Eq:D_munu}) we know that the rank of $D_{\mu\nu}$ is at most $4$. Substituting the expectation in Eq.~(\ref{Eq:fidelity_loss_expectation_grad_hessian}), the variance of the second order derivative at $\bm{\theta}=\bm{\theta}^*$ becomes
\begin{equation}\label{Eq:var_hessian_calculation}
\begin{aligned}
    & \var_{\mathbb{T}}[\partial_\mu \partial_\nu \mathcal{L}^*] = \mathbb{E}_{\mathbb{T}}\left[\left(\partial_\mu \partial_\nu \mathcal{L}^* - \mathbb{E}_{\mathbb{T}}\left[\partial_\mu \partial_\nu \mathcal{L}^*\right]  \right)^2\right] \\
    &= \left(\frac{q^2}{d-1}\mathcal{F}_{\mu\nu}^*\right)^2 + q^4 \mathbb{E}_{\mathbb{T}} \left[ \braoprket{\psi^\perp}{D_{\mu\nu}^*}{\psi^\perp}^2 \right] \\
    &- \frac{2q^4}{d-1}\mathcal{F}_{\mu\nu}^* \mathbb{E}_{\mathbb{T}} \left[ \braoprket{\psi^\perp}{D_{\mu\nu}^*}{\psi^\perp} \right] + 4p^2q^2 \mathbb{E}_{\mathbb{T}} \left[(\opnm{Re}\braoprket{\psi^\perp}{D_{\mu\nu}^*}{\psi^*})^2 \right].
\end{aligned}
\end{equation}
where the inhomogeneous cross terms vanish after taking the expectation according to Lemma~\ref{lemma:subspace_haar_inhomogenous} and have been omitted. Using Corollaries~\ref{corollary:subspace_expectation} and \ref{corollary:subspace_squared_expectation}, the expectations in Eq.~\eqref{Eq:var_hessian_calculation} can be calculated as
\begin{equation}
\begin{aligned}
    & \mathbb{E}_{\mathbb{T}} \left[ \braoprket{\psi^\perp}{D_{\mu\nu}^*}{\psi^\perp}^2 \right] = \frac{\tr((D_{\mu\nu}^*)^2) - 2\left(\braoprket{\psi^*}{(D_{\mu\nu}^*)^2}{\psi^*} -\braoprket{\psi^*}{D_{\mu\nu}^*}{\psi^*}^2 \right)}{d(d-1)}, \\
    & \mathbb{E}_{\mathbb{T}} \left[ \braoprket{\psi^\perp}{D_{\mu\nu}^*}{\psi^\perp} \right] = -\frac{\braoprket{\psi^*}{D_{\mu\nu}^*}{\psi^*}}{d-1} = \frac{\mathcal{F}_{\mu\nu}^*}{d-1}, \\
    & \mathbb{E}_{\mathbb{T}} \left[(\opnm{Re}\braoprket{\psi^\perp}{D_{\mu\nu}^*}{\psi^*})^2\right] = \frac{1}{2} \mathbb{E}_{\mathbb{T}} \left[\braoprket{\psi^\perp}{D_{\mu\nu}^*}{\psi^*}\braoprket{\psi^*}{D_{\mu\nu}^*}{\psi^\perp}\right] \\
    &\quad\quad\quad\quad\quad\quad\quad\quad\quad\quad~ = \frac{\braoprket{\psi^*}{(D_{\mu\nu}^*)^2}{\psi^*} -\braoprket{\psi^*}{D_{\mu\nu}^*}{\psi^*}^2}{2(d-1)}.
\end{aligned}
\end{equation}
Thus the variance of the second order derivative at $\bm{\theta}=\bm{\theta}^*$ can be written as
\begin{equation}
\begin{aligned}
    \var_{\mathbb{T}}[\partial_\mu \partial_\nu \mathcal{L}^*] =& q^4 \frac{\|D_{\mu\nu}^*\|_2^2 - 2\left(\braoprket{\psi^*}{(D_{\mu\nu}^*)^2}{\psi^*} -\braoprket{\psi^*}{D_{\mu\nu}^*}{\psi^*}^2 \right)}{d(d-1)} \\
    & + 2 p^2 q^2 \frac{\braoprket{\psi^*}{(D_{\mu\nu}^*)^2}{\psi^*} -\braoprket{\psi^*}{D_{\mu\nu}^*}{\psi^*}^2}{d-1} - \left(\frac{q^2}{d-1}\mathcal{F}_{\mu\nu}^*\right)^2.
\end{aligned}
\end{equation}
Note that the factor
\begin{equation}
    \braoprket{\psi^*}{(D_{\mu\nu}^*)^2}{\psi^*} -\braoprket{\psi^*}{D_{\mu\nu}^*}{\psi^*}^2 = \braoprket{\psi^*}{D_{\mu\nu}^*(I-\ketbrasame{\psi^*})D_{\mu\nu}^*}{\psi^*},
\end{equation}
is non-negative because the operator $(I-\ketbrasame{\psi^*})$ is positive semidefinite. Hence the variance can be upper bounded by
\begin{equation}
\begin{aligned}
    \var_{\mathbb{T}}[\partial_\mu \partial_\nu \mathcal{L}^*] &\leq \frac{q^4}{d(d-1)}\|D_{\mu\nu}^*\|_2^2 + \frac{2p^2q^2}{d-1}\braoprket{\psi^*}{(D_{\mu\nu}^*)^2}{\psi^*} \\
    &\leq \frac{q^4}{d(d-1)}\|D_{\mu\nu}^*\|_2^2 + \frac{2p^2q^2}{d-1}\|(D_{\mu\nu}^*)^2\|_\infty - \left(\frac{q^2}{d-1}\mathcal{F}_{\mu\nu}^*\right)^2 \\
    &\leq \frac{2q^2}{d-1}\left(p^2 + \frac{2q^2}{d}\right) \|D_{\mu\nu}^*\|_\infty^2,
\end{aligned}
\end{equation}
where we have used the properties
\begin{equation}\label{Eq:D_munu_norm}
    \|D_{\mu\nu}^*\|_2\leq \sqrt{\rank(D_{\mu\nu}^*)}\|D_{\mu\nu}^*\|_\infty\leq2\|D_{\mu\nu}^*\|_\infty,\quad \|(D_{\mu\nu}^*)^2\|_\infty=\|D_{\mu\nu}^*\|_\infty^2.
\end{equation}
Utilizing the quantum gates in the QNN, the operator $D_{\mu\nu}$ can be written as
\begin{equation}
\begin{aligned}
    &D_{\mu\nu} = V_{\nu+1\rightarrow M} [V_{\mu+1\rightarrow\nu} [V_{1\rightarrow\mu} \ketbrasame{0} V_{1\rightarrow\mu}^\dagger, i\Omega_\mu]V_{\mu+1\rightarrow\nu}^\dagger, i\Omega_\nu] V_{\nu+1\rightarrow M}^\dagger,
\end{aligned}
\end{equation}
where we assume $\mu\leq\nu$ without loss of generality. Thus $\| D_{\mu\nu}\|_\infty$ can be upper bounded by
\begin{equation}\label{Eq:D_munu_generator}
\begin{aligned}
    \| D_{\mu\nu}\|_\infty \leq 4\|\Omega_\mu\Omega_\nu\|_\infty\leq 4\|\Omega_\mu\|_\infty\|\Omega_\nu\|_\infty.
\end{aligned}
\end{equation}
Finally, the variance of the second order derivative at $\bm{\theta}=\bm{\theta}^*$ can be bounded as
\begin{equation}
    \var_{\mathbb{T}}[\partial_\mu \partial_\nu \mathcal{L}^*] \leq f_2(p,d) \|\Omega_\mu\|_\infty^2 \|\Omega_\nu\|_\infty^2.
\end{equation}
The factor $f_2(p,d)$ reads
\begin{equation}
    f_2(p,d) = \frac{32(1-p^2)}{d-1}\left[p^2 + \frac{2(1-p^2)}{d}\right],
\end{equation}
which vanishes at least of order $1/d$.
\end{proof}

Note that when $p\in\{0,1\}$, $f_1$, $f_2$ and hence the variances of the first and second order derivatives become exactly zero, indicating $\mathcal{L}^*$ takes the optimum in all cases. This is nothing but the fact that the range of the loss function is $[0,1]$, which reflects that the bound of $\var_{\mathbb{T}}[\partial_\mu \partial_\nu \mathcal{L}^*]$ is tight in $p$.

We remark that the vanishing gradient here is both conceptually and technically distinct from barren plateaus~\cite{McClean2018}. Firstly, here we focus on a fixed parameter point $\bm{\theta}^*$ instead of a randomly chosen point on the training landscape. Other points apart from $\bm{\theta}^*$ is allowed to have a non-vanishing gradient expectation, which leads to prominent local minima instead of plateaus. Moreover, the ensemble $\mathbb{T}$ used here originates from the unknown target state instead of the random initialization. The latter typically demands a polynomially deep circuit to form a $2$-design. Technically, a constant overlap $p$ is assumed to construct the ensemble $\mathbb{T}$ instead of completely random over the entire Hilbert space. Thus our results apply to adaptive methods, while barren plateaus from the random initialization are not.

\renewcommand\theproposition{\textcolor{blue}{2}}
\setcounter{proposition}{\arabic{proposition}-1}
\begin{theorem}
If the fidelity loss function satisfies $\mathcal{L}(\bm{\theta}^*)< 1- 1/d$, the probability that $\bm{\theta}^*$ is not a local minimum of $\mathcal{L}$ up to a fixed precision $\epsilon=(\epsilon_1,\epsilon_2)$ with respect to the target state ensemble $\mathbb{T}$ is upper bounded by

\begin{equation}
    \opnm{Pr}_{\mathbb{T}} \left[ \neg\opnm{LocalMin}(\bm{\theta}^*,\epsilon) \right] \leq \frac{2f_1(p,d)\|\bm{\omega}\|_2^2}{\epsilon_1^2} + \frac{f_2(p,d)\|\bm{\omega}\|_2^4}{\left(\frac{dp^2-1}{d-1}e^*+\epsilon_2\right)^2},
\end{equation}
where $e^*$ denotes the minimal eigenvalue of the QFI matrix at $\bm{\theta}=\bm{\theta}^*$. $f_1$ and $f_2$ are defined in Lemma~\ref{lemma:fidelity_loss_exp_var_grad_hessian} which vanish at least of order $1/d$.
\end{theorem}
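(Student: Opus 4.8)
The plan is to bound the probability of the complement event $\neg\opnm{LocalMin}(\bm{\theta}^*,\epsilon)$, which by the definition in Eq.~\eqref{Eq:local_min_def} is the union $\bigcup_{\mu=1}^M\{|\partial_\mu\mathcal{L}^*|>\epsilon_1\}\cup\{H_\mathcal{L}^*\nsucc-\epsilon_2 I\}$, and then apply a single union bound to split it into a \emph{gradient contribution} and a \emph{Hessian contribution}. I would treat these two pieces independently and add the resulting bounds, matching the two summands on the right-hand side of the claim.

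For the gradient contribution, I would first apply a union bound over the $M$ components and then Chebyshev's inequality (Lemma~\ref{lemma:chebyshev_inequality}) to each term. Since Lemma~\ref{lemma:fidelity_loss_exp_var_grad_hessian} gives $\mathbb{E}_{\mathbb{T}}[\partial_\mu\mathcal{L}^*]=0$, Chebyshev yields $\opnm{Pr}_\mathbb{T}[|\partial_\mu\mathcal{L}^*|>\epsilon_1]\leq\var_\mathbb{T}[\partial_\mu\mathcal{L}^*]/\epsilon_1^2=f_1(p,d)\mathcal{F}_{\mu\mu}^*/\epsilon_1^2$. Summing over $\mu$ produces $\tr\mathcal{F}^*$, and the elementary bound $\mathcal{F}_{\mu\mu}^*\leq2\|\Omega_\mu\|_\infty^2$, immediate from the QFI diagonal, gives $\tr\mathcal{F}^*\leq2\|\bm{\omega}\|_2^2$, reproducing the first term $2f_1(p,d)\|\bm{\omega}\|_2^2/\epsilon_1^2$.

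The Hessian contribution is where the real work lies. The hypothesis $\mathcal{L}^*<1-1/d$ is exactly the condition $p^2>1/d$, which by Lemma~\ref{lemma:fidelity_loss_exp_var_grad_hessian} makes $\mathbb{E}_\mathbb{T}[H_\mathcal{L}^*]=\frac{dp^2-1}{d-1}\mathcal{F}^*$ positive semidefinite with minimal eigenvalue $\frac{dp^2-1}{d-1}e^*$. I would then invoke the perturbation result (Lemma~\ref{lemma:perturb_positive_definite}) with $X=\mathbb{E}_\mathbb{T}[H_\mathcal{L}^*]+\epsilon_2 I$, which is positive definite with minimal eigenvalue $\frac{dp^2-1}{d-1}e^*+\epsilon_2$, and $Y=H_\mathcal{L}^*+\epsilon_2 I$, so that $Y-X=H_\mathcal{L}^*-\mathbb{E}_\mathbb{T}[H_\mathcal{L}^*]$. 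The lemma shows that $\|H_\mathcal{L}^*-\mathbb{E}_\mathbb{T}[H_\mathcal{L}^*]\|_\infty<\frac{dp^2-1}{d-1}e^*+\epsilon_2$ already forces $H_\mathcal{L}^*\succ-\epsilon_2 I$; contrapositively, the bad event implies $\|H_\mathcal{L}^*-\mathbb{E}_\mathbb{T}[H_\mathcal{L}^*]\|_\infty\geq\frac{dp^2-1}{d-1}e^*+\epsilon_2$. Applying the matrix Chebyshev inequality (Lemma~\ref{lemma:chebyshev_inequality_matrix}) with $\alpha=\infty$ then bounds this probability by $\sigma_\infty^2/(\frac{dp^2-1}{d-1}e^*+\epsilon_2)^2$.

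The last and most delicate step is controlling $\sigma_\infty^2=\mathbb{E}_\mathbb{T}[\|H_\mathcal{L}^*-\mathbb{E}_\mathbb{T}[H_\mathcal{L}^*]\|_\infty^2]$, since Lemma~\ref{lemma:fidelity_loss_exp_var_grad_hessian} only supplies \emph{entrywise} variances of $H_\mathcal{L}^*$. I would pass to the Frobenius (Schatten-$2$) norm via $\|\cdot\|_\infty\leq\|\cdot\|_2$, which turns $\sigma_\infty^2$ into the sum of entrywise variances $\sum_{\mu,\nu}\var_\mathbb{T}[\partial_\mu\partial_\nu\mathcal{L}^*]$. Substituting $\var_\mathbb{T}[\partial_\mu\partial_\nu\mathcal{L}^*]\leq f_2(p,d)\|\Omega_\mu\|_\infty^2\|\Omega_\nu\|_\infty^2$ and factoring the double sum as $\left(\sum_\mu\|\Omega_\mu\|_\infty^2\right)^2=\|\bm{\omega}\|_2^4$ gives $\sigma_\infty^2\leq f_2(p,d)\|\bm{\omega}\|_2^4$, yielding the second term; adding the two contributions finishes the proof. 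The main obstacle is this Hessian side: correctly setting up the $\epsilon_2$-shifted perturbation so that positive definiteness of $X$ is guaranteed precisely by $p^2>1/d$, and then relaxing the operator-norm deviation to the Frobenius norm so that the only available entrywise variance estimate can be invoked.
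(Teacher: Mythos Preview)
Your proposal is correct and follows essentially the same route as the paper's proof: the same union-bound split, componentwise Chebyshev for the gradient with $\tr\mathcal{F}^*\leq2\|\bm{\omega}\|_2^2$, and for the Hessian the same $\epsilon_2$-shifted application of Lemma~\ref{lemma:perturb_positive_definite}, contrapositive, matrix Chebyshev in operator norm, and relaxation $\sigma_\infty^2\leq\sigma_2^2=\sum_{\mu,\nu}\var_\mathbb{T}[\partial_\mu\partial_\nu\mathcal{L}^*]\leq f_2(p,d)\|\bm{\omega}\|_2^4$.
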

\renewcommand{\theproposition}{S\arabic{proposition}}
\begin{proof}
By definition in Eq.~\eqref{Eq:local_min_def} in the main text, the probability $\opnm{Pr}_{\mathbb{T}} \left[ \neg\opnm{LocalMin}(\bm{\theta}^*,\epsilon) \right]$ can be upper bounded by the sum of two terms: the probability that one of the gradient component is larger than $\epsilon_1$, and the probability that the Hessian matrix is not positive definite up to the error $\epsilon_2$, i.e.,
\begin{equation}\label{Eq:pr_grad_hessian}
\begin{aligned}
    \opnm{Pr}_{\mathbb{T}} \left[ \neg\opnm{LocalMin}(\bm{\theta}^*,\epsilon) \right] &= \opnm{Pr}_{\mathbb{T}} \left[ \bigcup_{\mu=1}^M \{|\partial_\mu \mathcal{L}^*| > \epsilon_1\} \cup \{H_\mathcal{L}^*\nsucc -\epsilon_2 I\} \right]\\
    &\leq \opnm{Pr}_{\mathbb{T}} \left[ \bigcup_{\mu=1}^M \{|\partial_\mu \mathcal{L}^*| > \epsilon_1\} \right] +  \opnm{Pr}_{\mathbb{T}} \left[ H_\mathcal{L}^*\nsucc -\epsilon_2 I \right].
\end{aligned}
\end{equation}
The first term can be easily upper bounded by combining Lemma~\ref{lemma:fidelity_loss_exp_var_grad_hessian} and Chebyshev's inequality, i.e.,
\begin{equation}
    \opnm{Pr}_{\mathbb{T}} \left[ \bigcup_{\mu=1}^M \{|\partial_\mu \mathcal{L}^*| > \epsilon_1\} \right] \leq \sum_{\mu=1}^M\opnm{Pr}_{\mathbb{T}} \left[|\partial_\mu \mathcal{L}^*| > \epsilon_1\right] \leq \sum_{\mu=1}^{M} \frac{\var_{\mathbb{T}}[\partial_\mu \mathcal{L}^*]}{\epsilon_1^2} = \frac{f_1(p,d)}{\epsilon_1^2} \tr\mathcal{F}^*,
\end{equation}
where the diagonal element of the QFI matrix is upper bounded as $\mathcal{F}_{\mu\mu}\leq 2\|\Omega_{\mu}\|_\infty^2$ by definition and thus $\tr\mathcal{F}^*\leq 2\|\bm{\omega}\|_2^2$. Here the generator norm vector $\bm{\omega}$ is defined as
\begin{equation}\label{Eq:generator_norm_vector}
    \bm{\omega}=(\|\Omega_1\|_\infty,\|\Omega_2\|_\infty,\ldots,\|\Omega_M\|_\infty),
\end{equation}
so that the squared vector $2$-norm of $\bm{\omega}$ equals to $\|\bm{\omega}\|_2^2=\sum_{\mu=1}^M \|\Omega_\mu\|_\infty^2$. Thus we obtain the upper bound of the first term, i.e.,
\begin{equation}
    \opnm{Pr}_{\mathbb{T}} \left[ \bigcup_{\mu=1}^M \{|\partial_\mu \mathcal{L}^*| > \epsilon_1\} \right] \leq \frac{2f_1(p,d)\|\bm{\omega}\|_2^2}{\epsilon_1^2},
\end{equation}
It takes extra efforts to bound the second term. After assuming $p^2> 1/d$ to ensure that $\mathbb{E}_\mathbb{T}[H_\mathcal{L}^*]$ is positive semidefinite, a sufficient condition of the positive definiteness can be obtained by perturbing $\mathbb{E}_\mathbb{T}[H_\mathcal{L}^*]$ using Lemma~\ref{lemma:perturb_positive_definite}, i.e.,
\begin{equation}
    \|H_\mathcal{L}^*-\mathbb{E}_{\mathbb{T}}[H_\mathcal{L}^*]\|_{\infty} < \|\mathbb{E}_{\mathbb{T}}[H_\mathcal{L}^*+\epsilon_2 I]^{-1}\|_{\infty}^{-1} \quad\Rightarrow\quad H_\mathcal{L}^* + \epsilon_2 I \succ 0,
\end{equation}
Note that $\|\mathbb{E}_{\mathbb{T}}[H_\mathcal{L}^*+\epsilon_2 I]^{-1}\|_{\infty}^{-1}=\frac{dp^2-1}{d-1}e^*+\epsilon_2$, where $e^*$ denotes the minimal eigenvalue of the QFI $\mathcal{F}^*$. A necessary condition for $H_\mathcal{L}^* + \epsilon_2 I \nsucc 0$ is hence obtained by the contrapositive, i.e.,
\begin{equation}
    H_\mathcal{L}^*\nsucc -\epsilon_2 I \quad\Rightarrow\quad \|H_\mathcal{L}^*-\mathbb{E}_{\mathbb{T}}[H_\mathcal{L}^*]\|_{\infty} \geq \frac{dp^2-1}{d-1}e^*+\epsilon_2.
\end{equation}
Thus the probability that $H_\mathcal{L}^*$ is not positive definite can be upper bounded by
\begin{equation}
    \opnm{Pr}_{\mathbb{T}}\left[H_\mathcal{L}^*\nsucc -\epsilon_2 I \right] \leq \opnm{Pr}_{\mathbb{T}}\left[\|H_\mathcal{L}^*-\mathbb{E}_{\mathbb{T}}[H_\mathcal{L}^*]\|_{\infty} \geq \frac{dp^2-1}{d-1}e^*+\epsilon_2\right].
\end{equation}
The generalized Chebyshev's inequality in Lemma~\ref{lemma:chebyshev_inequality_matrix} regarding $H_\mathcal{L}^*$ and the Schatten-$\infty$ norm gives
\begin{equation}
    \opnm{Pr}_{\mathbb{T}}\left[\|H_\mathcal{L}^*-\mathbb{E}_{\mathbb{T}}[H_\mathcal{L}^*]\|_\infty\geq \varepsilon \right]\leq\frac{\sigma_\infty^2}{\varepsilon^2},
\end{equation}
where the ``norm variance'' is defined as $\sigma_\infty^2=\mathbb{E}_{\mathbb{T}}[\|H_\mathcal{L}^*-\mathbb{E}_{\mathbb{T}}[H_\mathcal{L}^*]\|_\infty^2]$. By taking $\varepsilon=\frac{dp^2-1}{d-1}e^*+\epsilon_2$, we obtain
\begin{equation}\label{Eq:chebyshev_hessian}
    \opnm{Pr}_{\mathbb{T}}\left[H_\mathcal{L}^*\nsucceq -\epsilon_2 I\right] \leq \frac{\sigma_\infty^2}{\left(\frac{dp^2-1}{d-1}e^*+\epsilon_2\right)^2}.
\end{equation}
Utilizing Lemma~\ref{lemma:fidelity_loss_exp_var_grad_hessian}, $\sigma_\infty^2$ can be further bounded by
\begin{equation}\label{Eq:bound_schatten_infty_sigma}
\begin{aligned}
    \sigma_\infty^2 \leq \sigma_2^2 &= \mathbb{E}_{\mathbb{T}}[\|H_\mathcal{L}^*-\mathbb{E}_{\mathbb{T}}[H_\mathcal{L}^*]\|_2^2] = \sum_{\mu\nu} \mathbb{E}_{\mathbb{T}}\left[\left((H_\mathcal{L}^*)_{\mu\nu}-(\mathbb{E}_{\mathbb{T}}[H_\mathcal{L}^*])_{\mu\nu}\right)^2\right] \\
    &= \sum_{\mu,\nu=1}^M \var_\mathbb{T}\left[\partial_\mu\partial_\nu\mathcal{L}^*\right] \leq f_2(p,d) \sum_{\mu,\nu=1}^M \|\Omega_\mu\|_\infty^2 \|\Omega_\nu\|_\infty^2 \\
    &= f_2(p,d) \left(\sum_{\mu=1}^M \|\Omega_\mu\|_\infty^2\right)^2 = f_2(p,d) \|\bm{\omega}\|_2^4.
\end{aligned}
\end{equation}
Combining Eqs.~\eqref{Eq:chebyshev_hessian} and \eqref{Eq:bound_schatten_infty_sigma}, we obtain the upper bound of the second term, i.e.,
\begin{equation}
    \opnm{Pr}_{\mathbb{T}} \left[ H_\mathcal{L}^*\nsucc -\epsilon_2 I \right] \leq \frac{f_2(p,d)\|\bm{\omega}\|_2^4}{\left(\frac{dp^2-1}{d-1}e^*+\epsilon_2\right)^2}.
\end{equation}
Substituting the bounds for the first and second terms into Eq.~\eqref{Eq:pr_grad_hessian}, one finally arrives at the desired upper bound for the probability that $\bm{\theta}^*$ is not a local minimum up to a fixed precision $\epsilon=(\epsilon_1,\epsilon_2)$.
\end{proof}

Note that the conclusion can be generalized to the scenario of mixed states or noisy states easily using the mathematical tools we developed in Appendix~\ref{appendix:subspace_integration}. For example, suppose that the output state of the QNN is $\rho(\bm{\theta})$ and the target state is $\ket{\phi}$. The loss function can be defined as the fidelity distance $\mathcal{L}(\bm{\theta})=1-\braoprket{\phi}{\rho(\bm{\theta})}{\phi}$. Utilizing Lemmas~\ref{lemma:subspace_haar_1design} and \ref{lemma:subspace_haar_2design_trace_multiplication}, similar results can be carried out by calculating the subspace Haar integration.

\renewcommand\theproposition{\textcolor{blue}{3}}
\setcounter{proposition}{\arabic{proposition}-1}
\begin{proposition}
The expectation and variance of the fidelity loss function $\mathcal{L}$ with respect to the target state ensemble $\mathbb{T}$ can be exactly calculated as
\begin{equation}
\begin{aligned}
    &\mathbb{E}_{\mathbb{T}} \left[ \mathcal{L}(\bm{\theta}) \right] = 1 - p^2 + \frac{d p^2 - 1}{d-1} g(\bm{\theta}),\\
    &\var_{\mathbb{T}}\left[\mathcal{L}(\bm{\theta})\right] = \frac{1-p^2}{d-1} g(\bm{\theta}) \left[ 4 p^2 - \left(2p^2 - \frac{ (d-2)(1-p^2)}{d(d-1)}\right) g(\bm{\theta}) \right],
\end{aligned}
\end{equation}
where $g(\bm{\theta})=1-|\braket{\psi^*}{\psi(\bm{\theta})}|^2$.
\end{proposition}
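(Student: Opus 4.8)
The plan is to reduce every quantity to a subspace Haar integral over the orthogonal complement $\mathcal{H}^\perp$ of $\ket{\psi^*}$, which has dimension $d-1$ and carries the projector $P = I - \ketbrasame{\psi^*}$, and then to apply the first- and second-moment formulas of Corollaries~\ref{corollary:subspace_expectation} and~\ref{corollary:subspace_squared_expectation} together with the vanishing of inhomogeneous integrals (Lemma~\ref{lemma:subspace_haar_inhomogenous}). Writing $\varrho = \ketbrasame{\psi(\bm{\theta})}$ and substituting Eq.~\eqref{Eq:target_decompose}, I first expand
\[
    \mathcal{L} = 1 - \braoprket{\phi}{\varrho}{\phi} = 1 - p^2\braoprket{\psi^*}{\varrho}{\psi^*} - (1-p^2)\braoprket{\psi^\perp}{\varrho}{\psi^\perp} - 2p\sqrt{1-p^2}\,\opnm{Re}\braoprket{\psi^\perp}{\varrho}{\psi^*},
\]
so that only the last two terms fluctuate with $\ket{\psi^\perp}$. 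The single trace driving everything is $\tr(P\varrho) = 1 - \braoprket{\psi^*}{\varrho}{\psi^*} = g(\bm{\theta})$.

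For the expectation the $\opnm{Re}$ cross term is inhomogeneous and dies by Lemma~\ref{lemma:subspace_haar_inhomogenous}, while Corollary~\ref{corollary:subspace_expectation} gives $\mathbb{E}_\mathbb{T}[\braoprket{\psi^\perp}{\varrho}{\psi^\perp}] = \tr(P\varrho)/(d-1) = g/(d-1)$; inserting $\braoprket{\psi^*}{\varrho}{\psi^*} = 1-g$ and collecting terms reproduces $\mathbb{E}_\mathbb{T}[\mathcal{L}] = 1 - p^2 + \tfrac{dp^2-1}{d-1}g$, which is in fact Eq.~\eqref{Eq:expectation_loss} already established for Lemma~\ref{lemma:fidelity_loss_exp_var_grad_hessian}.

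For the variance I would discard the deterministic pieces and compute $\var_\mathbb{T}[(1-p^2)X + 2p\sqrt{1-p^2}\,Y]$ with $X = \braoprket{\psi^\perp}{\varrho}{\psi^\perp}$ and $Y = \opnm{Re}\,z$, $z = \braoprket{\psi^\perp}{\varrho}{\psi^*}$. Three ingredients are needed. (i) For $\var[X]$, Corollary~\ref{corollary:subspace_squared_expectation} requires $\tr((P\varrho)^2)$; since $\varrho$ is rank one and $\bra{\psi}P\ket{\psi} = g$ one has $\varrho P\varrho = g\varrho$, whence $\tr((P\varrho)^2) = g^2$, giving $\mathbb{E}[X^2] = 2g^2/[d(d-1)]$ and $\var[X] = (d-2)g^2/[d(d-1)^2]$. (ii) Here $\mathbb{E}[Y]=0$, and expanding $(\opnm{Re}\,z)^2$ the pieces $z^2$ and $\bar z^2$ are inhomogeneous and vanish, leaving $\tfrac12\,\mathbb{E}[|z|^2] = \tfrac12\,\mathbb{E}[\braoprket{\psi^\perp}{M}{\psi^\perp}]$ with $M = \varrho\ketbrasame{\psi^*}\varrho = (1-g)\varrho$, so Corollary~\ref{corollary:subspace_expectation} yields $\var[Y] = g(1-g)/[2(d-1)]$. (iii) The covariance $\opnm{Cov}(X,Y)$ I expect to vanish, because every surviving monomial has unequal numbers of $\bra{\psi^\perp}$ and $\ket{\psi^\perp}$ factors and hence integrates to zero. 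Assembling $\var_\mathbb{T}[\mathcal{L}] = (1-p^2)^2\var[X] + 4p^2(1-p^2)\var[Y]$ and factoring out $\tfrac{(1-p^2)g}{d-1}$ then produces the stated bracket.

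The main obstacle is the bookkeeping in the variance: classifying correctly which cross terms survive subspace integration and pinning down the exact rational coefficients in $d$ coming from Corollary~\ref{corollary:subspace_squared_expectation}. To guard against algebraic slips I would run a $d=2$ sanity check, where $\mathcal{H}^\perp$ is one-dimensional so that $X$ is deterministic and only the phase of $\ket{\psi^\perp}$ fluctuates; this isolates the $Y$-contribution and fixes both the normalization and the leading $p^2$-coefficient of the bracket independently of the general-$d$ algebra. I note that this route gives the leading bracket term as $2p^2$ rather than the $4p^2$ appearing in the statement, the discrepancy being exactly the factor $\tfrac12$ produced by averaging $(\opnm{Re}\,z)^2$ to $\tfrac12|z|^2$ for a phase-symmetric $z$; reconciling this coefficient is precisely the delicate point I would check most carefully.
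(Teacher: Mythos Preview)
Your approach is exactly the paper's: expand $\mathcal{L}$ via Eq.~\eqref{Eq:target_decompose}, kill inhomogeneous pieces with Lemma~\ref{lemma:subspace_haar_inhomogenous}, and evaluate the surviving moments with Corollaries~\ref{corollary:subspace_expectation} and~\ref{corollary:subspace_squared_expectation}. Your bookkeeping of $\var[X]$, $\var[Y]$, and the vanishing covariance is correct.

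The discrepancy you flagged is not a slip on your side but an error in the paper. In the paper's proof the line
\[
\mathbb{E}_{\mathbb{T}}\!\left[\bigl(\opnm{Re}\braoprket{\psi^\perp}{\varrho}{\psi^*}\bigr)^2\right]=\frac{1-\braoprket{\psi^*}{\varrho}{\psi^*}^2}{2(d-1)}
\]
is wrong: since $\varrho^2=\varrho$, the numerator should be $\braoprket{\psi^*}{\varrho}{\psi^*}-\braoprket{\psi^*}{\varrho}{\psi^*}^2=g(1-g)$, not $1-\braoprket{\psi^*}{\varrho}{\psi^*}^2=g(2-g)$. This single error turns the correct leading bracket term $2p^2$ into the stated $4p^2$. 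Your $d=2$ sanity check confirms it cleanly: there $\ket{\psi^\perp}=e^{i\alpha}\ket{\psi^\perp_0}$ with only $\alpha$ random, and a direct computation gives $\var_\mathbb{T}[\mathcal{L}]=2p^2(1-p^2)g(1-g)$, which vanishes at $g=1$ as it must (then $|\braket{\phi}{\psi}|^2=1-p^2$ is deterministic), whereas the paper's formula gives $2p^2(1-p^2)g(2-g)\neq 0$ at $g=1$. So keep your coefficient; the exponential-in-$N$ decay that the paper actually uses downstream is unaffected.
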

\renewcommand{\theproposition}{S\arabic{proposition}}
\begin{proof}
The expression of the expectation $\mathbb{E}_{\mathbb{T}} \left[ \mathcal{L}\right]$ has already been calculated in Eq.~\eqref{Eq:expectation_loss}. Considering Lemma~\ref{lemma:subspace_haar_inhomogenous}, the variance of the loss function is
\begin{equation}
\begin{aligned}
    \var_{\mathbb{T}}[\mathcal{L}] &= \mathbb{E}_{\mathbb{T}}\left[\left(\mathcal{L} - \mathbb{E}_{\mathbb{T}}[\mathcal{L}]\right)^2\right] \\
    &= \mathbb{E}_{\mathbb{T}}\left[\left( \frac{1-p^2}{d-1} (\bra{\psi^*} \varrho \ket{\psi^*} - 1) + q^2 \braoprket{\psi^\perp}{\varrho}{\psi^\perp} + 2 p q \opnm{Re} \left( \bra{\psi^\perp} \varrho \ket{\psi^*} \right) \right)^2\right] \\
    &= \frac{q^4}{(d-1)^2} (\bra{\psi^*} \varrho \ket{\psi^*} - 1)^2 + \frac{2q^4}{d-1}(\bra{\psi^*} \varrho \ket{\psi^*} - 1) ~\mathbb{E}_{\mathbb{T}}\left[ \braoprket{\psi^\perp}{\varrho}{\psi^\perp}\right] \\
    &\quad + q^4 ~\mathbb{E}_{\mathbb{T}}\left[ \braoprket{\psi^\perp}{\varrho}{\psi^\perp}^2\right] + 4 p^2 q^2 ~\mathbb{E}_{\mathbb{T}}\left[ \opnm{Re} \left( \bra{\psi^\perp} \varrho \ket{\psi^*} \right)^2 \right],
\end{aligned}
\end{equation}
where $q=\sqrt{1-p^2}$ and $\varrho(\bm{\theta})=\ketbrasame{\psi(\bm{\theta})}$. According to Corollaries~\ref{corollary:subspace_expectation} and \ref{corollary:subspace_squared_expectation}, the terms above can be calculated as
\begin{equation}
\begin{aligned}
    & \mathbb{E}_{\mathbb{T}}\left[ \braoprket{\psi^\perp}{\varrho}{\psi^\perp}\right] = \frac{1 - \braoprket{\psi^*}{\varrho}{\psi^*}}{d-1}, \\
    & \mathbb{E}_{\mathbb{T}}\left[ \braoprket{\psi^\perp}{\varrho}{\psi^\perp}^2\right] = \frac{\left(\tr(\varrho^2) - 2\braoprket{\psi^*}{\varrho^2}{\psi^*} + \braoprket{\psi^*}{\varrho}{\psi^*}^2 \right) + (1-\braoprket{\psi^*}{\varrho}{\psi^*})^2}{d(d-1)} \\
    &\quad\quad\quad\quad\quad\quad\quad~= \frac{2(1-\braoprket{\psi^*}{\varrho}{\psi^*})^2}{d(d-1)},\\
    & \mathbb{E}_{\mathbb{T}}\left[ \opnm{Re} \left( \bra{\psi^\perp} \varrho \ket{\psi^*} \right)^2 \right] = \frac{1}{2}\mathbb{E}_{\mathbb{T}}\left[ \bra{\psi^\perp} \varrho \ket{\psi^*} \bra{\psi^*} \varrho \ket{\psi^\perp} \right] = \frac{1-\braoprket{\psi^*}{\varrho}{\psi^*}^2}{2(d-1)}.
\end{aligned}
\end{equation}
Thus the variance of the loss function becomes
\begin{equation}
\begin{aligned}
    \var_{\mathbb{T}}[\mathcal{L}] &= -\frac{q^4 (1 - \bra{\psi^*} \varrho \ket{\psi^*})^2}{(d-1)^2} + \frac{2q^4(1-\braoprket{\psi^*}{\varrho}{\psi^*})^2}{d(d-1)} + \frac{2 p^2 q^2(1-\braoprket{\psi^*}{\varrho}{\psi^*}^2)}{d-1} \\
    &= \frac{q^2\left(1-\braoprket{\psi^*}{\varrho}{\psi^*}\right)}{d-1} \left[\frac{q^2(d-2)(1-\braoprket{\psi^*}{\varrho}{\psi^*})}{d(d-1)} + 2 p^2 (1+\braoprket{\psi^*}{\varrho}{\psi^*})\right].
\end{aligned}
\end{equation}
Substituting the relation $\braoprket{\psi^*}{\varrho}{\psi^*} = 1 - g(\bm{\theta})$, the desired expression is obtained.
\end{proof}

If the quantum gate $U_\mu$ in the QNN satisfies the parameter-shift rule, the explicit form of the factor $g(\bm{\theta})$ could be known along the axis of $\theta_\mu$ passing through $\bm{\theta}^*$, which is summarized in Corollary~\ref{corollary:loss_parameter-shift}. We use $\theta_{\bar{\mu}}$ to represent the other components except for $\theta_\mu$, namely $\theta_{\bar{\mu}}=\{\theta_\nu\}_{\nu\neq\mu}$.

\begin{corollary}\label{corollary:loss_parameter-shift}
For QNNs satisfying the parameter-shift rule by $\Omega_\mu^2=I$, the expectation and variance of the fidelity loss function $\mathcal{L}$ restricted by only varying the parameter $\theta_\mu$ from $\bm{\theta}^*$ with respect to the target state ensemble $\mathbb{T}$ can be exactly calculated as
\begin{equation}
\begin{aligned}
    &\mathbb{E}_{\mathbb{T}} \left[ \left. \mathcal{L}\right\vert_{\theta_{\bar{\mu}}=\theta_{\bar{\mu}}^*} \right] = 1 - p^2 + \frac{d p^2 - 1}{d-1} g(\theta_\mu),\\
    &\var_{\mathbb{T}}\left[ \left. \mathcal{L}\right\vert_{\theta_{\bar{\mu}}=\theta_{\bar{\mu}}^*} \right] = \frac{1-p^2}{d-1} g(\theta_\mu) \left[ 4 p^2 - \left(2p^2 - \frac{ (d-2)(1-p^2)}{d(d-1)}\right) g(\theta_\mu) \right],
\end{aligned}
\end{equation}
where $g(\theta_\mu)=\frac{1}{2} \mathcal{F}_{\mu\mu}^* \sin^2\left( \theta_\mu - \theta_\mu^*\right)$.
\end{corollary}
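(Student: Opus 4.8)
The plan is to reduce the Corollary to Proposition~\ref{proposition:fidelity_loss_exp_var} by computing the factor $g$ explicitly along the coordinate axis through $\bm{\theta}^*$ corresponding to $\theta_\mu$. The two displayed identities in the Corollary are exactly those of Proposition~\ref{proposition:fidelity_loss_exp_var} evaluated on the one-parameter slice $\{\theta_{\bar{\mu}}=\theta_{\bar{\mu}}^*\}$, with the abstract $g(\bm{\theta})$ replaced by its restriction $g(\theta_\mu)$. Since that proposition holds for every $\bm{\theta}$, all of the statistical content transfers verbatim, and it suffices to establish the single-variable kinematic fact $g(\theta_\mu)=\tfrac{1}{2}\mathcal{F}_{\mu\mu}^*\sin^2(\theta_\mu-\theta_\mu^*)$.

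First I would isolate the $\theta_\mu$-dependence of the output state. Since only $\theta_\mu$ varies, I write $\ket{\psi(\bm{\theta})}=R\,U_\mu(\theta_\mu)\ket{\chi}$, where $R=V_{\mu+1\rightarrow M}$ gathers the gates applied after $U_\mu$ and $\ket{\chi}=W_\mu V_{1\rightarrow\mu-1}\ket{0}^{\otimes N}$ gathers those applied before it; both $R$ and $\ket{\chi}$ are independent of $\theta_\mu$. Because $R$ is unitary it cancels in the overlap, and using $U_\mu(\theta)=e^{-i\Omega_\mu\theta}$ together with the fact that $\Omega_\mu$ commutes with $U_\mu$,
\begin{equation}
    \braket{\psi^*}{\psi(\bm{\theta})}=\bra{\chi}U_\mu(\theta_\mu^*)^\dagger U_\mu(\theta_\mu)\ket{\chi}=\bra{\chi}e^{-i\Omega_\mu(\theta_\mu-\theta_\mu^*)}\ket{\chi}.
\end{equation}
Next I would invoke the parameter-shift hypothesis $\Omega_\mu^2=I$, which gives the expansion $e^{-i\Omega_\mu\delta}=\cos\delta\,I-i\sin\delta\,\Omega_\mu$ with $\delta=\theta_\mu-\theta_\mu^*$. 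Writing the real scalar $\langle\Omega_\mu\rangle=\bra{\chi}\Omega_\mu\ket{\chi}$, this yields $\braket{\psi^*}{\psi(\bm{\theta})}=\cos\delta-i\langle\Omega_\mu\rangle\sin\delta$ and hence $g(\theta_\mu)=1-|\braket{\psi^*}{\psi(\bm{\theta})}|^2=(1-\langle\Omega_\mu\rangle^2)\sin^2\delta$.

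It then remains to identify the prefactor with the QFI diagonal. Differentiating in the same decomposition gives $\ket{\partial_\mu\psi^*}=R(-i\Omega_\mu)U_\mu(\theta_\mu^*)\ket{\chi}$, so that $\braket{\partial_\mu\psi^*}{\partial_\mu\psi^*}=\bra{\chi}\Omega_\mu^2\ket{\chi}=1$ and $\braket{\psi^*}{\partial_\mu\psi^*}=-i\langle\Omega_\mu\rangle$; substituting into the definition \eqref{Eq:qfim} collapses it to $\mathcal{F}_{\mu\mu}^*=2(1-\langle\Omega_\mu\rangle^2)$. Combining the two computations produces $g(\theta_\mu)=\tfrac{1}{2}\mathcal{F}_{\mu\mu}^*\sin^2(\theta_\mu-\theta_\mu^*)$, and plugging this into Proposition~\ref{proposition:fidelity_loss_exp_var} finishes the proof. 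This argument is essentially routine, so I do not anticipate a serious obstacle; the care needed is purely in the bookkeeping, namely verifying that $R$ and $\ket{\chi}$ are genuinely $\theta_\mu$-independent, that $R$ cancels cleanly, and that the operator ordering is respected when differentiating so that $\Omega_\mu^2=I$ is applied to the correct factor. The one genuinely substantive observation is that the \emph{same} two scalars $\langle\Omega_\mu\rangle$ and $\langle\Omega_\mu^2\rangle=1$ govern both the overlap curve and the QFI diagonal, which is precisely what forces the clean amplitude $\tfrac{1}{2}\mathcal{F}_{\mu\mu}^*$ rather than an unrelated constant in front of $\sin^2(\theta_\mu-\theta_\mu^*)$.
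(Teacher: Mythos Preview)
Your proposal is correct and follows the same overall strategy as the paper: both reduce the statement to Proposition~\ref{proposition:fidelity_loss_exp_var} and then compute the restriction $g(\theta_\mu)=\left.g(\bm{\theta})\right|_{\theta_{\bar\mu}=\theta_{\bar\mu}^*}$. The only difference is in how that single-variable computation is carried out. You perform it by direct substitution---isolating $U_\mu(\theta_\mu)$, expanding $e^{-i\Omega_\mu\delta}=\cos\delta\,I-i\sin\delta\,\Omega_\mu$, and then matching the prefactor $1-\langle\Omega_\mu\rangle^2$ to $\tfrac12\mathcal{F}_{\mu\mu}^*$ via the QFI definition. The paper instead argues structurally: $g(\theta_\mu)$ must be a linear combination of $1,\cos 2\theta_\mu,\sin 2\theta_\mu$, it vanishes and is minimized at $\theta_\mu^*$ (hence even about $\theta_\mu^*$), so $g(\theta_\mu)\propto 1-\cos 2(\theta_\mu-\theta_\mu^*)$, and the constant is fixed by $g''(\theta_\mu^*)=\mathcal{F}_{\mu\mu}^*$. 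The paper in fact mentions your direct-substitution route as the obvious alternative before opting for the structural one, so the two are really variants of the same argument; yours is more explicit, the paper's slightly more conceptual.
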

\begin{proof}
According to Proposition~\ref{proposition:fidelity_loss_exp_var}, we only need to calculate the factor $\left. g(\bm{\theta})\right\vert_{\theta_{\bar{\mu}}=\theta_{\bar{\mu}}^*}$. We simply denote this factor as $ g(\theta_\mu)$, the explicit expression of which could be calculated by just substituting the parameter-shift rule. Alternatively, the expression of $g(\theta_\mu)$ can be directly written down by considering the following facts. The parameter-shift rule ensures that $g(\theta_\mu)$ must take the form of linear combinations of $1$, $\cos(2\theta_\mu)$ and $\sin(2\theta_\mu)$ since $U_\mu(\theta_\mu)=e^{-i\Omega_\mu\theta_\mu}=\cos\theta_\mu I-i\sin\theta_\mu\Omega_\mu$ and $g(\theta_\mu)$ takes the form of $U_\mu(\cdot)U_\mu^\dagger$. Furthermore, $g(\theta_\mu)$ takes its minimum at $\theta_\mu^*$ so that it is an even function relative to $\theta_\mu=\theta_\mu^*$. Combined with the fact that $g(\theta_\mu)$ also takes zero at $\theta_\mu^*$, we know $g(\theta_\mu)\propto [1-\cos(2(\theta_\mu-\theta_\mu^*))]$. The coefficient can be determined by considering that the second order derivative of $g(\theta_\mu)$ equals to the QFI matrix element $\mathcal{F}_{\mu\mu}^*$ by definition, so that
\begin{equation}\label{Eq:g_theta_mu}
    g(\theta_\mu) = \left. g(\bm{\theta})\right\vert_{\theta_{\bar{\mu}}=\theta_{\bar{\mu}}^*} = \frac{1}{4}\mathcal{F}_{\mu\mu}^* [1-\cos(2(\theta_\mu-\theta_\mu^*))] = \frac{1}{2}\mathcal{F}_{\mu\mu}^* \sin^2(\theta_\mu-\theta_\mu^*).
\end{equation}
The expressions of the expectation and variance of the loss function can be obtained by directly substituting Eq.~\eqref{Eq:g_theta_mu} into Proposition~\ref{proposition:fidelity_loss_exp_var}.
\end{proof}

\section{Generalization to the local loss function}\label{appendix:energy_type}
In the main text, we focus on the fidelity loss function, also known as the ``global'' loss function~\cite{Cerezo2021}, where the ensemble construction and calculation are preformed in a clear and meaningful manner. However, there is another type of loss function called ``local'' loss function~\cite{Cerezo2021}, such as the energy expectation in the variational quantum eigensolver (VQE) which aims to prepare the ground state of a physical system. The local loss function takes the form of
\begin{equation}\label{Eq:energy_loss}
    \mathcal{L}(\bm{\theta}) = \braoprket{\psi(\bm{\theta})}{ H }{\psi(\bm{\theta})},
\end{equation}
where $H$ is the Hamiltonian of the physical system as a summation of Pauli strings. Eq.~\eqref{Eq:energy_loss} can formally reduce to the fidelity loss function by taking $H=I-\ketbrasame{\phi}$. In this section, we generalize the results of the fidelity loss function to the local loss function and show that the conclusion keeps the same, though the ensemble construction and calculation are more complicated.

The ensemble we used in the main text decomposes the unknown target state into the learnt component $\ket{\psi^*}$ and the unknown component $\ket{\psi^\perp}$, and regards $\ket{\psi^\perp}$ as a Haar random state in the orthogonal complement of $\ket{\psi^*}$. This way of thinking seems to be more subtle in the case of the local loss function since the Hamiltonian is usually already known in the form of Pauli strings and hence it is unnatural to assume an unknown Hamiltonian. However, a known Hamiltonian does not imply a known target state, i.e., the ground state of the physical system. One needs to diagonalize the Hamiltonian to find the ground state, which requires an exponential cost in classical computers. That is to say, what one really does not know is the unitary used in the diagonalization, i.e., the relation between the learnt state $\ket{\psi^*}$ and the eigen-basis of the Hamiltonian. We represent this kind of uncertainty by a unitary $V$ from the ensemble $\mathbb{V}$, where $\mathbb{V}$ comes from the ensemble $\mathbb{U}$ mentioned in Appendix~\ref{appendix:subspace_integration} by specifying $\bar{P}=\ketbrasame{\psi^*}$. Such an ensemble $\mathbb{V}$ induces an ensemble of loss functions via
\begin{equation}
    \mathcal{L}(\bm{\theta}) = \braoprket{\psi(\bm{\theta})}{ V^\dagger H V }{\psi(\bm{\theta})},
\end{equation}
similar with the loss function ensemble induced by the unknown target state in the main text. $\mathbb{V}$ can be interpreted as all of the possible diagonalizing unitaries that keeps the loss value $\mathcal{L}(\bm{\theta}^*)$ constant, denoted as $\mathcal{L}^*$. In the following, similar with those for the global loss function, we calculate the expectation and variance of the derivatives of the local loss function in Lemma~\ref{lemma:energy_loss_exp_var_grad_hessian} and bound the probability of avoiding local minima in Theorem~\ref{theorem:local_min_energy}. Hence, the results and relative discussions in the main text could generalize to the case of local loss functions.


\begin{lemma}\label{lemma:energy_loss_exp_var_grad_hessian}
The expectation and variance of the gradient $\nabla \mathcal{L}$ and Hessian matrix $H_{\mathcal{L}}$ of the local loss function $\mathcal{L}(\bm{\theta}) = \braoprket{\psi(\bm{\theta})}{H}{\psi(\bm{\theta})}$ at $\bm{\theta} = \bm{\theta}^*$ with respect to the ensemble $\mathbb{V}$ satisfy
\begin{equation}
\begin{aligned}
    &\mathbb{E}_{\mathbb{V}} \left[ \nabla \mathcal{L}^* \right] = 0, \quad \var_{\mathbb{V}}[\partial_\mu \mathcal{L}^*] = f_1(H,d) \mathcal{F}_{\mu\mu}^*, \\
    &\mathbb{E}_{\mathbb{V}}\left[ H_\mathcal{L}^*\right] = \frac{\tr H - d \mathcal{L}^*}{d-1}\mathcal{F}^*, 
    \quad \var_{\mathbb{V}}\left[\partial_\mu \partial_\nu \mathcal{L}^* \right]\leq f_2(H,d)\| \Omega_\mu \|_\infty^2 \| \Omega_\nu \|_\infty^2,
\end{aligned}
\end{equation}
where $\mathcal{F}$ denotes the QFI matrix. $f_1$ and $f_2$ are functions of the Hamiltonian $H$ and the Hilbert space dimension $d$, i.e.,
\begin{equation}\label{Eq:f1f2_H}
\begin{aligned}
    &f_1(H,d) = \frac{\avg{H^2}_*-\avg{H}^2_*}{d-1}, \quad f_2(H,d) = 32\left(\frac{\avg{ H^2}_* -\avg{H}^2_*}{d-1} + \frac{2 \|H\|_2^2}{d(d-2)}\right),
\end{aligned}
\end{equation}
where we introduce the notation $\avg{\cdot}_*=\braoprket{\psi^*}{\cdot}{\psi^*}$.
\end{lemma}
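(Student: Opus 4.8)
The plan is to run the argument of Lemma~\ref{lemma:fidelity_loss_exp_var_grad_hessian} with the subspace-Haar ensemble $\mathbb{V}$ in place of the target-state ensemble, fixing $\bar P=\ketbrasame{\psi^*}$ and $d_{\rm sub}=d-1$ throughout. Writing $\mathcal{L}(\bm\theta)=\tr\big(V^\dagger H V\,\varrho(\bm\theta)\big)$ with $\varrho=\ketbrasame{\psi(\bm\theta)}$, I would first average the observable: Lemma~\ref{lemma:subspace_haar_1design} gives $\bar H:=\mathbb{E}_{\mathbb{V}}[V^\dagger H V]=\tfrac{\tr H-\avg{H}_*}{d-1}P+\avg{H}_*\bar P$, whence $\mathbb{E}_{\mathbb{V}}[\mathcal{L}(\bm\theta)]=\mathcal{L}^*+\tfrac{\tr H-d\mathcal{L}^*}{d-1}\,g(\bm\theta)$ with $g(\bm\theta)=1-|\braket{\psi^*}{\psi(\bm\theta)}|^2$ and $\mathcal{L}^*=\avg{H}_*$ (since $V\ket{\psi^*}=\ket{\psi^*}$). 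Because $g$ is minimized at $\bm\theta^*$ with vanishing gradient and Hessian $\mathcal{F}^*$, commuting expectation and differentiation as in \eqref{Eq:fidelity_loss_expectation_grad_hessian} yields $\mathbb{E}_{\mathbb{V}}[\nabla\mathcal{L}^*]=0$ and $\mathbb{E}_{\mathbb{V}}[H_\mathcal{L}^*]=\tfrac{\tr H-d\mathcal{L}^*}{d-1}\mathcal{F}^*$.

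For the variances I would use the exact derivatives $\partial_\mu\mathcal{L}^*=\tr(V^\dagger H V\,D_\mu^*)$ and $\partial_\mu\partial_\nu\mathcal{L}^*=\tr(V^\dagger H V\,D_{\mu\nu}^*)$ and evaluate the second moments with Lemma~\ref{lemma:subspace_haar_2design_trace_multiplication} taking $A=C=H$ and $B=D\in\{D_\mu^*,D_{\mu\nu}^*\}$. The first-order case is clean because $D_\mu^*$ is purely off-diagonal relative to $\bar P$: recording $\bar P D_\mu^*\bar P=0$, $PD_\mu^*P=0$ and $\tr(PD_\mu^*)=0$, every term of the seven-term formula carrying a factor $\tr(PB)$ or $\tr(PBPD)=\|PD_\mu^*P\|_2^2$ vanishes, leaving only $\tr(PB\bar P APC\bar P D)/d_{\rm sub}$ and $\tr(PA\bar P BPD\bar P C)/d_{\rm sub}$. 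Each evaluates to $\tfrac{1}{2(d-1)}\mathcal{F}_{\mu\mu}^*\big(\avg{H^2}_*-\avg{H}_*^2\big)$, so their sum is $f_1(H,d)\mathcal{F}_{\mu\mu}^*$.

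The second-order variance is the substantial step and the main obstacle. Now $D_{\mu\nu}^*$ has a nonzero diagonal block, $\braoprket{\psi^*}{D_{\mu\nu}^*}{\psi^*}=-\mathcal{F}_{\mu\nu}^*$ by \eqref{Eq:D_munu_qfi} and hence $\tr(PD_{\mu\nu}^*)=\mathcal{F}_{\mu\nu}^*$, so all seven terms contribute. I would reorganize them around the gauge-invariant quantities $G_{\mu\nu}:=\braoprket{\psi^*}{D_{\mu\nu}^*(I-\bar P)D_{\mu\nu}^*}{\psi^*}\ge 0$, the subspace Hilbert--Schmidt norms $\tr(PHPH)=\|PHP\|_2^2$ and $\tr(PD_{\mu\nu}^*PD_{\mu\nu}^*)=\|PD_{\mu\nu}^*P\|_2^2$, and the scalar $u:=\tr(PH)$. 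The two commutator terms again produce $\tfrac{2G_{\mu\nu}}{d-1}\big(\avg{H^2}_*-\avg{H}_*^2\big)$, while the $\tfrac1{d_{\rm sub}^2-1}=\tfrac1{d(d-2)}$ block yields the positive cross term $\tfrac{1}{d(d-2)}\tr(PHPH)\,\tr(PD_{\mu\nu}^*PD_{\mu\nu}^*)$.

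The delicate part is the residue left after subtracting the squared mean $\big(\tfrac{\tr H-d\mathcal{L}^*}{d-1}\big)^2(\mathcal{F}_{\mu\nu}^*)^2$. Using $\tr(PD_{\mu\nu}^*PD_{\mu\nu}^*)=\|D_{\mu\nu}^*\|_2^2-2G_{\mu\nu}-(\mathcal{F}_{\mu\nu}^*)^2$, I expect the whole cluster of $(\mathcal{F}_{\mu\nu}^*)^2$ contributions to collapse to a coefficient proportional to $\tfrac{u^2}{d-1}-\tr(PHPH)$, which is non-positive by the Cauchy--Schwarz bound $u^2\le(d-1)\|PHP\|_2^2$ in the $(d-1)$-dimensional subspace and may simply be dropped. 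The only sign-indefinite surviving factor is $\|D_{\mu\nu}^*\|_2^2-2G_{\mu\nu}$, whose associated contribution I would bound by $\tfrac{4}{d(d-2)}\|H\|_2^2\,\|D_{\mu\nu}^*\|_\infty^2$ in either sign case (when negative the term is itself negative). Finally I would apply $G_{\mu\nu}\le\|D_{\mu\nu}^*\|_\infty^2$, $\|PHP\|_2^2\le\|H\|_2^2$, $\|PD_{\mu\nu}^*P\|_2^2\le\|D_{\mu\nu}^*\|_2^2\le 4\|D_{\mu\nu}^*\|_\infty^2$ (with $\rank D_{\mu\nu}^*\le 4$ from \eqref{Eq:D_munu}), and $\|D_{\mu\nu}^*\|_\infty\le 4\|\Omega_\mu\|_\infty\|\Omega_\nu\|_\infty$ from \eqref{Eq:D_munu_generator}, which assemble into $f_2(H,d)\|\Omega_\mu\|_\infty^2\|\Omega_\nu\|_\infty^2$. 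The hardest bookkeeping is exactly this cancellation of the $(\mathcal{F}_{\mu\nu}^*)^2$ terms against the squared mean; the formal substitution $H\mapsto I-\ketbrasame{\phi}$, which must recover the $f_1(p,d),f_2(p,d)$ of Lemma~\ref{lemma:fidelity_loss_exp_var_grad_hessian}, supplies a reassuring check on the constants.
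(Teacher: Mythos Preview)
Your plan is correct and follows essentially the same route as the paper: both compute $\mathbb{E}_{\mathbb{V}}[\mathcal{L}]$ via Lemma~\ref{lemma:subspace_haar_1design}, commute expectation with differentiation to obtain the gradient and Hessian expectations, and then evaluate the second moments using Lemma~\ref{lemma:subspace_haar_2design_trace_multiplication} with $A=C=H$ and $B=D=D_{\mu\nu}^*$, followed by norm bounds on $D_{\mu\nu}^*$.

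Two small organizational differences are worth flagging. For the first-order variance, the paper uses the explicit reduction $\partial_\mu\mathcal{L}^*=2\opnm{Re}\braoprket{\psi^*}{HV}{\partial_\mu\psi^*}$ together with Lemmas~\ref{lemma:subspace_haar_inhomogenous} and~\ref{lemma:subspace_haar_1design}, whereas you feed $D_\mu^*$ directly into Lemma~\ref{lemma:subspace_haar_2design_trace_multiplication}; your block-vanishing observations ($\bar PD_\mu^*\bar P=PD_\mu^*P=0$, $\tr(PD_\mu^*)=0$) are correct and the two surviving terms indeed sum to $f_1(H,d)\mathcal{F}_{\mu\mu}^*$. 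For the second-order bound, the paper instead writes out the full variance (their Eq.~\eqref{Eq:energy_loss_var_eq}) and discards its last two lines by checking $\tr(H^2)-2\avg{H^2}_*+\avg{H}_*^2\ge 0$ and $\tr(D_{\mu\nu}^{2})-2\avg{D_{\mu\nu}^{2}}_*\ge 0$ directly, rather than regrouping by powers of $(\mathcal{F}_{\mu\nu}^*)^2$. Your Cauchy--Schwarz step $u^2\le (d-1)\tr(PHPH)$ does yield a non-positive coefficient on the $(\mathcal{F}_{\mu\nu}^*)^2$ cluster (the coefficient works out to $\frac{1}{(d-1)(d-2)}\big(\tfrac{u^2}{d-1}-\tr(PHPH)\big)$), so this alternative grouping is valid. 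One harmless over-caution: the factor $\|D_{\mu\nu}^*\|_2^2-2G_{\mu\nu}$ is in fact always non-negative, since it equals $\tr(PD_{\mu\nu}^*PD_{\mu\nu}^*)+(\mathcal{F}_{\mu\nu}^*)^2$; the ``negative case'' you prepare for never occurs, which only simplifies your bound.
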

\begin{proof}
Using Lemma~\ref{lemma:subspace_haar_1design}, the expectation of the local loss function can be directly calculated as
\begin{equation}
    \mathbb{E}_{\mathbb{V}}\left[ \mathcal{L}(\bm{\theta}) \right] = \mathcal{L}^* + \frac{\tr H - d \mathcal{L}^*}{d-1} g(\bm{\theta}).
\end{equation}
where $g(\bm{\theta})=1-\braoprket{\psi^*}{\varrho(\bm{\theta})}{\psi^*}$ denotes the fidelity distance between the output states at $\bm{\theta}$ and $\bm{\theta}^*$. By definition, $g(\bm{\theta})$ takes the global minimum at $\bm{\theta}=\bm{\theta}^*$. Thus the commutation between the expectation and differentiation gives
\begin{equation}\label{Eq:energy_loss_exp_grad_hessian}
\begin{aligned}
    &\mathbb{E}_{\mathbb{V}} \left[ \nabla \mathcal{L}^* \right] = \left.\nabla\left(\mathbb{E}_{\mathbb{V}} \left[ \mathcal{L} \right]\right)\right\vert_{\bm{\theta}=\bm{\theta}^*} = \frac{\tr H - d \mathcal{L}^*}{d-1} \left.\nabla g(\bm{\theta})\right\vert_{\bm{\theta}=\bm{\theta}^*} = 0, \\
    &\mathbb{E}_{\mathbb{V}} \left[ H_\mathcal{L}^* \right] = \frac{\tr H - d \mathcal{L}^*}{d-1} \left. H_g(\bm{\theta})\right\vert_{\bm{\theta}=\bm{\theta}^*} = \frac{\tr H - d \mathcal{L}^*}{d-1}\mathcal{F}^*.
\end{aligned}
\end{equation}
By definition, $\left. H_g(\bm{\theta})\right\vert_{\bm{\theta}=\bm{\theta}^*}$ is actually the QFI matrix $\mathcal{F}^*$ of $\ket{\psi(\bm{\theta})}$ at $\bm{\theta}=\bm{\theta}^*$ (see Appendix~\ref{appendix:qfim}), which is always positive semidefinite. To estimate the variance, we need to calculate the expression of derivatives first due to the non-linearity of the variance. The first order derivative of the local loss function can be expressed by
\begin{equation}
    \partial_\mu \mathcal{L} = \tr[V^\dag H V D_\mu] = 2\operatorname{Re}\bra{\psi}V^\dag H V\ket{\partial_\mu \psi},
\end{equation}
where $D_\mu=\partial_\mu\varrho$ is a traceless Hermitian operator since $\tr D_\mu=\partial_\mu(\tr\varrho)=0$. By definition, we know that $\ket{\psi}^*$ is not changed by $V$, i.e., $V\ket{\psi^*} = \ket{\psi^*}$, which leads to the reduction $\partial_\mu \mathcal{L}^* = 2\opnm{Re}(\bra{\psi^*}HV\ket{\partial_\mu \psi^*})$. Hence, the variance of the first order derivative at $\bm{\theta}=\bm{\theta}^*$ is
\begin{equation}
\begin{aligned}
    \var_{\mathbb{V}}[\partial_\mu \mathcal{L}^*] &= \mathbb{E}_\mathbb{V}\left[(\partial_\mu \mathcal{L}^* - \mathbb{E}_\mathbb{V}[\partial_\mu \mathcal{L}^*])^2 \right] = \mathbb{E}_{\mathbb{V}}\left[(\partial_\mu \mathcal{L}^*)^2 \right] \\
    &= \mathbb{E}_\mathbb{V}\left[(2\opnm{Re}\bra{\psi^*}HV\ket{\partial_\mu \psi^*})^2\right] \\
    &= \mathbb{E}_\mathbb{V}\left[\bra{\psi^*}HV\ket{\partial_\mu \psi^*}^2\right] + \mathbb{E}_\mathbb{V}\left[\bra{\partial_\mu\psi^*}V^\dagger H\ket{\psi^*}^2\right] \\
    &\quad + 2\mathbb{E}_\mathbb{V}\left[\bra{\psi^*}HV\ket{\partial_\mu \psi^*}\bra{\partial_\mu\psi^*}V^\dagger H\ket{\psi^*}\right].
\end{aligned}
\end{equation}
Utilizing Lemmas~\ref{lemma:subspace_haar_inhomogenous} and \ref{lemma:subspace_haar_1design}, we obtain
\begin{equation}
\begin{aligned}
    &\mathbb{E}_\mathbb{V}\left[\braoprket{\psi^*}{HV}{\partial_\mu \psi^*}^2 \right] = \langle H \rangle_*^2 \braket{\psi^*}{\partial_\mu \psi^*}^2,\\
    &\mathbb{E}_\mathbb{V}\left[\braoprket{\partial_\mu \psi^*}{V^\dagger H}{\psi^*}^2 \right] = \langle H \rangle_*^2 \braket{\partial_\mu \psi^*}{\psi^*}^2, \\
    & \mathbb{E}_\mathbb{V}\left[ \braoprket{\psi^*}{HV}{\partial_\mu \psi^*} \braoprket{\partial_\mu \psi^*}{V^\dagger H}{\psi^*} \right] \\
    &\quad= \frac{\langle H^2\rangle_* - \langle H \rangle_*^2}{2(d-1)} \mathcal{F}_{\mu\mu}^* + \langle H \rangle_*^2 \braket{\psi^*}{\partial_\mu \psi^*} \braket{\partial_\mu \psi^*}{\psi^*}, 
\end{aligned}
\end{equation}
where we introduce the notation $\langle \cdot\rangle_* = \braoprket{\psi^*}{\cdot}{\psi^*}$ and hence $\mathcal{L}^*=\langle H\rangle_*$. The $1/2$ factor in the third line arises from the definition of the QFI matrix. Note that there are three terms above canceling each other due to the fact
\begin{equation}
\begin{aligned}
    &2\opnm{Re}\left[\braket{\partial_\mu\psi}{\psi}\right] = \braket{\partial_\mu\psi}{\psi} + \braket{\psi}{\partial_\mu\psi} = \partial_\mu(\braketsame{\psi})=0,\\
    &\braket{\psi}{\partial_\mu \psi}^2 + \braket{\partial_\mu \psi}{\psi}^2 + 2\braket{\psi}{\partial_\mu \psi} \braket{\partial_\mu \psi}{\psi} = \left(2\opnm{Re}\left[\braket{\partial_\mu\psi}{\psi}\right]\right)^2 = 0.
\end{aligned}
\end{equation}
Therefore, the variance of the first order derivative at $\bm{\theta} = \bm{\theta}^*$ equals to
\begin{equation}
\begin{aligned}
    \var_{\mathbb{V}}\left[\partial_\mu \mathcal{L}^* \right] = \mathbb{E}_\mathbb{V}\left[\left(2\opnm{Re}\bra{\psi^*}HV\ket{\partial_\mu \psi^*}\right)^2\right] = \frac{\langle H^2\rangle_* - \langle H \rangle_*^2}{d-1} \mathcal{F}_{\mu\mu}^*. 
\end{aligned}
\end{equation}
The second-order derivative can be expressed by
\begin{equation}
    \partial_\mu\partial_\nu \mathcal{L} = (H_\mathcal{L})_{\mu\nu} = \tr\left[V^\dagger HV D_{\mu\nu} \right],
\end{equation}
where $D_{\mu\nu}=\partial_\mu\partial_\nu\varrho$ is a traceless Hermitian operator since $\tr D_{\mu\nu}=\partial_\mu\partial_\nu(\tr\varrho)=0$. By direct expansion, the variance of the second order derivative at $\bm{\theta}=\bm{\theta}^*$ can be expressed as
\begin{equation}
\begin{aligned}
    & \var_{\mathbb{V}}[\partial_\mu \partial_\nu \mathcal{L}^*] = \mathbb{E}_{\mathbb{V}}\left[(\partial_\mu \partial_\nu \mathcal{L}^*)^2 \right] - (\mathbb{E}_\mathbb{V}\left[\partial_\mu \partial_\nu \mathcal{L}^* \right])^2,
\end{aligned}
\end{equation}
where the second term is already obtained in Eq.~\eqref{Eq:energy_loss_exp_grad_hessian}. Lemma~\ref{lemma:subspace_haar_2design_trace_multiplication} directly implies
\begin{equation}\label{Eq:energy_loss_exp_squared_hessian}
\begin{aligned}
    &\mathbb{E}_{\mathbb{V}}\left[(\partial_\mu \partial_\nu \mathcal{L})^2 \right] = \mathbb{E}_{\mathbb{V}}\left[\tr(V^\dagger HV D_{\mu\nu})\tr(V^\dagger HV D_{\mu\nu})\right] \\
    &= \avg{H}_*^2 \avg{D_{\mu\nu}}_*^2 + \frac{2 \avg{H}_* \avg{D_{\mu\nu}}_*}{d-1} (\tr H -\avg{H}_*) (\tr D_{\mu\nu} -\avg{D_{\mu\nu}}_*) \\
    &+ \frac{2}{d-1}(\avg{ H^2}_* -\avg{H}^2_*)(\avg{ D_{\mu\nu}^2}_* -\avg{D_{\mu\nu}}^2_*) \\
    &+ \frac{1}{d(d-2)} (\tr H - \avg{H}_*)^2 (\tr D_{\mu\nu} - \avg{D_{\mu\nu}}_*)^2 \\
    &+ \frac{1}{d(d-2)} (\tr(H^2) - 2\avg{H^2}_* + \avg{H}_*^2)(\tr(D_{\mu\nu}^2) - 2\avg{D_{\mu\nu}^2}_* + \avg{D_{\mu\nu}}_*^2) \\
    &- \frac{1}{d(d-1)(d-2)}\left[(\tr(H^2) - 2\avg{H^2}_* + \avg{H}_*^2)(\tr D_{\mu\nu} - \avg{D_{\mu\nu}}_*)^2 \right] \\
    &- \frac{1}{d(d-1)(d-2)}\left[(\tr H - \avg{H}_*)^2 (\tr(D_{\mu\nu}^2) - 2\avg{D_{\mu\nu}^2}_* + \avg{D_{\mu\nu}}_*^2) \right].
\end{aligned}
\end{equation}
According to Eq.~\eqref{Eq:energy_loss_exp_grad_hessian}, $\avg{D_{\mu\nu}^*}_*=-\mathcal{F}_{\mu\nu}^*$ in Eq.~\eqref{Eq:D_munu_qfi} and $\mathcal{L}^*=\avg{H}_*$, we have
\begin{equation}\label{Eq:energy_loss_squared_exp_hessian}
    \left(\mathbb{E}_{\mathbb{V}}[\partial_\mu \partial_\nu \mathcal{L}^* ]\right)^2 = \left(\frac{\tr H - d \mathcal{L}^*}{d-1}\mathcal{F}_{\mu\nu}^*\right)^2 = \left(\frac{\tr H - \avg{H}_*}{d-1} - \avg{H}_* \right)^2\avg{D_{\mu\nu}^*}_*^2.
\end{equation}
Combining Eqs.~\eqref{Eq:energy_loss_exp_squared_hessian} and \eqref{Eq:energy_loss_squared_exp_hessian} together with the condition $\tr D_{\mu\nu} = 0$, we obtain
\begin{equation}\label{Eq:energy_loss_var_eq}
\begin{aligned}
    & \var_{\mathbb{V}}[\partial_\mu \partial_\nu \mathcal{L}^*] = \frac{2}{d-1}(\avg{ H^2}_* -\avg{H}^2_*)(\avg{ D_{\mu\nu}^{2*}}_* -\avg{D_{\mu\nu}^*}^2_*) \\
    &+ \frac{1}{d(d-2)} (\tr(H^2) - 2\avg{H^2}_* + \avg{H}_*^2)(\tr(D_{\mu\nu}^{2*}) - 2\avg{D_{\mu\nu}^{2*}}_* + \avg{D_{\mu\nu}^*}_*^2) \\
    &- \frac{1}{d(d-1)(d-2)}\left[(\tr(H^2) - 2\avg{H^2}_* + \avg{H}_*^2) \avg{D_{\mu\nu}^*}_*^2 \right] \\
    &- \frac{1}{d(d-1)(d-2)}\left[(\tr H - \avg{H}_*)^2 (\tr(D_{\mu\nu}^{2*}) - 2\avg{D_{\mu\nu}^{2*}}_* +\frac{d-2}{d-1}\avg{D_{\mu\nu}}_*^2) ) \right].
\end{aligned}
\end{equation}
Note that we always have $d\geq2$ in qubit systems. If $\rank H\geq 2$, then it holds that
\begin{equation}
    \tr(H^2) - 2\avg{H^2}_* + \avg{H}_*^2 \geq \|H\|_2^2 - 2\|H\|^2_\infty \geq (\rank H)\|H\|_\infty^2 - 2\|H\|^2_\infty\geq 0.
\end{equation}
Otherwise if $\rank H= 1$ (the case of $\rank H= 0$ is trivial), then we assume $H=\lambda\ketbrasame{\phi}$ and it holds that
\begin{equation}
    \tr(H^2) - 2\avg{H^2}_* + \avg{H}_*^2 = \lambda^2 - 2\lambda^2 |\braket{\psi^*}{\phi}|^2 + \lambda^2 |\braket{\psi^*}{\phi}|^4 = \lambda^2 \left(1- |\braket{\psi^*}{\phi}|^2\right)^2 \geq 0.
\end{equation}
Hence we conclude that it always holds that
\begin{equation}
    \tr(H^2) - 2\avg{H^2}_* + \avg{H}_*^2 \geq 0.
\end{equation}
Similarly, because $\tr D_{\mu\nu}=0$, we know $\rank D_{\mu\nu}\geq2$ and thus
\begin{equation}
    \tr(D_{\mu\nu}^2) - 2\avg{D_{\mu\nu}^2}_* \geq (\rank D_{\mu\nu})\|D_{\mu\nu}\|_\infty^2 - 2\|D_{\mu\nu}\|^2_\infty \geq 0.
\end{equation}
Therefore, we can upper bound the variance by just discarding the last two terms in Eq.~\eqref{Eq:energy_loss_var_eq}
\begin{equation}
\begin{aligned}
    & \var_{\mathbb{V}}[\partial_\mu \partial_\nu \mathcal{L}^*] \leq \frac{2}{d-1}(\avg{ H^2}_* -\avg{H}^2_*)(\avg{ D_{\mu\nu}^{2*}}_* -\avg{D_{\mu\nu}^*}^2_*) \\
    &+ \frac{1}{d(d-2)} (\tr(H^2) - 2\avg{H^2}_* + \avg{H}_*^2)(\tr(D_{\mu\nu}^{2*}) - 2\avg{D_{\mu\nu}^{2*}}_* + \avg{D_{\mu\nu}^*}_*^2).
\end{aligned}
\end{equation}
On the other hand, we have
\begin{equation}
    \tr(H^2) - 2\avg{H^2}_* + \avg{H}_*^2 = \tr(H^2) - \avg{H^2}_* - (\avg{H^2}_* - \avg{H}_*^2) \leq \tr(H^2),
\end{equation}
since $\avg{H^2}_* - \avg{H}_*^2=\avg{H(I-\ketbrasame{\psi^*})H}_*\geq0$. A similar inequality also holds for $D_{\mu\nu}$. Thus the variance can be further bounded by
\begin{equation}
\begin{aligned}
    & \var_{\mathbb{V}}[\partial_\mu \partial_\nu \mathcal{L}^*] \leq \frac{2}{d-1}(\avg{ H^2}_* -\avg{H}^2_*) \|D_{\mu\nu}^*\|_\infty^2 + \frac{4 \|H\|_2^2 \|D_{\mu\nu}^*\|_\infty^2}{d(d-2)},
\end{aligned}
\end{equation}
where we have used the properties in Eq.~\eqref{Eq:D_munu_norm}. Using the inequality in Eq.~\eqref{Eq:D_munu_generator} associated with the gate generators, the variance of the second order derivative at $\bm{\theta}=\bm{\theta}^*$ can be ultimately upper bounded by
\begin{equation}
\begin{aligned}
    & \var_{\mathbb{V}}[\partial_\mu \partial_\nu \mathcal{L}^*] \leq f_2(H,d) \|\Omega_\mu\|_\infty^2 \|\Omega_\nu\|_\infty^2,
\end{aligned}
\end{equation}
The factor $f_2(H,d)$ reads
\begin{equation}
\begin{aligned}
    &f_2(H,d) = 32\left(\frac{\avg{ H^2}_* -\avg{H}^2_*}{d-1} + \frac{2 \|H\|_2^2}{d(d-2)}\right),
\end{aligned}
\end{equation}
which vanishes at least of order $\mathcal{O}(\opnm{poly}(N)2^{-N})$ with the qubit count $N=\log_2 d$ if $\|H\|_\infty\in\mathcal{O}(\opnm{poly}(N))$.
\end{proof}

\begin{theorem}\label{theorem:local_min_energy}
If $\mathcal{L}^* < \frac{\tr H}{d}$, the probability that $\bm{\theta}^*$ is not a local minimum of the local cost function $\mathcal{L}$ up to a fixed precision $\epsilon=(\epsilon_1,\epsilon_2)$ with respect to the ensemble $\mathbb{V}$ is upper bounded by
\begin{equation}
\begin{aligned}
    &\opnm{Pr}_{\mathbb{V}} \left[ \neg\opnm{LocalMin}(\bm{\theta}^*,\epsilon) \right] \leq \frac{2f_1(H,d)\|\bm{\omega}\|_2^2}{\epsilon_1^2} + \frac{f_2(H,d)\|\bm{\omega}\|_2^4}{\left(\frac{\tr H-d\mathcal{L}^*}{d-1}e^*+\epsilon_2\right)^2},
\end{aligned}
\end{equation}
where $e^*$ denotes the minimal eigenvalue of the QFI matrix at $\bm{\theta}=\bm{\theta}^*$. $f_1$ and $f_2$ are defined in Eq.~\eqref{Eq:f1f2_H} which vanish at least of order $\mathcal{O}(\opnm{poly}(N)2^{-N})$ with the qubit count $N=\log_2 d$ if $\|H\|_\infty\in\mathcal{O}(\opnm{poly}(N))$.
\end{theorem}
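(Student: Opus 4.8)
The plan is to mirror the two-term decomposition used in the proof of Theorem~\ref{theorem:local_min}, replacing the fidelity moment estimates of Lemma~\ref{lemma:fidelity_loss_exp_var_grad_hessian} by their local-loss counterparts in Lemma~\ref{lemma:energy_loss_exp_var_grad_hessian}. First I would invoke the definition of $\opnm{LocalMin}(\bm{\theta}^*,\epsilon)$ from Eq.~\eqref{Eq:local_min_def} together with a union bound to split $\opnm{Pr}_{\mathbb{V}}[\neg\opnm{LocalMin}(\bm{\theta}^*,\epsilon)]$ into a gradient contribution $\opnm{Pr}_{\mathbb{V}}[\bigcup_{\mu}\{|\partial_\mu\mathcal{L}^*|>\epsilon_1\}]$ and a curvature contribution $\opnm{Pr}_{\mathbb{V}}[H_\mathcal{L}^*\nsucc -\epsilon_2 I]$, exactly as in Eq.~\eqref{Eq:pr_grad_hessian}, and then bound each separately.

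For the gradient term I would apply Chebyshev's inequality (Lemma~\ref{lemma:chebyshev_inequality}) componentwise, using $\mathbb{E}_{\mathbb{V}}[\partial_\mu\mathcal{L}^*]=0$ and $\var_{\mathbb{V}}[\partial_\mu\mathcal{L}^*]=f_1(H,d)\mathcal{F}_{\mu\mu}^*$ from Lemma~\ref{lemma:energy_loss_exp_var_grad_hessian}. Summing over $\mu$ and invoking $\mathcal{F}_{\mu\mu}^*\leq 2\|\Omega_\mu\|_\infty^2$ gives $\tr\mathcal{F}^*\leq 2\|\bm{\omega}\|_2^2$, hence the first term $2f_1(H,d)\|\bm{\omega}\|_2^2/\epsilon_1^2$. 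For the curvature term I would first observe that the hypothesis $\mathcal{L}^*<\tr H/d$ makes the scalar coefficient $\tfrac{\tr H-d\mathcal{L}^*}{d-1}$ strictly positive, so that $\mathbb{E}_{\mathbb{V}}[H_\mathcal{L}^*]=\tfrac{\tr H-d\mathcal{L}^*}{d-1}\mathcal{F}^*$ is positive semidefinite with minimal eigenvalue $\tfrac{\tr H-d\mathcal{L}^*}{d-1}e^*$; this plays the exact role that $p^2>1/d$ played in the fidelity case. Perturbing $\mathbb{E}_{\mathbb{V}}[H_\mathcal{L}^*]+\epsilon_2 I$ via Lemma~\ref{lemma:perturb_positive_definite} yields the sufficient condition $\|H_\mathcal{L}^*-\mathbb{E}_{\mathbb{V}}[H_\mathcal{L}^*]\|_\infty<\tfrac{\tr H-d\mathcal{L}^*}{d-1}e^*+\epsilon_2$ for positive definiteness, whose contrapositive combined with the matrix Chebyshev inequality (Lemma~\ref{lemma:chebyshev_inequality_matrix}) controls the curvature probability in terms of the Schatten-$\infty$ norm variance $\sigma_\infty^2$.

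Finally I would bound $\sigma_\infty^2\leq\sigma_2^2=\sum_{\mu,\nu}\var_{\mathbb{V}}[\partial_\mu\partial_\nu\mathcal{L}^*]\leq f_2(H,d)\|\bm{\omega}\|_2^4$, using the second variance estimate of Lemma~\ref{lemma:energy_loss_exp_var_grad_hessian} together with $\sum_{\mu,\nu}\|\Omega_\mu\|_\infty^2\|\Omega_\nu\|_\infty^2=\|\bm{\omega}\|_2^4$, precisely as in Eq.~\eqref{Eq:bound_schatten_infty_sigma}. Substituting both bounds back into the union bound gives the claimed inequality. I expect no genuine obstacle here, since the difficult moment computations are already encapsulated in Lemma~\ref{lemma:energy_loss_exp_var_grad_hessian}; the only point deserving care is verifying that the sign condition $\mathcal{L}^*<\tr H/d$ simultaneously guarantees positive semidefiniteness of $\mathbb{E}_{\mathbb{V}}[H_\mathcal{L}^*]$ and strict positivity of the denominator $\tfrac{\tr H-d\mathcal{L}^*}{d-1}e^*+\epsilon_2$, so that the perturbation argument and the matrix Chebyshev step remain valid.
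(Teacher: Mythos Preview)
Your proposal is correct and follows exactly the paper's approach: the paper's own proof consists of the single sentence that, using Lemma~\ref{lemma:energy_loss_exp_var_grad_hessian}, the argument is identical to that of Theorem~\ref{theorem:local_min} up to the different Hessian expectation $\mathbb{E}_{\mathbb{V}}[H_\mathcal{L}^*]$ and coefficient functions $f_1,f_2$. Your write-up spells out precisely those steps, including the observation that $\mathcal{L}^*<\tr H/d$ plays the role of $p^2>1/d$ in ensuring positive semidefiniteness of the expected Hessian.
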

\begin{proof}
Utilizing Lemma~\ref{lemma:energy_loss_exp_var_grad_hessian}, the proof is exactly the same as that of Theorem~\ref{theorem:local_min} up to the different hessian expectation $\mathbb{E}_\mathbb{V}[H_{\mathcal{L}}^*]$ and coefficient functions $f_1$ and $f_2$.
\end{proof}


\end{document}